\documentclass[
paper=letter,%
numbers=noendperiod,%
captions=nooneline,%
DIV=13%
]{scrartcl}

\usepackage[T1]{fontenc}%
\usepackage{lmodern}%
\usepackage[american]{babel}%
\usepackage{microtype}%

\usepackage[
hyperref,%
table%
]{xcolor}%

\usepackage{scrlayer-scrpage}%

\usepackage[latin1,utf8]{inputenc}
\usepackage{dsfont} %
\usepackage{mathrsfs}
\usepackage{tipa}%
\usepackage[colorinlistoftodos,prependcaption,textsize=tiny]{todonotes}
\usepackage{rotating}%
\usepackage{amsmath}%
\usepackage{mathtools}%
\usepackage{amssymb}%
\usepackage{amsthm}%
\usepackage{thmtools}%
\usepackage{etoolbox}%
\usepackage{bm}%
\usepackage{bbm}%
\usepackage{enumitem}%
\usepackage{graphicx}%
\usepackage{grffile}%
\usepackage{tikz}%
\usepackage{tabularx}%
\usepackage{siunitx}%
\usepackage{booktabs}%
\usepackage{vruler}%
\usepackage{csquotes}%
\usepackage[
style=authoryear,%
dashed=false,%
hyperref=true,%
useprefix=false,%
maxnames=2,%
uniquelist=minyear,%
maxbibnames=6,%
uniquename=false,%
seconds=true,%
date=iso,%
urldate=iso%
]{biblatex}%
\AtEveryCitekey{\toggletrue{blx@useprefix}}
\DeclareAutoCiteCommand{inlinetext}{\textcite}{\textcites}
\usepackage[
hypertexnames=false,%
setpagesize=false,%
pdfborder={0 0 0},%
pdfstartview=Fit,%
bookmarksopen=true,%
bookmarksnumbered=true%
]{hyperref}%
\usepackage[hyphens]{xurl}
\hypersetup{breaklinks=true}
\setkomafont{pageheadfoot}{\normalfont\normalcolor\sffamily}%
\setkomafont{pagenumber}{\normalfont\normalcolor\sffamily}%
\automark{section}%
\setkomafont{captionlabel}{\normalfont\normalcolor\sffamily\bfseries}%

\newcommand*{\mysquare}{\rule[0.18em]{0.36em}{0.36em}}
\newcommand*{\mytriangle}{\raisebox{0.12em}{\resizebox{0.48em}{0.48em}{$\blacktriangleright$}}}
\newcommand*{\mybar}{\rule[0.32em]{0.62em}{0.08em}}
\newcommand*{\mydot}{\raisebox{0.14em}{\resizebox{0.44em}{!}{$\bullet$}}}
\setlist{%
  align=left,%
  labelindent=0mm, %
  leftmargin=!,%
  itemindent=0mm, %
  listparindent=\parindent,%
  parsep=0mm,%
  topsep=1mm,%
  itemsep=1mm%
}
\setlist[itemize,1]{label={\mysquare\ }, labelwidth=\widthof{\mysquare\ }}%
\setlist[itemize,2]{label={\mytriangle\ }, labelwidth=\widthof{\mytriangle\ }}%
\setlist[itemize,3]{label={\mybar\ }, labelwidth=\widthof{\mybar\ }}%
\setlist[itemize,4]{label={\mydot\ }, labelwidth=\widthof{\mydot\ }}%
\setlist[enumerate,1]{label=\arabic*), labelwidth=\widthof{9)}}%
\setlist[enumerate,2]{label=\arabic{enumi}.\arabic*), labelwidth=\widthof{9.9)}}%
\setlist[enumerate,3]{label=\arabic{enumi}.\arabic{enumii}.\arabic*), labelwidth=\widthof{9.9.9)}}%

\newcommand*{\abstractnoindent}{}%
\let\abstractnoindent\abstract
\renewcommand*{\abstract}{\let\quotation\quote\let\endquotation\endquote
  \abstractnoindent}
\deffootnote[1em]{1em}{1em}{\textsuperscript{\thefootnotemark}}%
\renewcommand*{\big}[1]{{\vcenter{\hbox{\scalebox{1.30}{\ensuremath#1}}}}}%

\definecolor{blue}{RGB}{58, 95, 205}%
\definecolor{red}{RGB}{205, 41, 144}%
\definecolor{orange}{RGB}{238, 118, 0}%
\definecolor{chocolate}{RGB}{205, 102, 29}%

\DeclareNameAlias{sortname}{family-given}%
\DefineBibliographyExtras{american}{\DeclareQuotePunctuation{}}%
\renewbibmacro*{volume+number+eid}{%
  \setunit*{\addcomma\space}%
  \printfield{volume}%
  \printfield{number}}
\DeclareFieldFormat*{number}{(#1)}
\DeclareFieldFormat*{title}{#1}%
\DeclareFieldFormat{doi}{%
\ifhyperref
  {\href{http://dx.doi.org/#1}{\nolinkurl{doi:#1}}}%
  {\nolinkurl{doi:#1}}}%
\renewbibmacro*{in:}{}%
\DeclareFieldFormat{isbn}{ISBN #1}%
\DeclareFieldFormat{pages}{#1}%
\DeclareFieldFormat{url}{\url{#1}}%
\DeclareFieldFormat{urldate}{\mkbibparens{#1}}%
\renewcommand*{\cite}[2][]{\textcite[#1]{#2}}

\newif\ifstarttheorem
\declaretheoremstyle[%
spaceabove=0.5em,
spacebelow=0.5em,
headfont=\sffamily\bfseries\global\starttheoremtrue,
notefont=\sffamily\bfseries,
notebraces={(}{)},
headpunct={},
bodyfont=\normalfont,
postheadspace=\newline,
]{myProofLessStyle}
\declaretheorem[style=myProofLessStyle, numberwithin=section]{definition}%

\declaretheorem[style=myProofLessStyle, sibling=definition]{corollary}
\declaretheorem[style=myProofLessStyle, sibling=definition]{remark}
\declaretheorem[style=myProofLessStyle, sibling=definition]{example}
\declaretheorem[style=myProofLessStyle, sibling=definition]{algorithm}

\newif\ifstarttheorem
\declaretheoremstyle[%
spaceabove=0.5em,
spacebelow=0.5em,
headfont=\sffamily\bfseries\global\starttheoremtrue,
notefont=\sffamily\bfseries,
notebraces={(}{)},
headpunct={},
bodyfont=\normalfont,
postheadspace=\newline%
]{myMainStyle}
\declaretheorem[style=myMainStyle, sibling=definition]{proposition}
\declaretheorem[style=myMainStyle, sibling=definition]{lemma}
\declaretheorem[style=myMainStyle, sibling=definition]{theorem}

\makeatletter
\preto\itemize{%
  \if@inlabel
    \ifstarttheorem
      \mbox{}\par\nobreak\vskip\glueexpr-\parskip-\baselineskip+0.25em\relax\hrule\@height\z@
    \fi%
  \fi%
  \global\starttheoremfalse%
  \def\tempa{proof}%
  \ifx\tempa\mycurrenvir
    \ifstarttheorem
      \mbox{}\par\nobreak\vskip\glueexpr-\parskip-\baselineskip+0.25em\relax\hrule\@height\z@
    \fi%
  \fi%
  \global\starttheoremfalse%
}
\preto\enditemize{\global\starttheoremfalse}
\makeatother

\makeatletter
\preto\enumerate{%
  \if@inlabel
    \ifstarttheorem
      \mbox{}\par\nobreak\vskip\glueexpr-\parskip-\baselineskip+0.25em\relax\hrule\@height\z@
    \fi%
  \fi%
  \global\starttheoremfalse%
  \def\tempa{proof}%
  \ifx\tempa\mycurrenvir
    \ifstarttheorem
      \mbox{}\par\nobreak\vskip\glueexpr-\parskip-\baselineskip+0.25em\relax\hrule\@height\z@
    \fi%
  \fi%
  \global\starttheoremfalse%
}
\preto\endenumerate{\global\starttheoremfalse}
\makeatother

\def\IR{\mathbb{R}}
\def\IN{\mathbb{N}}
\def\Prob{\mathbb{P}}
\def\E{\mathbb{E}}
\def\bfu{\bm{u}}
\def\bfx{\bm{x}}
\def\bfU{\bm{U}}
\def\bfX{\bm{X}}

\def\d{\,\mathrm{d}}

\newcommand{\abs}[1]{\left\vert #1 \right\vert}

\newcommand{\ceil}[1]{\left\lceil#1\right\rceil}

\def\gen{\psi}
\def\invgen{\gen^{-1}}

\def\Copula{C}

\def\an{a}
\def\bn{b}
\def\ban{\bm{\an}}
\def\bbn{\bm{\bn}}

\def\jointF{F}
\def\pMargin{F}

\def\pD{D}
\def\dD{\pD'}
\def\qD{\pD^{-1}}
\def\qDs{\pD^{*\,-1}}

\def\pF{F}

\def\EVDist{H}

\def\STDF{\ell}

\def\bfM{\bm{M}}

\def\JointDist{\pMargin}

\DeclareMathOperator{\lpois}{Pois}

\definecolor{todocolor}{RGB}{255,165,0}

\newcommand*{\U}{\operatorname{U}}

\newcommand*{\omu}[3]{\underset{#3}{\overset{#1}{#2}}}
\newcommand*{\isim}{\omu{\text{\tiny{ind.}}}{\sim}{}}

\newcommand*{\supp}{\operatorname{supp}}

\DeclareMathOperator{\RV}{RV}
\DeclareMathOperator{\MDA}{MDA}

\def\rate{r}
\DeclareMathOperator{\support}{supp}
\DeclareMathOperator{\GEV}{GEV}

\newcommand{\ind}[1]{\mathds{1}_{#1}}

\definecolor{todocolor}{RGB}{155,0,155}
\definecolor{todocolor}{RGB}{200,64,0}

\definecolor{links}{RGB}{0,0,128}

\newcommand\ct[1]{\text{\rmfamily\upshape #1}}
\def\e{\ct{e}}

\def\bfa{\bm{a}}
\def\bfb{\bm{b}}

\def\EW{\Psi}

\def\SetCopula{\mathscr{C}}
\def\SetEVTCopula{\mathscr{C}^{\text{EVC}}}

\DeclareMathOperator{\ZetaDist}{Zeta}
\DeclareMathOperator{\polylog}{Li}

\newcommand\AMDCO[1]{\operatorname{AMD}_{#1}^1}

\def\id{\mathrm{id}}

\def\GH{\mathrm{GH}}
\def\GHCopula{\Copula^{\GH}}

\def\Cl{\mathrm{Cl}}
\def\ClCopula{\Copula^{\Cl}}

\DeclareMathOperator{\vol}{\Delta}

\def\SetCMonDist{\mathrm{D}_d^{\Psi}}

\begin{document}

\thispagestyle{plain}
\begin{center}
  \sffamily
  {\bfseries\LARGE Morillas-type transformations of copulas and stable tail dependence functions\par} %
  \bigskip\smallskip {\Large Klaus Herrmann\footnote{Département de
      Mathématiques, Université de Sherbrooke, \href{mailto:klaus.herrmann@usherbrooke.ca}{\nolinkurl{klaus.herrmann@usherbrooke.ca}}.},
    Marius Hofert\footnote{Department of Statistics and Actuarial Science, The
      University of Hong Kong,
      \href{mailto:mhofert@hku.hk}{\nolinkurl{mhofert@hku.hk}}.},
    Mélina Mailhot\footnote{Department of Mathematics and Statistics, Concordia University,
      \href{mailto:melina.mailhot@concordia.ca}{\nolinkurl{melina.mailhot@concordia.ca}}.},
    Nahid Sadr\footnote{Département de Mathématiques, Université de Sherbrooke, \href{mailto:nahid.sadr@usherbrooke.ca}{\nolinkurl{nahid.sadr@usherbrooke.ca}}.}
    \par}
\end{center}
\par\smallskip
\begin{abstract}
  A stochastic representation and sampling algorithm for Morillas-type
  copula-to-copula transformations and related distortions of multivariate
  distribution functions is derived, resulting as a byproduct in a novel
  sampling scheme for Archimedean and Archimax copulas.  This closes a
  methodological gap and facilitates simulation-based applications of distorted
  copulas.  For stable tail dependence functions (stdfs), a Morillas-type
  distortion framework is introduced, where monomial distortions with exponents
  below $1$ are shown to preserve stdfs via a domain-restricted Pexider equation
  analysis. This characterization is leveraged to identify distortions
  preserving extreme value copulas, and convex combinations of distorted stdfs
  are proposed to increase flexibility in extremal dependence modeling. The
  impact of distortions on maximum domain of attraction limits is also
  analyzed. Explicit limiting EVC distortions are identified under
  non-restrictive regular variation assumptions.  Examples of absolutely
  monotone distortions allowing to fine-tune the extreme value behavior after
  distortion, but also of non-regularly varying $2$-absolutely monotone
  distortions are given.
\end{abstract}
\minisec{Keywords}
Absolute monotonicity, Archimax copula, Archimedean copula, maximum domain of attraction,
Morillas-type distortions, multivariate extreme value theory, Pexider functional equation, stable tail dependence functions, stochastic representation.
\section{Introduction}
Copula-to-copula transformations have appeared in various forms throughout the
literature, see for example \cite{Khoudraji1995}, \cite{Morillas2005}, or
\cite{liebscher2008} for early developments. Later studies have generalized or
unified these ideas; for instance, \cite{fischer2012constructing} proposed a
framework that extends both Liebscher's Archimedean family extension and
Morillas transformations. \cite{durante2009construction} introduced a more
general method for building non-exchangeable bivariate distribution functions,
which includes Khoudraji transforms as a special case.  Similarly,
\cite{kolesarova2013new} developed an approach for constructing new copulas from
existing ones through a dual representation. In a related line of research,
\cite{cuadras2009constructing} introduced copulas based on weighted geometric
means of given dependence structures, and \cite{HerrmannHofertSadr2023} proposed
index-mixed constructions, combining several copulas via random index
vectors. More recently, uniformity preserving transformations from
$[0,1]$ to $[0,1]$ such as the W-transform of \cite{hofert2025w}, with a special
case being the V-transforms of \cite{McNeil2021modelling}, are constructed to maintain marginal
uniformity while being able to alter the dependence structure.

Within this body of work, Morillas-type distortions, the focus of this paper,
provide a copula-to-copula mapping that is analytically simple and sufficiently
versatile to generate interesting dependence structures; see
\cite{durrleman2000simple,GenestRivest2001,KlementMesiarPap2005,durante2005copula,Morillas2005}
for the initial definition and properties, \cite{valdez2011distortion} for
further details, and \cite{crane2008using} for an application to risk pricing.
For a given copula $\Copula$ the Morillas-type distorted copula $\Copula_{\pD}$ takes the form
\begin{align}
  \Copula_{\pD}(u_1,\dots,u_d) := \pD\bigl(\Copula(\qD(u_1),\dots,\qD(u_d))\bigr)\label{eq_MorillasDef}
\end{align}
for a suitable non-decreasing function $\pD:[0,1]\to[0,1]$.  Well known special
cases of this construction are for example Archimedean, see \cite{McNeilNeslehova2009}, and Archimax copulas, see \cite{CharpentierFougeresGenestNeslehova2014}.

Despite the simplicity of Morillas-type transformations, a stochastic
representation or simulation algorithm for $\Copula_{\pD}$, the analysis of the
extremal behavior of $\Copula_{\pD}$, and related relevant properties of the
functions $\pD$ have remained unexplored in the literature. This gap is
particularly relevant in the context of multivariate extreme value theory, where
typically only specific dependence structures arise when considering limits of
suitably stabilized componentwise maxima.  Therefore, it is natural to examine
how this copula-to-copula transformation impacts limiting dependence structures
linked to extremes.

In extreme value theory, a classical result identifies multivariate extreme
value distributions as the only possible limits for affine stabilizations of
componentwise block maxima of an iid sequence of random vectors $(\bfX_i)$,
where $\bfX_i = (X_{i,1},\dots,X_{i,d})$ has continuous margins and unique
copula $C$; see \cite[Chapter~5]{Resnick1987}. More precisely, let
$\bfM_n = (M_{n,1},\dots,M_{n,d})$ denote the vector of block maxima, where
$M_{n,j} = \max(X_{1,j},\dots,X_{n,j})$ is the maximum over the $j$th component sample,
$j=1,\dots,d$. If, for a non-degenerate limiting distribution
$\EVDist$,
\begin{align}
  \lim_{n\to\infty} \Prob\left((\bfM_n - \bbn_n)/\ban_n\leq \bfx\right) = \EVDist(\bfx), \quad \bfx=(x_1,\dots,x_d)\in\IR^d, \label{eq_multivariateGEVLimit}
\end{align}
where $\bbn_n = (\bn_{n,1},\dots,\bn_{n,d}) \in \IR^d$ and
$\ban_n = (\an_{n,1},\dots,\an_{n,d}) \in (0,\infty)^d$ are sequences of
stabilizing vectors, then $\EVDist$ necessarily belongs to the class $\GEV_d$ of
$d$-dimensional extreme value distributions, that is $\EVDist\in\GEV_d$;
all operations in~\eqref{eq_multivariateGEVLimit} are understood to be componentwise.
Elements of $\GEV_d$ can be characterized in several ways, for example using extreme value copulas, spectral
measures, exponent measures, tail copulas, Pickands dependence functions or stable tail dependence
functions. For the purpose of this article, we focus on representations in
terms of extreme value copulas and stable tail dependence functions.

The main contributions of our article are:
\begin{enumerate}
\item We provide a stochastic representation and sampling scheme for $\Copula_\pD$ for absolutely monotone $\pD$;
\item In analogy to the Morillas transform for copulas, we propose and analyze a
  Morillas-type transform for stable tail dependence functions and identify the admissible distortions in this case; and
\item We characterize the maximum domain of attraction of the distorted
  copula $\Copula_\pD$.
\end{enumerate}
The remainder of this article is structured as follows.
Section~\ref{sec_cdf_distortions} introduces Morillas-type transformations,
addresses their main properties and establishes a stochastic representation and
sampling algorithm for $\Copula_\pD$.  Section~\ref{sec_basic_notions} details
the connection between multivariate generalized extreme value distributions,
stable tail dependence functions and extreme value copulas, and introduces a
Morillas-type distortion for stable tail dependence
functions. Section~\ref{sec_MDA_for_C_D} presents the maximum domain of
attraction for the distorted copula $\Copula_\pD$ and related results. Finally,
Section~\ref{sec_conclusion} concludes with a summary of our results.  Proofs of
all results are provided in the Supplementary Material.

\section{Distortions of joint distribution functions, copulas and stable tail dependence functions}\label{sec_cdf_distortions}
\subsection{Properties}\label{sec_properties_distortions}
A \emph{distortion function} $\pD \colon [0,1] \to [0,1]$ is a distribution function with $\pD(0)=0$ and $\pD(1)=1$.
In this section we review and consolidate results concerning distortions of a $d$-variate joint distribution function $\JointDist$ via
\begin{align}
  \bfx \mapsto \pD\circ \JointDist(\bfx), \quad \bfx\in\IR^d,\label{eq_F_D}
\end{align}
and ask when $\pD\circ \JointDist$ is again a valid joint distribution function.
Aside from a succinct presentation of earlier results, the main contribution of
this section is a stochastic representation for $\pD\circ \pF$ and its associated
copula $\Copula_{\pD}$, which seems to be missing in the literature so far.

When considering $\pD\circ\JointDist$, the distortion simultaneously impacts the
margins and the underlying dependence structure of $\JointDist$.  A direct
calculation shows that that the $j$th margin $F_j$ of $\JointDist$ is given by
$\pD\circ F_j$, $j=1,\dots,d$. As a distribution function, $\pD$ is
right-continuous and non-decreasing, and so is $\pD\circ F_j$, $j=1,\dots,d$.
However, every $d$-dimensional joint distribution function has to be
$d$-increasing, see \cite[Theorem~1.2.13, p.~7]{DuranteSempi2016}. As such,
$\pD$ critically determines whether $\pD\circ\JointDist$ is a valid joint
distribution function.

Similar to the distortion in~\eqref{eq_F_D}, \cite{durrleman2000simple}, \cite{GenestRivest2001}, \cite{KlementMesiarPap2005}, \cite{durante2005copula}, and, in full generality, \cite{Morillas2005} consider for a distortion function $\pD$
the construction~\eqref{eq_MorillasDef}, where the pre-composition with the
inverse function $\qD$ is necessary to guarantee that the margins of
$\Copula_{\pD}$ remain uniform. This construction has interesting properties,
for example for two distortions $\pD_1$ and $\pD_2$ we have
$(\Copula_{\pD_2})_{\pD_1} = \Copula_{\pD_1\circ\pD_2}$ so that distortions of
distorted copulas are distorted copulas.  Furthermore, Morillas-type transforms
also preserve exchangeability, %
if $\Copula(u_{\sigma(1)},...,u_{\sigma(d)}) = \Copula(u_1,\dots,u_d)$ for any
permutation $\sigma\colon\{1,\dots,d\}\to\{1,\dots,d\}$, then a direct
calculation shows that the same property also holds for $\Copula_{\pD}$.

Next, we consider the crucial question concerning the conditions under which the
copula $\Copula_{\pD}$ in~\eqref{eq_MorillasDef} is a copula for \emph{any}
copula $C$. For $f \colon [a,b] \to \IR$, $a<b$, $k\in\IN_0$ and $h>0$ such that
$x,x+kh\in[a,b]$, the \emph{$k$th order forward difference with step size $h$}
of $f$ is
$(\Delta_h^k f)(x) := (\Delta_h^{k-1} f)(x+h) - (\Delta_h^{k-1} f)(x) =
\sum_{j=0}^{k}\binom{k}{j}(-1)^{k-j}f(x+jh)$, where $(\Delta_h^0 f)(x) :=f(x)$.
The closely related \emph{$k$th order divided difference with step size $h$} is
$D_h^k f := h^{-k} \Delta_h^k f$ whenever $x,x+kh\in [a,b]$. Note that
$(\Delta_h^k f)(x) \geq 0 \Leftrightarrow (D_h^k f)(x) \geq 0$.  For $d\in\IN_0$,
$f\colon[a,b]\to\IR$ is \emph{$d$-absolutely monotone} if
$(\Delta_h^k f)(x) \geq 0$ for all $k \in \{0,1,\dots,d\}$, all $x\in[a,b]$ and
$h>0$ such that $x+kh\in[a,b]$. And if $f$ is $d$-absolutely monotone for all
$d\in\IN_0$, $f$ is \emph{absolutely monotone}.
Such mild restrictions on $\pD$ leave a wide range
of possible transformations $\pD$ of \emph{Morillas-type}. For example, any
$\pD(u)=\sum_{n=1}^\infty p_n u^n$, $u\in[0,1)$, for a probability mass function
$(p_n)_{n\in\IN}$, that is if $\pD$ is a probability generating function where $p_0=0$. In
Section~\ref{sec:stoch:rep} we utilize this connection for establishing
a stochastic representation and thus sampling approach for
$\Copula_{\pD}$ and $\pD\circ\JointDist$ when $\pD$ is absolutely monotone.

Let us again consider the $d$-absolutely monotone case $f:[a,b]\to\IR$. As an
example, a function is $2$-absolutely monotone if and only if it is
non-negative, non-decreasing and convex; see \cite[p.~58]{Ressel2012}. As a
convex function $f$ is then also continuous on $(a,b)$, and it can be continuously
extended to the left endpoint $a$ of its domain. The fact that any
$d$-absolutely monotone function for a $d\geq 2$ is continuous on $(a,b)$ allows
one to apply the theorems of \cite{Popoviciu1934,BoasWidder1940}; see
\cite[Theorem~18.10, p.~118 and 23 Appendix A, p.~137 ff.]{Khare2022} for a
modern treatment. As a result and as noted by \cite[p.~58]{Ressel2012}, for
$d\geq 2$, a $d$-absolutely monotone function $f$ has existing derivatives
$f', f'', \dots, f^{(d-2)}$ on $(a,b)$, all of which are continuous and
continuously extendable at $a$, non-negative, increasing and convex, and the
function $f^{(d-2)} \colon (a,b) \to \IR$ has non-decreasing left- and
right-hand side derivatives. Consequently, an absolutely monotone distortion $\pD$ has
non-negative derivatives of any non-negative integer order.

It is often easier, although less general, to verify that a non-negative function has
non-negative derivatives $f',f'',\dots,f^{(d)}$ instead of having to verify that it is
$d$-absolutely monotone by definition; see \cite[p.~58]{Ressel2012}.  Similarly,
if $f^{(k)}\geq 0$ for all $k \geq 0$, then
$f$ is absolutely monotone.  The following
Theorem~\ref{thm_distortion_master_theorem} is the basis for distortions of
copulas and joint distribution functions.
Part~\ref{thm_distortion_master_theorem_copula} was in full generality first
established by \cite[Theorem~4.7, p.~182]{Morillas2005} and \cite[Theorem~3.6, p.~99]{morillas2005b}, while bivariate
versions were also published in \cite[Theorem~1, p.~2]{durrleman2000simple},
\cite[Theorem~2.4, p.~427]{KlementMesiarPap2005} and \cite[Theorem~3.1,
p.~650]{durante2005copula}. Parts~\ref{thm_distortion_master_theorem_cdf} and
\ref{thm_distortion_master_theorem_possible_copula} are taken from
\cite[Theorem~2, p.~59; Theorem~5, p.~61]{Ressel2012}, where the bivariate case
was also already addressed in \cite[Remark~3.5, p.~652]{durante2005copula}.  As
mentioned, a $d$-absolutely monotone function is always continuous on its domain
except possibly at its right endpoint. As such the continuity of $\pD$ at $1$
plays a non-trivial role in the following theorem. For a distortion function
$\pD$, the condition of $\pD$ being a $d$-absolutely monotone surjection is
equivalent to $\pD$ being $d$-absolutely monotone and continuous at $1$.  Let
$\AMDCO{d}$ denote the set of $d$-absolutely monotone distortions continuous at
$1$, with the obvious chain of inclusions $\AMDCO{2} \supsetneq \AMDCO{3} \supsetneq \ldots \supsetneq \AMDCO{\infty}$,
where $\AMDCO{\infty}$ is the set of absolutely monotone distortions continuous
at $1$.

\begin{theorem}[Distortion of copulas and joint distribution functions; \cite{Morillas2005}, \cite{Ressel2012}]\label{thm_distortion_master_theorem}
For $\pD\in\AMDCO{d}$, $d\geq 2$, we have the following results: %
\begin{enumerate}
    \item\label{thm_distortion_master_theorem_copula} If $\Copula$ is a $d$-copula, then $\Copula_{\pD}$ defined in~\eqref{eq_MorillasDef} is a $d$-copula.
    \item\label{thm_distortion_master_theorem_cdf} If $F$ is a $d$-dimensional distribution function with margins $F_1,\dots,F_d$, then $\pD \circ F$ defined in~\eqref{eq_F_D} is a $d$-dimensional distribution function with margins $\pD\circ F_j$, $j\in\{1,\dots,d\}$.
    \item\label{thm_distortion_master_theorem_possible_copula} If $\Copula$ is a copula of the $d$-dimensional distribution function $F$, then $\Copula_{\pD}$ is a copula of $\pD\circ F$.
\end{enumerate}
\end{theorem}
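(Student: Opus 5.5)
The plan is to prove part~\ref{thm_distortion_master_theorem_cdf} first and derive the other two parts from it. The core claim is that composing a $d$-increasing, grounded function with a $d$-absolutely monotone $\pD$ keeps it $d$-increasing. I would show this by a multivariate chain-rule argument for differences. Let $\bfx\le\bfy$ span a box $B$. Since $F$ is a distribution function, the volume $\vol_B(\pD\circ F)$ can be written as an iterated difference over the coordinates. For smooth $\pD$ (all derivatives up to order $d$ exist and are non-negative), Faà di Bruno's formula, taken in its finite-difference form, expresses this volume as a non-negative combination. Each term has the form
\begin{align*}
  \int \pD^{(k)}(\xi)\,\prod_{i=1}^k \vol_{B_i}(F)\,\d\mu(\xi),
\end{align*}
where $\{B_1,\dots,B_k\}$ ranges over the partitions of the coordinate set into $k$ blocks. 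Here $\vol_{B_i}(F)$ is the difference of $F$ along the coordinates of block $B_i$ only, with the remaining coordinates held at fixed values.

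Each of these partial differences is non-negative, because lower-dimensional differences of a distribution function are non-negative. This uses that $F$ is grounded and $d$-increasing, hence increasing in every sub-collection of coordinates. The factors $\pD^{(k)}$ are non-negative by absolute monotonicity, so the volume is non-negative. To remove the smoothness assumption I would approximate. Convolving $\pD$ with a mollifier supported on $[-\varepsilon,0]$ and rescaling gives smooth approximations. Alternatively, one can work directly with divided differences $D_h^k\pD\ge 0$ in place of derivatives, using a discrete Faà di Bruno identity on the lattice of values that $F$ takes at the box corners. Passing to the limit uses that $\pD$ is continuous on $[0,1)$ and, by the assumption $\pD\in\AMDCO{d}$, also at $1$.

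The remaining properties of $\pD\circ F$ are then checked directly. Right-continuity of $\pD\circ F$ follows from right-continuity of $F$ and continuity of $\pD$. As $x_j\to-\infty$ the limit is $\pD(0)=0$, and as all $x_j\to\infty$ it is $\pD(1)=1$; the latter uses continuity at $1$. Letting all other coordinates tend to $\infty$ shows that the margins are $\pD\circ F_j$.

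For part~\ref{thm_distortion_master_theorem_copula}, I apply part~\ref{thm_distortion_master_theorem_cdf} to $F=\Copula$, which gives that $\pD\circ\Copula$ is $d$-increasing on $[0,1]^d$. Since $\pD$ is continuous and non-decreasing with $\pD(0)=0$ and $\pD(1)=1$, it is a surjection onto $[0,1]$. Hence $\qD$ is a non-decreasing map with $\pD(\qD(u))=u$. Precomposing each coordinate with a non-decreasing map preserves $d$-increasingness, because a box is mapped to a box. Groundedness follows from $\qD(0)=0$, where $\qD$ is defined as the generalized inverse. The margins are uniform since $\pD(\Copula(1,\dots,\qD(u),\dots,1))=\pD(\qD(u))=u$.

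Part~\ref{thm_distortion_master_theorem_possible_copula} follows from Sklar's theorem. Write $F=\Copula(F_1,\dots,F_d)$. Since $\pD(\qD(v))=v$ for each $v$ in the range of $\pD\circ F_j$, we get
\begin{align*}
  \Copula_\pD(\pD\circ F_1,\dots,\pD\circ F_d)=\pD\bigl(\Copula(\qD\circ\pD\circ F_1,\dots)\bigr).
\end{align*}
Here $\qD\circ\pD\circ F_j$ may differ from $F_j$ only where $\pD$ is flat. Such flat pieces occur only on an interval $[0,c]$, because a convex non-decreasing $\pD$ is strictly increasing after it leaves the value $0$. On the set where $\pD$ is flat, the composition $\pD(\Copula(\cdot))$ takes the value $0$ on both sides, because $\Copula(\dots,u_j,\dots)\le u_j\le c$ forces $\pD(\Copula)=0$. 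This yields $\pD\circ F=\Copula_\pD(\pD\circ F_1,\dots,\pD\circ F_d)$.

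The main obstacle is the first step: the discrete Faà di Bruno-type expansion, or the approximation argument justifying it for merely $d$-absolutely monotone $\pD$ that need not be $d$ times differentiable. If that proves messy, I would instead approximate $\pD$ uniformly by Bernstein polynomials. Bernstein operators preserve $k$-absolute monotonicity because their coefficients are forward differences, and polynomials with non-negative derivatives up to order $d$ are handled by the smooth argument.
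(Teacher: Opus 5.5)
Your proof is correct. It is not a reconstruction of anything in the paper, though: the paper gives no proof of this theorem. It imports part 1 from Morillas and parts 2–3 from Ressel, where these statements come as halves of characterizations, i.e., together with the converse that $d$-absolute monotonicity is also necessary.

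Your route proves sufficiency only and is self-contained:
\begin{itemize}
\item a Faà di Bruno argument for smooth distortions;
\item Bernstein approximation to reach general $\pD\in\AMDCO{d}$;
\item part 1 from part 2 by coordinatewise precomposition with $\qD$;
\item part 3 from Sklar's theorem, plus the observation that a convex $\pD$ can only be flat on $\{\pD=0\}$.
\end{itemize}
The only place the paper itself touches this is Proposition~\ref{prop:stoch:rep}. For $\pD\in\AMDCO{\infty}$ it gives a one-line probabilistic proof of part 1, by mixing $\Copula^n$ over the discrete generator. That argument has no analogue for finite $d$, which is exactly where your analytic argument is needed.

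Three points to tighten.

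First, make the Faà di Bruno step precise by Möbius inversion. Write $F$ at the corner with coordinates in $S$ raised to $\bm y$ as $\sum_{T\subseteq S}\mu_T$, where $\mu_T\ge 0$ is the $T$-face difference at base point $\bm x$. Put $\Phi(\bm s)=\pD\bigl(\sum_T\mu_T\prod_{j\in T}s_j\bigr)$ on $[0,1]^d$. Then the $(\bm x,\bm y]$-volume of $\pD\circ F$ is $\int_{[0,1]^d}\partial_1\cdots\partial_d\Phi\,\d\bm s$. The block factors there are $\sum_{T\supseteq B}\mu_T\prod_{j\in T\setminus B}s_j\ge 0$. These are convex combinations of the $B$-differences of $F$ over all positions of the remaining coordinates, not differences at one fixed point as your displayed formula suggests.

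Second, commit to the Bernstein route rather than the divided-difference variant. The latter needs mixed differences $\Delta_{h_1}\cdots\Delta_{h_k}\pD\ge 0$ with unequal steps, which the paper's equal-step definition does not supply directly. By contrast, $(B_n\pD)^{(k)}$ is a non-negative combination of the equal-step differences $(\Delta^k_{1/n}\pD)(i/n)$.

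Third, in part 1 you silently use $\qD(1)=1$. It holds because convexity gives $\pD(u)\le u$, so $\pD<1$ on $[0,1)$.
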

\begin{example}[Necessity of left-continuity at $1$, $\AMDCO{d} \supsetneq \AMDCO{d+1}$ and sufficiency of Theorem~\ref{thm_distortion_master_theorem}]
  \begin{enumerate}
  \item To see that the left-continuity of $\pD$ at $1$ is indeed necessary,
    consider, for any $c\in(0,1]$, the function $\pD_c(u) = cu$ for $u\in[0,1)$
    and $\pD_c(1)=1$. While $\pD_c$ is a distortion function for any
    $c\in(0,1]$, it is continuous at $1$ if and only if $c=1$.  The generalized
    inverse of $\pD_c$ is given by $\qD_c(y) = 1$ for $y\in[c,1]$ and by
    $\qD_c(y) = y/c$ for $y\in[0,c)$.  As such, we have
    $\pD_c \circ \qD_c (y) = 1$ for $y\in[c,1]$ and $\pD_c \circ \qD_c (y) = y$
    for $y\in[0,c)$. Hence for $c\in(0,1)$, the distortion $\Copula_{\pD_c}$
    does not have uniform margins and is therefore not a copula.
\item To see that $\AMDCO{d} \supsetneq \AMDCO{d+1}$ consider, for $d\in\IN$,
  the distortion function $\pD(u) = u^{d-1/2}$ for $u\in[0,1]$. One can directly
  check that $\pD^{(d-2)}$ is increasing and convex, and that $\pD^{(d-1)}$ is
  increasing.  Therefore $\pD$ is $d$-absolutely monotone.  However,
  $\pD^{(d-1)}$ is concave and hence $\pD$ is not $(d+1)$-absolutely monotone
  (nor absolutely monotone of any higher order).
\item Theorem~\ref{thm_distortion_master_theorem} only addresses sufficiency.
  To see this, consider the comonotone copula $M(u,v)=\min(u,v)$, $u,v\in[0,1]$,
  and, for example, $\pD(u) = \sqrt{u}$.  As a strictly concave function, $\pD$
  cannot be $2$-absolutely monotone. However, since $\pD$ is strictly increasing
  and continuous, we can interchange the pre- and post-composition with the
  minimum to see that $M_{\pD} = M$.  The specific choice of $\pD$ is of course
  not relevant, and any continuous and strictly increasing function which is not
  $2$-absolutely monotone would have been sufficient. The importance of
  Theorem~\ref{thm_distortion_master_theorem} lies in the fact that it allows
  one to transform \emph{any} copula into a valid copula, that is it answers
  which distortions act faithfully on the biggest possible domain of all
  copulas, whereas this example here only works for a specific combination of
  copula (the comonotone one) and distortion function (the square root). The
  somewhat analogous question about which largest set of distortions
  $\Theta(\Copula)$, $\AMDCO{d}\subseteq\Theta(\Copula)$, faithfully transforms
  a given \emph{fixed} $d$-copula $\Copula$ into another $d$-copula is
  investigated in \cite{GhiselliRicci2024}.
  \end{enumerate}
\end{example}

A first natural question concerns the impact of a distortion.  For rectangles,
the following result provides an answer.  While this result in principle allows
to construct bounds for sets which are well approximated by rectangles, for
example Jordan measurable sets, bounds for the total variation distance, that is
when considering general Borel-measurable sets instead of rectangles, do not
follow from the same argument.
The following Proposition~\ref{prop_distorted_measure_difference} specifically
applies to any strictly increasing $\pD,\pD'\in\AMDCO{d}$, but is formulated
under slightly weaker assumptions which are necessary for its application in the proof of
Lemma~\ref{rate_function_convergence}.
\begin{proposition}[Maximal difference of distorted rectangle volumes]\label{prop_distorted_measure_difference}
Let $\Copula$ be a $d$-copula and $\pD'$ a distortion function such that $\Copula_{\pD'}$ is a valid $d$-copula.
\begin{enumerate}
\item\label{prop_max_difference_I}
If $\pD'$ and $\pD$ are continuous and strictly increasing distortion functions such that $\Vert \pD' - \pD \Vert_\infty = \sup_{u \in [0,1]} \vert \pD'(u) - \pD(u) \vert \leq \varepsilon$, $\varepsilon > 0$, then
\begin{align*}
  \sup_{\bfu \in [0,1]^d} \abs{\Copula_{\pD'}(\bfu) - \Copula_{\pD}(\bfu)} \leq (d+1)\varepsilon,
\end{align*}
where this bound holds even if, depending on $\pD$, $\Copula_{\pD}$ as defined in \eqref{eq_MorillasDef} is \emph{not} a $d$-copula.
Specifically, for $\pD' = \id$ we have $\Vert \Copula-\Copula_{\pD}\Vert_{\infty} \leq (d+1)\varepsilon$.
\item
  The corresponding $\Copula_{\pD'}$-volume $\vol_{(\bfa, \bfb]}\Copula_{\pD'}$ and $\Copula_{\pD}$-volume $\vol_{(\bfa, \bfb]}\Copula_{\pD}$ (with $\vol_{(\bfa, \bfb]}\Copula_{\cdot} = \sum_{\bm{i}\in \{0,1\}^d}$ $(-1)^{\sum_{j=1}^d i_j} \Copula_{\cdot}(\bm{u}^{(\bm{i})})$, where $\bfu^{(\bm{i})} = (a_1^{i_1}b_1^{1-i_1},\dots,a_d^{i_d}b_d^{1-i_d})$, $\bm{i}\in\{0,1\}^d$) satisfy
\begin{align*}
  \sup_{(\bfa,\bfb]\subseteq[0,1]^d} \abs{\vol_{(\bfa,\bfb]}\Copula_{\pD'} - \vol_{(\bfa,\bfb]}\Copula_{\pD}} \leq 2^d (d+1) \varepsilon,
\end{align*}
where $\bm{0}\le\bfa \leq \bfb\le\bm{1}$ componentwise and $(\bfa, \bfb]= \prod_{i=1}^d(a_i,b_i]$.
Specifically, for $\pD' = \id$ we have
\begin{align*}
  \sup_{(\bfa,\bfb]\subseteq[0,1]^d} \abs{\vol_{(\bfa,\bfb]}\Copula - \vol_{(\bfa,\bfb]}\Copula_{\pD}} \leq 2^d (d+1) \varepsilon.
\end{align*}
\end{enumerate}
\end{proposition}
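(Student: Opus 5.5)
We bound the difference pointwise by splitting it through the intermediate quantity $\pD'\bigl(\Copula(\qD(u_1),\dots,\qD(u_d))\bigr)$. For fixed $\bfu\in[0,1]^d$ we write
\begin{align*}
\Copula_{\pD'}(\bfu)-\Copula_{\pD}(\bfu)
&= \Bigl[\pD'\bigl(\Copula(\pD'^{-1}(u_1),\dots,\pD'^{-1}(u_d))\bigr) - \pD'\bigl(\Copula(\qD(u_1),\dots,\qD(u_d))\bigr)\Bigr]\\
&\quad+ \Bigl[\pD'\bigl(\Copula(\qD(u_1),\dots,\qD(u_d))\bigr) - \pD\bigl(\Copula(\qD(u_1),\dots,\qD(u_d))\bigr)\Bigr].
\end{align*}
The second bracket is at most $\varepsilon$ in absolute value directly from $\Vert\pD'-\pD\Vert_\infty\le\varepsilon$; this step does not use that $\Copula_{\pD}$ is a copula.

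For the first bracket we avoid any modulus of continuity of $\Copula$ or $\pD'$ and work in the "$\pD'$-scale". Since $\Copula_{\pD'}$ is assumed to be a valid $d$-copula, it is $1$-Lipschitz in each argument with respect to the $\ell^1$ norm. Because $\pD$ and $\pD'$ are continuous and strictly increasing, their inverses are genuine inverses. Set $v_j:=\pD'(\qD(u_j))$, so that $\pD'^{-1}(v_j)=\qD(u_j)$. The first bracket then equals $\Copula_{\pD'}(u_1,\dots,u_d)-\Copula_{\pD'}(v_1,\dots,v_d)$. By the Lipschitz property (see \cite{DuranteSempi2016}), this is bounded by $\sum_{j=1}^d\abs{u_j-v_j}$. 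Moreover,
\begin{align*}
\abs{u_j-v_j}=\abs{\pD(\qD(u_j))-\pD'(\qD(u_j))}\le\varepsilon,
\end{align*}
since $\pD(\qD(u))=u$ for a continuous surjective strictly increasing $\pD$. Hence the first bracket is at most $d\varepsilon$, and the total is $(d+1)\varepsilon$, which proves part~\ref{prop_max_difference_I}. The case $\pD'=\id$ is immediate because $\Copula_{\id}=\Copula$.

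For part 2, the volume $\vol_{(\bfa,\bfb]}$ is a signed sum over the $2^d$ vertices $\bfu^{(\bm{i})}$ with coefficients $\pm1$. By linearity, the difference of the volumes is the corresponding signed sum of the pointwise differences $\Copula_{\pD'}(\bfu^{(\bm{i})})-\Copula_{\pD}(\bfu^{(\bm{i})})$. The triangle inequality together with part~\ref{prop_max_difference_I} at each vertex gives the bound $2^d(d+1)\varepsilon$ uniformly over all rectangles. Again, no property of $\Copula_{\pD}$ being a copula is needed.

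The only delicate step is the reparametrisation $v_j=\pD'(\qD(u_j))$. It requires continuity and strict monotonicity so that $\pD\circ\qD=\id$ and $\pD'^{-1}\circ\pD'=\id$ on $[0,1]$; these are exactly the stated hypotheses. The Lipschitz argument is applied only to $\Copula_{\pD'}$, which is why validity is assumed for $\pD'$ alone.
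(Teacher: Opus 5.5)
Your proposal is correct and follows essentially the same route as the paper. The paper also introduces $\tilde{u}_i=\pD'(\qD(u_i))$, bounds $\abs{\Copula_{\pD'}(\bfu)-\Copula_{\pD'}(\tilde{\bfu})}$ by $d\varepsilon$ via Lipschitz continuity of the copula $\Copula_{\pD'}$ and the remaining term by $\varepsilon$, and obtains part 2 by the triangle inequality over the $2^d$ vertices.
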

Given that copulas are distribution functions, the $\Copula$-volume of a copula $\Copula$ is in fact equal to the associated copula measure of the rectangle under consideration; in Definition~\ref{def_stdf} we will later also see volumes of other functions than copulas.

As a direct consequence of Proposition~\ref{prop_distorted_measure_difference}, if a sequence $(\pD_n)$ of distortion functions converges uniformly to a distortion $\pD$, then the respective sequence of distorted copulas
converges uniformly.
\begin{corollary}[Uniform convergence of $\Copula_{\pD_n}$ to $\Copula_{\pD}$]\label{cor_convergence_C_Dn_to_C_D}
  Let $\Copula$ be a $d$-copula and $(\pD_n)$ a sequence of strictly increasing
  and continuous distortion functions.  If $\pD_n$ converges pointwise to a
  strictly increasing and continuous distortion function $\pD$, and
  $\Copula_{\pD}$ is a $d$-copula, then $(\Copula_{\pD_n})$ converges uniformly
  to $\Copula_{\pD}$, that is
  $\lim_{n\to\infty}\Vert \Copula_{\pD_n}-\Copula_{\pD} \Vert_{\infty} = 0$.
  This convergence holds even if $\Copula_{\pD_n}$ fails to be a $d$-copula for
  some or all $n\in\IN$.
\end{corollary}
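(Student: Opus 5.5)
The plan is to reduce the claim to Proposition~\ref{prop_distorted_measure_difference}\ref{prop_max_difference_I}. That bound only needs $\Vert \pD_n-\pD\Vert_\infty\to0$, so the real work is to upgrade pointwise convergence to uniform convergence. Once $\Vert \pD_n-\pD\Vert_\infty\le\varepsilon_n$ with $\varepsilon_n\to0$, I apply part~\ref{prop_max_difference_I} with $\pD'=\pD$ and $\pD_n$ in place of $\pD$. The roles are chosen so that the copula assumption falls on $\pD'=\pD$, which is exactly our hypothesis that $\Copula_{\pD}$ is a $d$-copula. Both functions are continuous and strictly increasing distortions, so the proposition gives
\begin{align*}
  \sup_{\bfu\in[0,1]^d}\abs{\Copula_{\pD}(\bfu)-\Copula_{\pD_n}(\bfu)}\le (d+1)\varepsilon_n\to0 .
\end{align*}
Part~\ref{prop_max_difference_I} explicitly allows $\Copula_{\pD_n}$ not to be a copula. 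That covers the last sentence of the corollary.

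For the uniform convergence I use a Pólya/Dini-type argument for monotone functions with a continuous limit on a compact interval. The $\pD_n$ are non-decreasing, $\pD$ is continuous on $[0,1]$ and hence uniformly continuous, and all of them satisfy $\pD_n(0)=\pD(0)=0$ and $\pD_n(1)=\pD(1)=1$. Given $\varepsilon>0$, I pick a grid $0=t_0<t_1<\dots<t_m=1$ with $\pD(t_{k})-\pD(t_{k-1})<\varepsilon$ for all $k$. By pointwise convergence at these finitely many points, there is an $N$ with $\abs{\pD_n(t_k)-\pD(t_k)}<\varepsilon$ for all $k$ and all $n\ge N$.

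Now take $u\in[t_{k-1},t_k]$. Monotonicity of $\pD_n$ and of $\pD$ gives
\[
\pD_n(u)-\pD(u)\le \pD_n(t_k)-\pD(t_{k-1})<2\varepsilon .
\]
Symmetrically, $\pD(u)-\pD_n(u)\le\pD(t_k)-\pD_n(t_{k-1})<2\varepsilon$. Hence $\Vert\pD_n-\pD\Vert_\infty\le2\varepsilon$ for $n\ge N$.

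The one genuinely necessary step is this monotone-plus-continuous-limit argument, and it is short. The rest is bookkeeping: the hypotheses of Proposition~\ref{prop_distorted_measure_difference} must be met with the correct assignment of $\pD'$, the distortion whose copula must be valid. If one also wanted volumes, part~2 of the proposition would give the analogous uniform convergence of rectangle volumes with the constant $2^d(d+1)$.
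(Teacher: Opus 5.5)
Your proposal is correct and follows the paper's proof. The paper likewise upgrades pointwise to uniform convergence of $\pD_n$, but cites the standard fact for monotone functions with a continuous limit (Resnick, p.~1) where you prove it with your P\'olya-type grid argument. It then applies Proposition~\ref{prop_distorted_measure_difference}\ref{prop_max_difference_I} with the same role assignment $\pD'=\pD$, so the copula hypothesis falls on the limit distortion.
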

An example that highlights the importance of the case in which
$\Copula_{\pD_n}$ is not a copula in
Proposition~\ref{prop_distorted_measure_difference}
and Corollary~\ref{cor_convergence_C_Dn_to_C_D}
is, for example, the sequence
of distortions $\pD_n(u) = u^{a_n}$, where $(a_n)$ is a non-negative sequence of
real numbers such that $a_n\to 1$.  For an arbitrary $d$-copula $\Copula$ there
is in general no guarantee that $\Copula_{\pD_n}$ is itself a $d$-copula since
$\pD_n$ may fail to be $d$-absolutely monotone depending on the sequence
$(a_n)$.  However, we still have the (intuitively expected) uniform convergence
$\Copula_{\pD_n} \to \Copula$ via Corollary~\ref{cor_convergence_C_Dn_to_C_D}.
The application of Proposition~\ref{prop_distorted_measure_difference} in the proof of
Lemma~\ref{rate_function_convergence} later follows a similar logic.

In light of Theorem~\ref{thm_distortion_master_theorem}, the subclass
$\AMDCO{\infty}$ of absolutely monotone distortions is of special interest since
they can be used to distort distribution functions and copulas in any
dimension. Additionally, by \cite[Theorem~2, p.~223]{feller1971}, see also
\cite[Lemma~2.1.6, p.~51]{Hofert2010},
absolutely monotone functions $f$ on $[0,1)$ have a power series representation
$f(u) = \sum_{n=0}^{\infty} p_n u^n$ for $(p_n)_{n\in\IN_0}\ge 0$, naturally
connecting those with $f(1)=1$ to probability generating functions; this is
reminiscent of the connection between completely monotone functions and
Laplace--Stieltjes transforms by Bernstein's theorem, see \cite[Theorem~1,
p.~439]{feller1971}. The following result revisits \cite[Theorem~2,
p.~223]{feller1971} in the context of distortion functions, which is warranted
since, as stated in \cite[p.~223]{feller1971}, the function $f(x) = (1-x)^{-1}$
has a power series representation with non-negative coefficients, yet fails to
be bounded at $1$ and is thus not a distortion function. Similarly, one can have
a power series $f$ that is continuous at $1$ but with a value $f(1)\neq 1$,
again preventing $f$ from being a distortion function. The following result also
directly implies that absolutely monotone distortions that are continuous at $1$
are strictly increasing and thus bijective.
\begin{proposition}[{Probability generating function representation for $\AMDCO{\infty}$}]\label{prop_pgf_representation}
  $\pD\in\AMDCO{\infty}$ if and only if there exists a discrete random
  variable $N$ on $\IN$ such that $\pD(u) = \sum_{n=1}^{\infty} \Prob(N=n)u^n$, $u\in[0,1]$.
  This random variable $N$ is called the \emph{discrete generator} of $\pD$.
\end{proposition}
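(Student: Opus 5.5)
We prove the two directions separately. The heavy lifting is done by the power series representation of absolutely monotone functions in \cite[Theorem~2, p.~223]{feller1971}. The remaining work is to verify the boundary conditions at $0$ and $1$ that make $\pD$ a distortion continuous at $1$.

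For ``$\Leftarrow$'', suppose $\pD(u)=\sum_{n\ge1}p_nu^n$ with $p_n=\Prob(N=n)$. Then $\pD(0)=0$ and $\pD(1)=\sum_n p_n=1$. Since the power series has radius of convergence at least $1$, it may be differentiated termwise on $[0,1)$. This gives $\pD^{(k)}(u)=\sum_{n\ge k}n(n-1)\cdots(n-k+1)p_nu^{n-k}\ge0$ for all $k$, so $\pD$ is absolutely monotone on $[0,1)$. To get the difference inequalities up to the endpoint $1$, I use that each forward difference $(\Delta_h^k u^n)(x)$ is non-negative on $[0,1]$; this is a mean-value/iterated-integral argument, since $(\Delta_h^k f)(x)=\int_{[0,h]^k} f^{(k)}(x+t_1+\dots+t_k)\,\d t$. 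Because all sums converge absolutely (bounded by $\sum_n p_n<\infty$), $\Delta_h^k$ passes through the series, and hence $(\Delta_h^k\pD)(x)=\sum_n p_n(\Delta_h^ku^n)(x)\ge0$. Continuity at $1$ follows from Abel's theorem, or directly from dominated convergence since $|p_nu^n|\le p_n$. Monotonicity and $\pD\colon[0,1]\to[0,1]$ are then clear, so $\pD$ is a distortion function in $\AMDCO{\infty}$.

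For ``$\Rightarrow$'', let $\pD\in\AMDCO{\infty}$. Its restriction to $[0,1)$ is absolutely monotone. By \cite[Theorem~2, p.~223]{feller1971} we therefore have $\pD(u)=\sum_{n\ge0}p_nu^n$ on $[0,1)$ with all $p_n\ge0$. The key steps are then the following.
\begin{enumerate}
\item $p_0=\pD(0)=0$.
\item By monotone convergence, $\lim_{u\uparrow1}\pD(u)=\sum_n p_n\in[0,\infty]$.
\item Continuity of $\pD$ at $1$ together with $\pD(1)=1$ forces $\sum_n p_n=1$.
\item Consequently the representation also holds at $u=1$, and $p_n=\Prob(N=n)$ defines a random variable on $\IN$.
\end{enumerate}
This is exactly where the counterexamples mentioned before the proposition are excluded: $(1-x)^{-1}$ has infinite limit, and a jump at $1$ gives $\sum p_n\neq1$.

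I expect the main obstacle to be the careful matching of definitions in the ``$\Rightarrow$'' direction. The paper defines absolute monotonicity via forward differences on the closed interval $[0,1]$, and one must check that this implies the hypotheses of Feller's theorem (absolute monotonicity on $[0,1)$ in the sense used there). I would handle this via the remarks following \cite{Ressel2012}: $d$-absolute monotonicity for every $d$ yields existing, non-negative derivatives of all orders on $(0,1)$, continuously extendable to $0$. If needed, I would instead restrict to $[0,1-\delta]$, apply Feller's theorem there, and use uniqueness of the power series coefficients to conclude that the coefficients are the same for every $\delta$.

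Strict increase, and hence bijectivity, is a by-product of the representation: since some $p_n>0$ with $n\ge1$, each such term $u^n$ is strictly increasing, and so is $\pD$.
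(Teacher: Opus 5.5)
Your proof is correct and follows the paper's route: both directions use Feller's power-series characterization of absolutely monotone functions on $[0,1)$, with $p_0=\pD(0)=0$, with $\sum_n p_n=1$ forced by continuity at $1$ (your monotone-convergence step makes explicit what the paper leaves implicit), and with Abel's theorem (or dominated convergence) giving continuity at $1$ for the converse. You go slightly beyond the paper in two places: you check the forward-difference inequalities term by term up to the closed endpoint $1$, and you match the paper's difference-based definition to Feller's derivative-based one via the derivative remarks preceding Theorem~\ref{thm_distortion_master_theorem}. The paper glosses over both details.
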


The following corollary allows one to identify when the density of an absolutely monotone distortion is bounded.
\begin{corollary}[Bounded densities of absolutely monotone distortions]\label{cor_bounded_density}
  Let $\pD$ be a distortion function such that
  $\pD(u) = \sum_{n=1}^{\infty} \Prob(N=n)u^n$ for all $u\in[0,1]$, where $N$ is
  a discrete random variable on $\IN$.  Then the density
  $\dD = \frac{\d}{\d u}\pD$ is bounded on $[0,1]$ if and only if
  $\E[N]<\infty$.
\end{corollary}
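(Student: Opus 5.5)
The plan is to differentiate the power series termwise and use monotonicity of the derivative. By Proposition~\ref{prop_pgf_representation}, on $[0,1)$ the series $\pD(u)=\sum_{n\ge1}\Prob(N=n)u^n$ has radius of convergence at least $1$. Termwise differentiation is therefore justified on $[0,1)$ and gives
\begin{align*}
\dD(u)=\sum_{n=1}^\infty n\Prob(N=n)u^{n-1}, \quad u\in[0,1).
\end{align*}
Every term is non-negative and non-decreasing in $u$, so $\dD$ is non-negative and non-decreasing on $[0,1)$. Hence boundedness of $\dD$ on $[0,1)$ is equivalent to finiteness of $\lim_{u\uparrow1}\dD(u)=\sup_{u\in[0,1)}\dD(u)$. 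The value at the single point $u=1$ is handled by defining $\dD(1)$ as the left derivative (or as this limit); a single point cannot affect boundedness in any case.

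By monotone convergence for series with non-negative terms (Abel's theorem for non-negative coefficients),
\begin{align*}
\lim_{u\uparrow1}\sum_{n\ge1} n\Prob(N=n)u^{n-1}=\sum_{n\ge1}n\Prob(N=n)=\E[N],
\end{align*}
including the case where both sides equal $+\infty$. This gives both directions at once:
\begin{itemize}
\item If $\E[N]<\infty$, then $0\le\dD(u)\le\E[N]$ for all $u\in[0,1)$. Moreover, the series for $\dD$ then converges uniformly on $[0,1]$ by the Weierstrass M-test, so $\pD$ is differentiable at $1$ from the left with $\dD(1)=\E[N]$.
\item If $\E[N]=\infty$, then $\dD(u)\to\infty$ as $u\uparrow1$, so $\dD$ is unbounded.
\end{itemize}

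The only delicate step is the endpoint $u=1$: the power series need not converge beyond $1$, and the meaning of $\dD(1)$ must be fixed. I would handle this with the mean value theorem. For $u<1$,
\begin{align*}
(\pD(1)-\pD(u))/(1-u)=\dD(\xi) \quad \text{for some } \xi\in(u,1),
\end{align*}
using continuity of $\pD$ at $1$, which holds because $\pD\in\AMDCO{\infty}$. As $u\uparrow1$ this difference quotient tends to $\lim_{\xi\uparrow1}\dD(\xi)=\E[N]$ by the monotonicity established above. So the left derivative at $1$ exists in $[0,\infty]$ and equals $\E[N]$, and the equivalence holds on the closed interval $[0,1]$.
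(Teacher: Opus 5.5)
Your proof is correct and follows essentially the same route as the paper: differentiate the series termwise and use an Abel-type limit to get $\lim_{u\nearrow 1}\dD(u)=\E[N]$, concluding sufficiency from boundedness of $\dD$ on $[0,1]$ (the paper via continuity on a compact interval) and necessity from $\dD(1)=\E[N]$. Your observation that $\dD$ is non-decreasing (so that $\sup_{u\in[0,1)}\dD(u)=\E[N]$ in $[0,\infty]$ by monotone convergence), together with the mean value theorem argument at $u=1$, is a useful refinement, because it justifies the identification $\dD(1)=\E[N]$ (possibly $\infty$) that the paper's necessity step only asserts.
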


\subsection{Stochastic representation}\label{sec:stoch:rep}
\cite{Morillas2005} did not provide a stochastic representation for
$\Copula_{\pD}$ in~\eqref{eq_MorillasDef}.  The same is true for the bivariate
investigations in
\cite{durrleman2000simple,GenestRivest2001,KlementMesiarPap2005,durante2005copula}.
This section provides such a representation and related sampling algorithm for
$\pD\in\AMDCO{\infty}$, where the utilization of the probability generating function is reminiscent of the utilization of the Laplace transform in the stochastic representation and related sampling algorithm of \cite{marshallolkin1988} for Archimedean
copulas.
\begin{proposition}[Stochastic representation for Morillas transformed copulas $\Copula_{\pD}$ with $\pD\in\AMDCO{\infty}$]\label{prop:stoch:rep}
  Let $\pD\in\AMDCO{\infty}$ with representation
  $\pD(u)=\sum_{n=1}^\infty p_n u^n$, $u\in[0,1]$, where $p_n\ge 0$,
  $\sum_{n=1}^\infty p_n=1$. For $d\ge 2$, let $\Copula$ be a $d$-copula, and
  let $N$ be a discrete random variable with probability mass function
  $\Prob(N=n)=p_n$, $n\in\IN$.  If $\bm{W}_N=(W_{N,1},\dots,W_{N,d})$ is a $d$-dimensional
  random vector such that $\bm{W}_N\mid N=n\sim\Copula^n$, $n\in\IN$, we have
  \begin{align}
    \bm{U}_{\pD}:=(\pD(W_{N,1}),\dots,\pD(W_{N,d}))\sim \Copula_{\pD}.\label{eq_MorillasDef_stochrep}
  \end{align}
  One possible stochastic representation for $\bm{W}_N\mid N=n$ is
  \begin{align}
    \bm{W}_n=(\max_{1\le i\le n}\{U_{i,1}\},\dots,\max_{1\le i\le n}\{U_{i,d}\}),\label{eq:sample:bottleneck} %
  \end{align}
  where $\bm{U}_1,\dots,\bm{U}_n\isim\Copula$ for $\bfU_i = (U_{i,1},\dots,U_{i,d})$, $i\in\{1,\dots,n\}$.
\end{proposition}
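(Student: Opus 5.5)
The plan is to compute the joint distribution function of $\bm{U}_{\pD}$ directly by conditioning on $N$, and then to identify it with $\Copula_{\pD}$ in~\eqref{eq_MorillasDef}. Two facts are needed first.

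First, $C^n$ must be a $d$-copula for every $n\in\IN$, so that the conditional law of $\bm{W}_N$ is well defined. The stated representation~\eqref{eq:sample:bottleneck} settles this constructively. If $\bm{U}_1,\dots,\bm{U}_n\isim\Copula$, then by independence
\begin{align*}
  \Prob(\bm{W}_n\le\bfu)=\Prob(\bfU_1\le\bfu,\dots,\bfU_n\le\bfu)=\Copula(\bfu)^n .
\end{align*}
Each margin is $u_j^n$, so $C^n$ is a genuine $d$-dimensional distribution function on $[0,1]^d$. This also proves the second claim of the proposition.

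Second, by Proposition~\ref{prop_pgf_representation}, $\pD$ is strictly increasing and continuous on $[0,1]$, hence a bijection. Therefore $\qD$ is its true inverse, and $\{\pD(w)\le u\}=\{w\le\qD(u)\}$ for all $w,u\in[0,1]$.

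Next I condition on $N$. For $\bfu\in[0,1]^d$, the law of total probability gives
\begin{align*}
  \Prob(\bm{U}_{\pD}\le\bfu)&=\sum_{n=1}^\infty p_n\,\Prob\bigl(W_{n,1}\le\qD(u_1),\dots,W_{n,d}\le\qD(u_d)\bigr)\\
  &=\sum_{n=1}^\infty p_n\,\Copula\bigl(\qD(u_1),\dots,\qD(u_d)\bigr)^n=\pD\bigl(\Copula(\qD(u_1),\dots,\qD(u_d))\bigr),
\end{align*}
which is $\Copula_{\pD}(\bfu)$. The power series is evaluated at a point of $[0,1]$, where the representation of Proposition~\ref{prop_pgf_representation} holds, including the endpoint $1$. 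As a consistency check, setting all but one $u_j$ equal to $1$ gives $\pD(\qD(u_j))=u_j$, so the margins are uniform, in line with Theorem~\ref{thm_distortion_master_theorem}.

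The only delicate point is measurability and well-definedness of the mixture $\bm{W}_N$. I would handle it by the explicit construction: take $N$ independent of an iid sequence $(\bfU_i)_{i\in\IN}$ with law $\Copula$, and set $\bm{W}_N=(\max_{i\le N}U_{i,1},\dots,\max_{i\le N}U_{i,d})$. This is a measurable function of countably many random elements, and its conditional law given $N=n$ is $\Copula^n$ by independence. I expect no further obstacle beyond the inverse identity, which relies on the strict monotonicity that Proposition~\ref{prop_pgf_representation} provides.
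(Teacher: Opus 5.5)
Your proof is correct and is essentially the paper's argument: you use Proposition~\ref{prop_pgf_representation} to make $\pD$ a strictly increasing bijection so that $\{\pD(w)\le u\}=\{w\le\qD(u)\}$, condition on $N$ to get $\sum_{n}p_n\Copula(\qD(u_1),\dots,\qD(u_d))^n=\Copula_{\pD}(\bfu)$, and check the max-representation by independence. One small wording slip: $\Copula^n$ is not a $d$-copula, since its margins are $u_j^n$; it is only a $d$-dimensional distribution function on $[0,1]^d$, which is all that is needed and is what you actually show.
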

Proposition~\ref{prop:stoch:rep} immediately suggests the following algorithm to
draw a sample from $\Copula_{\pD}$.  To utilize this sampling scheme, it needs
to be possible (i) to sample from the discrete generator $N$ associated to
$\pD$, and (ii), at least, to sample from the initial copula $\Copula$.
\begin{algorithm}[Sampling Morillas transformed copulas $\Copula_{\pD}$ for $\pD\in\AMDCO{\infty}$]\label{algo_sampling_Morillas}
  With the setup as in Proposition~\ref{prop:stoch:rep}, to
  sample~\eqref{eq_MorillasDef_stochrep}, do:
  \begin{enumerate}
  \item Sample $N$ with probability mass function $\Prob(N=n)=p_n$, $n\in\IN$.
  \item Sample $\bm{W}_N=(W_{N,1},\dots,W_{N,d})\sim\Copula^N$, %
    for example using the stochastic representation in \eqref{eq:sample:bottleneck}.
  \item Return $\bm{U}_{\pD}=(\pD(W_{N,1}),\dots,\pD(W_{N,d}))\sim \Copula_{\pD}$.
  \end{enumerate}
\end{algorithm}
If $\pD\in\AMDCO{\infty}$, Algorithm~\ref{algo_sampling_Morillas} can also be used for sampling from
$\pD\circ\jointF$ for any $d$-variate joint distribution function $\jointF$ with
margins $F_1,\dots,F_d$ and copula $\Copula$. According to
Theorem~\ref{thm_distortion_master_theorem}~\ref{thm_distortion_master_theorem_cdf} and
\ref{thm_distortion_master_theorem_possible_copula}, $\pD\circ\jointF$ has
margins $\pD\circ F_j$, $j\in\{1,\dots,d\}$, and (possibly non-unique) copula
$\Copula_{\pD}$. Furthermore, since $\pD$ is bijective by assumption, see
Proposition~\ref{prop_pgf_representation}, a direct calcuation shows that the
$j$th marginal quantile function is $F_j^{-1}\circ\qD$, $j\in\{1,\dots,d\}$. As
such, once a sample $\bm{U}_{\pD}\sim\Copula_{\pD}$ is drawn according to
Proposition~\ref{prop:stoch:rep}, one knows that
$\bigl(F_1^{-1}(\qD(U_{\pD,1})),\dots,F_d^{-1}(\qD(U_{\pD,d}))\bigr)\sim \pD\circ\jointF$.

The advantage of the stochastic representation of
Proposition~\ref{prop:stoch:rep} and Algorithm~\ref{algo_sampling_Morillas} lies
in their generality in that they apply to any copula $\Copula$ that can itself
be sampled.  If the discrete generator $N$ associated to $\pD$ puts mass
on large integers, for example if $\E[N]=\infty$, then this generality can come
at the price of an increased run time. However, not necessarily. If we can
directly, that is without the maxima as in~\eqref{eq:sample:bottleneck}, sample
$\bm{W}_N$ from $\Copula^N$, then the stochastic representation remains
efficient for sampling.
Distortions of the independence copula $\Pi$ provide a prime example of this circumstance in which case we have $\Pi^N(\bfu) = \Pi(\bfu^N)$
so that sampling $\bm{W}_N$ simply results in returning
$(U_1^{1/N},\dots,U_d^{1/N})$, where $U_1,\dots,U_d\isim\U(0,1)$.  A similar
argument also applies to extreme value copulas in place of $\Pi$, see
Definition~\ref{def_extreme_value_copula} later. Based on this observation we
provide a novel efficient sampling scheme for a subset of Archimedean and
Archimax copulas in Remark~\ref{remark_Archimedean_sampling}.  While this
constitutes a large class of copulas to efficiently apply
Algorithm~\ref{algo_sampling_Morillas} to, the following example
provides a copula outside this class where efficient sampling is still possible.
\begin{example}[Efficient sampling of distorted Clayton copulas]\label{example_sampling_distorted_Clayton}
  An example of a non-extreme value copula where efficient sampling according to
  Algorithm~\ref{algo_sampling_Morillas} is possible is given by the Clayton
  copula $\ClCopula_{\theta}$ with parameter $\theta > 0$; see
  \cite[Section~4.2, p.~114~ff.]{Nelsen2006}. Here one can directly calculate
  that $(\ClCopula_{\theta}(\bfu))^n = \ClCopula_{\theta/n}(\bfu^n)$.  As such,
  one can use the efficient Marshall--Olkin sampling algorithm for Clayton
  copulas to directly sample $\bm{W}_N \sim (\ClCopula_{\theta})^N$ without
  resorting to the maxima-based alternative in~\eqref{eq:sample:bottleneck}.  We
  employ this strategy to obtain Figure~\ref{fig_Clayton_zeta}, where we show
  samples of size $n=10\,000$ from an undistorted and from a distorted Clayton
  copula with parameter $\theta=\pi$. Here we consider the absolutely monotone
  zeta distortion $\pD_s$ with parameter $s=1.5$ discussed in detail in
  Example~\ref{example_zeta_distortion} later.  Most
  notably, the figure shows that the upper tail independence of the Clayton
  copula is transformed into upper tail dependence by the distortion.  This
  behavior is also theoretically investigated and confirmed in
  Section~\ref{sec_MDA_for_C_D}, specifically Corollary~\ref{cor_Pi_to_Gumbel}
  and the discussion thereafter, where we find that, for the chosen parameter
  values, $(\ClCopula_{\theta})_{\pD_s}$ is in the maximum domain of attraction
  of a Gumbel--Hougaard copula with a parameter value of $2$.
  \begin{figure}
    \begin{center}
      \includegraphics[width=0.48\textwidth]{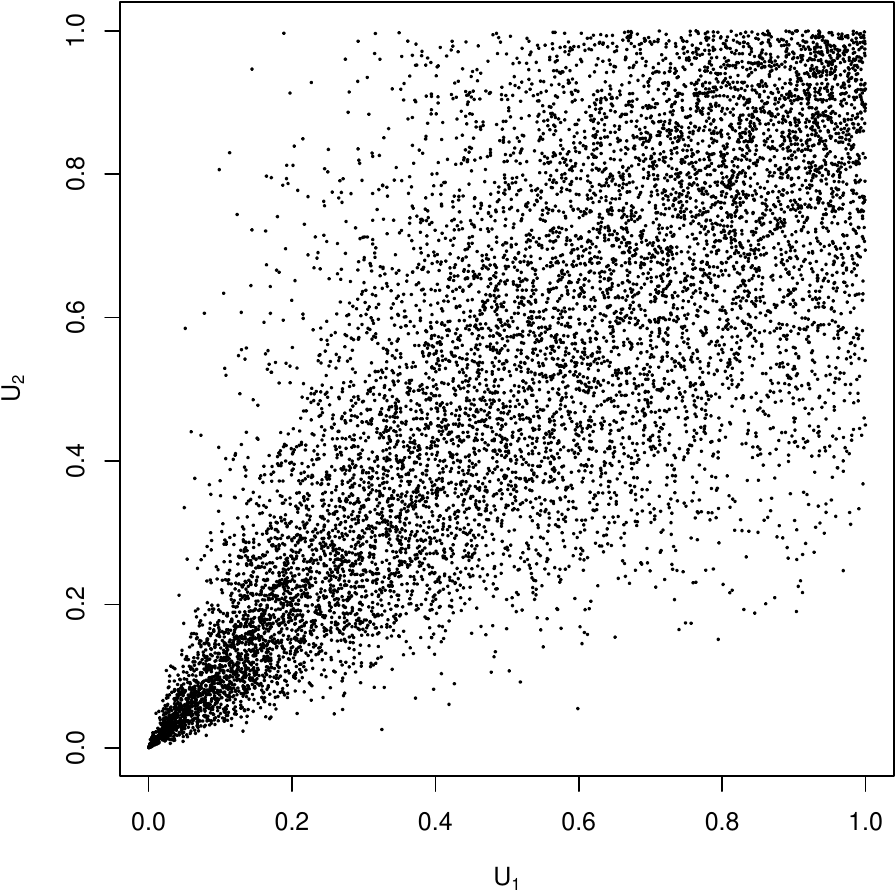}\hfill
      \includegraphics[width=0.48\textwidth]{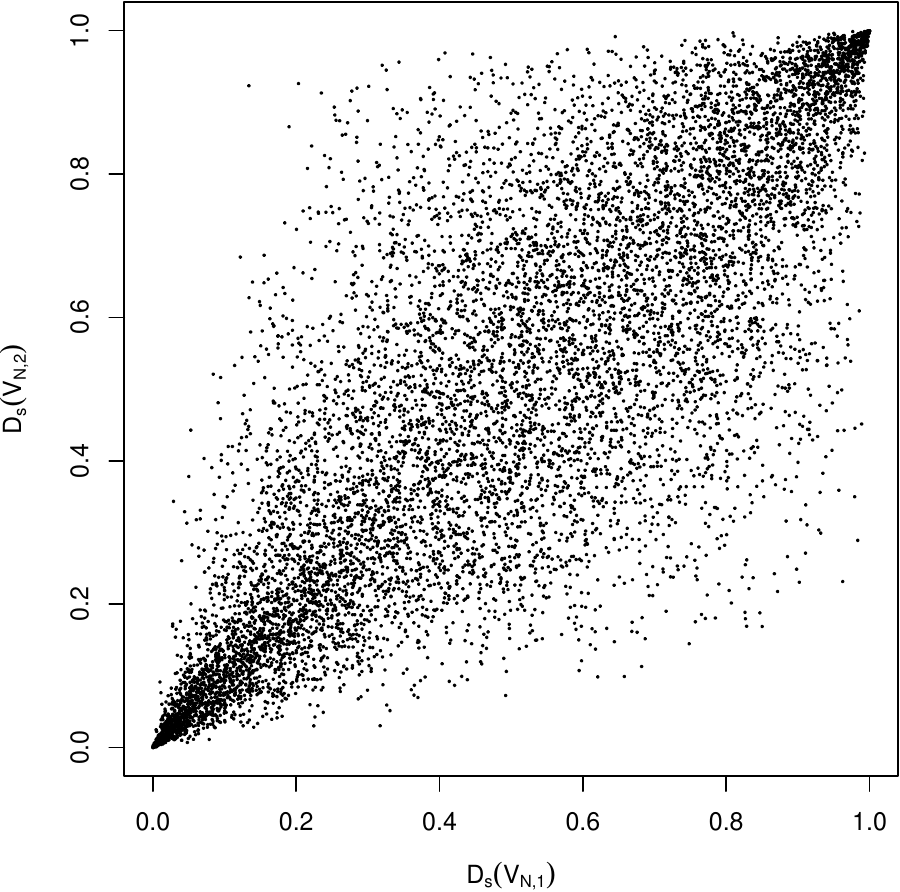}
    \end{center}
    \caption{Left: Sample of size $n=10\,000$ of a Clayton copula
      $\ClCopula_{\theta}$ with parameter $\theta=\pi$. Right: Sample of size
      $n=10\,000$ of $(\ClCopula_{\theta})_{\pD_s}$ where $\pD_s$ is a zeta
      distortion with parameter $s=1.5$; see
      Example~\ref{example_zeta_distortion}.}
    \label{fig_Clayton_zeta}
  \end{figure}
\end{example}

\section{Multivariate extreme value distributions and Morillas-type distortions of stable tail dependence functions}\label{sec_basic_notions}
In this section we first briefly review necessary facts about multivariate
extreme value distributions, extreme value copulas and stable tail dependence
functions. As mentioned in the introduction, our main goal is then to define a
Morillas-type distortion for the class of stabe tail dependence functions and
characterize the admissible distortions in this case.

\subsection{Multivariate max-stable distributions and their connections to extreme value copulas and stable tail dependence functions}
The following definition is from \cite[Equation~(5.17), p.~264]{Resnick1987}.
\begin{definition}[Max-stable distributions]
  Let $d\in\IN$ and $\jointF$ a $d$-dimensional distribution function.
  Then $\jointF$ is \emph{max-stable} if and only if for every $t>0$ there exist
  $a_j(t)>0$ and $b_j(t)\in\IR$, $j\in\{1,\dots,d\}$, such that
  \begin{align}\label{eq_max_stable}
    \jointF^t(\bfx) = \jointF(a_1(t)x_1 + b_1(t),\ldots,a_d(t)x_d + b_d(t)),\quad \bfx=(x_1,\dots,x_d)\in\IR^d.
  \end{align}
  The class of all $d$-dimensional max-stable distributions is
  denoted by $\GEV_d$.%
\end{definition}
In case $d=1$ the class of max-stable distributions is described precisely via
$H\in\GEV_1$ if and only if $H$ is either a location-scale transform of the
Gumbel, (extremal) Weibull or the Fréchet distribution; see the Fisher--Tippett--Gnedenko theorem
in \cite[Theorem~3.2.3]{embrechtsklueppelbergmikosch1997}. %

The multivariate setting complicates matters since additional to the constraint
that each margin has to be max-stable, \eqref{eq_max_stable} also places a
restriction on the underlying dependence structure.  Taking the point-of-view of
Sklar's theorem one can argue, see for example
\cite[p.~129]{GudendorfSegers2010}, that $\EVDist\in\GEV_d$ if and only if all
margins are max-stable and the associated copula belongs to the class
of extreme value copulas.
\begin{definition}[Extreme value copula]\label{def_extreme_value_copula}
  A copula $\Copula^*$ is called \emph{extreme value copula (EVC)} if there exists a copula $\Copula$ such that
  $\Copula(u_1^{1 / n}, \ldots, u_d^{1 / n})^n \to \Copula^*(u_1, \ldots, u_d)$ for $n\to\infty$
  and all $(u_1,\ldots,u_d) \in [0,1]^d$. If so, the copula $\Copula$ is said to
  be in the \emph{maximum domain of attraction (MDA)} of $\Copula^*$ and we write
  $\Copula\in\MDA(\Copula^*)$.  We denote the set of $d$-dimensional EVCs by $\SetEVTCopula_d$.
\end{definition}
As shown for example in \cite[Theorem~3.3.5]{Nelsen2006} or
\cite{GudendorfSegers2010}, we have $\Copula^*\in\SetEVTCopula_d$ if and only if
$\Copula^*(u_1^t,\dots,u_d^t)^{1/t} = \Copula^*(u_1,\dots,u_d)$ for all $t>0$
and $(u_1,\dots,u_d)\in[0,1]^d$, where copulas with the latter property are also
called \emph{max-stable copulas}.

Describing EVCs, or, equivalently max-stable copulas, can be done in several
ways, via stable tail dependence functions (see
\cite{Huang1992,DreesHuang1998,Ressel2013}), Pickand's dependence functions (see
\cite{Pickands1981}), tail copulas (see \cite{SchmidtStadtmuller2006}) or
spectral measures (see \cite{deHaanResnick1977}). We focus on stable tail
dependence functions; for the following characterization that we take as
definition, see \cite[Theorem~6, p.~255]{Ressel2013}.
\begin{definition}[Stable tail dependence function (stdf)]\label{def_stdf}
  A function $\STDF \colon [0,\infty)^d \to \IR$ is a \emph{stable tail
    dependence function (stdf)} if and only if
  \begin{enumerate}[label=(\roman*), labelwidth=\widthof{(iii)}]
  \item $\STDF$ is \emph{homogeneous of degree $1$}, that is
    $\STDF(c\bm{x})=c\STDF(\bm{x})$, $\bm{x}\in[0,\infty)^d$, for all $c>0$;
  \item $\STDF(\bm{e}_j)=1$ for each unit vector
    $\bm{e}_j=(0,\dots,0,1,0,\dots,0)\in\IR^d$, $j\in\{1,\dots,d\}$; and
  \item\label{def:stdf:iii} $\STDF$ is \emph{fully
      $d$-alternating\footnote{
While \cite{Ressel2013} uses the term \emph{fully
          $d$-max-decreasing}, the terminology has later been changed to \emph{fully
      $d$-alternating}, see \cite{Ressel2022}, which we adopt in this article.}}, that is for all $0\le k\le d-1$ and
    any $k$ variables fixed, $\Delta_{(\bm{y},\bm{x}]}\STDF\le 0$,
    $\bm{x}\le\bm{y}$, where
    $\Delta_{(\bm{y},\bm{x}]}\STDF=\sum_{\bm{i}\in\{0,1\}^d}(-1)^{\sum_{j=1}^di_j}\STDF(y_1^{i_1}x_1^{1-i_1},\dots,y_d^{i_d}x_d^{1-i_d})$
    denotes the $\STDF$-\emph{volume}.
  \end{enumerate}
\end{definition}
Note that since
\begin{align*}
  \Delta_{(\bm{y},\bm{x}]}\STDF&=\sum_{\bm{i}\in\{0,1\}^d}(-1)^{\sum_{j=1}^d(1-i_j)}\STDF(y_1^{1-i_1}x_1^{i_1},\dots,y_d^{1-i_d}x_d^{i_d})\notag\\
  &=(-1)^d\sum_{\bm{i}\in\{0,1\}^d}(-1)^{\sum_{j=1}^d i_j}\STDF(x_1^{i_1}y_1^{1-i_1},\dots,x_d^{i_d}y_d^{1-i_d})=(-1)^d\Delta_{(\bm{x},\bm{y}]}\STDF,
\end{align*}
we can replace~\ref{def:stdf:iii} by the slightly more intuitive (iii') for all
$0\le k\le d-1$ and any $k$ variables fixed,
$(-1)^{d}\Delta_{(\bm{x},\bm{y}]}\STDF\leq 0$, $\bm{x}\le\bm{y}$.  Furthermore,
see \cite[Theorem~2.3.3, p.~111; Corollary~3.1.6, p.~141]{Falk2019} for a
textbook presentation, and \cite{hofmann2009} for the initial analytical
characterization relevant to this case, $\ell:[0,\infty)^d\to[0,\infty)$ is a
stdf if and only if it is a \emph{D-norm}, that is
$\ell(\bm{x})=\lVert \bm{x}\rVert_{\text{D}}=\E[\max_{1\le j\le
  d}\{\abs{x_j}W_j\}]$, $\bm{x}\in[0,\infty)^d$, for a \emph{generator}
$\bm{W}=(W_1,\dots,W_d)$, which is a random vector with $W_j\ge 0$ and
$\E[W_j]=1$, $j=1,\dots,d$.  For further properties of D-norms, see
\cite{aulbachfalkzott2015}, while \cite{hoferthuserprasad2018} provide an
overview of generators for well known stdfs.

The link between EVCs and stdfs is now that a copula $\Copula$ is an EVC
if and only if
\begin{align}
  \Copula(u_1,\dots,u_d) = \e^{-\STDF(-\log u_1,\dots,-\log u_d)},\quad\bm{u}\in[0,1]^d,\label{eq:repr:EVC}
\end{align}
for some stdf $\STDF$; see \cite{deheuvels1978} or \cite[Theorem~6.2.2,
p.~129]{GudendorfSegers2010}. By Sklar's theorem, one therefore also obtains
that a $d$-variate distribution function $\EVDist$ is max-stable if and only if
$\EVDist(\bm{x}) = \e^{-\STDF(-\log H_1(x_1),\dots,-\log H_d(x_d))}$,
$\bm{x}\in\IR^d$, for $H_j\in\GEV_1$, $j\in\{1,\dots,d\}$, and some stdf
$\STDF$.

We finish this section with basic properties of stdfs, with proofs for the sake of completeness.
The lemma is then used in the following section.
\begin{lemma}[Properties of stdfs]\label{lem:prop:stdf}
  Let $\STDF$ be a stdf with generator $\bm{W}$.
  \begin{enumerate}
  \item\label{lem:prop:stdf:1}
    $\max\{x_1,\dots,x_d\}\le\STDF(\bm{x})\le\sum_{j=1}^dx_j$. The lower
    bound is precisely the stdf of the comonotone copula $M(\bm{u})=\min_{1\le j\le d}\{u_j\}$
    and the upper bound is precisely the stdfs of the independence
    copula $\Pi(\bm{u})=\prod_{j=1}^du_j$.
  \item\label{lem:prop:stdf:2} $\STDF(\bm{0})=0$, $1\le\STDF(\bm{1})\le d$ and any value in $[1,d]$
    is attainable by a suitable stdf $\STDF$.
  \end{enumerate}
\end{lemma}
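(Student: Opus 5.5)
The natural tool here is the D-norm representation $\STDF(\bm{x})=\E[\max_{1\le j\le d}\{x_jW_j\}]$ with generator $\bm{W}$, where $W_j\ge 0$ and $\E[W_j]=1$. For part~\ref{lem:prop:stdf:1}, fix $\bm{x}\in[0,\infty)^d$. For the lower bound, note that $\max_j\{x_jW_j\}\ge x_kW_k$ for each $k$. Taking expectations gives $\STDF(\bm{x})\ge x_k\E[W_k]=x_k$, and maximizing over $k$ yields $\max_k x_k\le\STDF(\bm{x})$. For the upper bound, since all terms are non-negative, $\max_j\{x_jW_j\}\le\sum_j x_jW_j$, and linearity of expectation gives $\STDF(\bm{x})\le\sum_j x_j$.

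To identify the two bounds as stdfs, I would use the link~\eqref{eq:repr:EVC}. We have
\begin{align*}
\e^{-\max_j\{-\log u_j\}}=\min_j u_j=M(\bm{u}),\qquad \e^{-\sum_j(-\log u_j)}=\prod_j u_j=\Pi(\bm{u}).
\end{align*}
Both $M$ and $\Pi$ are max-stable copulas, since $M(\bm{u}^t)^{1/t}=M(\bm{u})$ and likewise for $\Pi$. Their stdfs are therefore exactly these two functions, and uniqueness of the stdf in~\eqref{eq:repr:EVC} follows by inverting, $\STDF(\bm{x})=-\log\Copula(\e^{-x_1},\dots,\e^{-x_d})$. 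One could alternatively exhibit generators directly: $\bm{W}=(1,\dots,1)$ for $M$, and $\bm{W}=d\,\bm{e}_J$ with $J$ uniform on $\{1,\dots,d\}$ for $\Pi$. The latter gives $\E[\max_j x_jW_j]=\sum_j \frac1d\, d\,x_j=\sum_j x_j$.

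Part~\ref{lem:prop:stdf:2} follows quickly. Homogeneity gives $\STDF(\bm{0})=\STDF(c\bm{0})=c\STDF(\bm{0})$ for all $c>0$, so $\STDF(\bm{0})=0$; alternatively, this follows from the sandwich bounds at $\bm{x}=\bm{0}$. Evaluating part~\ref{lem:prop:stdf:1} at $\bm{1}$ gives $1\le\STDF(\bm{1})\le d$. The only step requiring a construction is attainability of every value in $[1,d]$. For this I would take the convex combination
\begin{align*}
\STDF_\lambda=\lambda\max_j x_j+(1-\lambda)\sum_j x_j,\qquad \lambda\in[0,1].
\end{align*}
This is a D-norm, because its generator is the mixture that equals $(1,\dots,1)$ with probability $\lambda$ and $d\,\bm{e}_J$ otherwise. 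Each component of this mixture still has mean $1$, and expectation is linear in the mixing distribution. Since $\STDF_\lambda(\bm{1})=\lambda+(1-\lambda)d$ moves continuously from $d$ to $1$ as $\lambda$ goes from $0$ to $1$, every value in $[1,d]$ is attained.

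The main (mild) obstacle is justifying that the convex combination is again a stdf. The mixture-generator argument handles this cleanly through the D-norm characterization, without checking the full $d$-alternating property by hand. If one preferred Definition~\ref{def_stdf} directly, one would note instead that homogeneity, the normalization $\STDF(\bm{e}_j)=1$, and the sign condition on volumes are all preserved under convex combinations, since volumes are linear in $\STDF$.
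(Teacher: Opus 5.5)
Your proof is correct. Part~\ref{lem:prop:stdf:1} is essentially the paper's argument: pointwise bounds on $\max_j\{x_jW_j\}$ under the D-norm representation, then identification of the two bounds via~\eqref{eq:repr:EVC}. The paper only says this identification is easy to verify, whereas you spell it out and give explicit generators.

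The genuine difference is in the attainability claim of part~\ref{lem:prop:stdf:2}. The paper uses the comonotone stdf for the value $1$. For the values in $(1,d]$ it uses the logistic family $\STDF_\eta(\bfx)=(\sum_{i=1}^d x_i^{1/\eta})^\eta$, $\eta\in(0,1]$, with $\STDF_\eta(\bm{1})=d^\eta$. You instead interpolate linearly between the two extremal stdfs of part~\ref{lem:prop:stdf:1} and certify the result by mixing their generators.

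Your route is self-contained: it needs nothing beyond part~\ref{lem:prop:stdf:1} and the D-norm characterization. The paper's route has to import the validity of the logistic stdf, i.e.\ of the power transform~\eqref{eq_ChatelainFougeresNeslehova2020} of the independence stdf. That is exactly the sufficiency half of Theorem~\ref{thm:power:distortions:of:stdfs}, whose necessity half in turn relies on this lemma. This is not circular, since the paper takes that sufficiency from the literature, but your construction decouples the lemma from it altogether.

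In exchange, the paper's choice shows the attaining stdfs inside a classical parametric (Gumbel--Hougaard) family, in keeping with the power-distortion theme of the section. Yours uses the same closure of stdfs under convex combinations that the paper itself invokes in a later example.
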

\subsection{Morillas-type transformations of stable tail dependence functions}\label{sec_trafo}
While Section~\ref{sec_cdf_distortions} shows how pre- and post-compositions can
be used to transform copulas, \cite[Proposition~2.6.1,
p.~132]{Falk2019} and \cite[Lemma~2.3\,(ii), p.~1029]{ChatelainFougeresNeslehova2020}
showed that
\begin{align}
  \ell_\gamma(\bm{x}):=\bigl(\STDF\bigl(x_1^{1/\gamma},\ldots,x_d^{1/\gamma}\bigr)\bigr)^\gamma,\quad\bm{x}\ge\bm{0},\label{eq_ChatelainFougeresNeslehova2020}
\end{align}
yields a valid stdf for any stdf $\STDF$ as long as $\gamma\in(0,1]$. The
transformation \eqref{eq_ChatelainFougeresNeslehova2020} is reminiscent of the
Morillas-type copula-to-copula transform in \eqref{eq_MorillasDef}, albeit for
stdfs. This naturally raises the question whether there are other Morillas-type
transforms than power functions $x^{1/\gamma}$, $\gamma\in(0,1]$, which can
transform stdfs into stdfs.
In the context of model identifiability for Archimax copulas, \cite[Lemma~2.4, p.~1029]{ChatelainFougeresNeslehova2020} have similarly singled out the pivotal role of power distortions of stdf's with (rational) exponents in $(0,1]$.
In this situation, different Archimedean generators lead to the same Archimax copula if and only if the associated stdf's are linked via a power distortion.
Our second main contribution in this work,
Theorem~\ref{thm:power:distortions:of:stdfs}, likewise shows that in the context of Morillas-type transformations of stdfs, \emph{only} power transformations guarantee that the resulting
function is again a stdf.
\begin{theorem}[Characterization of stdf to stdf transformations]\label{thm:power:distortions:of:stdfs}
  Denote by $g \colon [0,\infty) \to [0,\infty)$ a continuous and strictly
  increasing function such that $g(x)\to\infty $ for $x\to\infty$.
  For a stdf $\STDF$, let $\STDF_g$ be defined as
  \begin{align*}
    \STDF_g(\bm{x}) := g\bigl(\STDF(g^{-1}(x_1),\ldots,g^{-1}(x_d))\bigr), \quad \bm{x}=(x_1,\dots,x_d)\in[0,\infty)^d
  \end{align*}
where $g^{-1}$ is the inverse function to $g$.
  Then $\STDF_{g}$ is a stdf for all
  $\STDF$ if and only if $g(x) = cx^{\gamma}$, $x\ge 0$, for $\gamma\in(0,1]$
  and $c\in(0,\infty)$.
\end{theorem}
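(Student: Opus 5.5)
For sufficiency, I take $g(x)=cx^\gamma$ with $\gamma\in(0,1]$, $c>0$. Then $g^{-1}(y)=(y/c)^{1/\gamma}$ and
\begin{align*}
  \STDF_g(\bm{x}) = c\,\bigl(\STDF((x_1/c)^{1/\gamma},\dots,(x_d/c)^{1/\gamma})\bigr)^\gamma .
\end{align*}
Homogeneity of $\STDF$ of degree $1$ lets me pull out the factor $c^{-1/\gamma}$, giving $c\cdot c^{-1}\,\STDF_\gamma(\bm{x})=\STDF_\gamma(\bm{x})$, where $\STDF_\gamma$ is the function in \eqref{eq_ChatelainFougeresNeslehova2020}. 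So the constant $c$ drops out entirely, and the cited result of Falk and of Chatelain, Fougères and Nešlehová gives that $\STDF_g$ is a stdf. (Alternatively, one can argue via D-norms: $\STDF_\gamma$ is the $\gamma$-th power of an $L^{1/\gamma}$-type mixture, though I would simply cite.)

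For necessity, I assume $\STDF_g$ is a stdf for every stdf $\STDF$. The plan is to test on specific stdfs and derive a functional equation.
\begin{enumerate}
\item The normalization $\STDF_g(\bm{e}_j)=1$ gives $g(\STDF(g^{-1}(1)\bm{e}_j))=g(g^{-1}(1))=1$. This holds automatically once $\STDF(x\bm{e}_j)=x$ and $g(0)=0$. Testing homogeneity at $\bm{x}=\bm{0}$ also forces $\STDF_g(\bm 0)=g(0)=0$.
\item I use the independence stdf $\STDF(\bm{x})=\sum_j x_j$, restricted to points $(x,y,0,\dots,0)$. Homogeneity of $\STDF_g$ then reads
\begin{align*}
  g\bigl(g^{-1}(tx)+g^{-1}(ty)\bigr)=t\,g\bigl(g^{-1}(x)+g^{-1}(y)\bigr) \quad \text{for all } t,x,y\ge 0 .
\end{align*}
Setting $h=g^{-1}$ and $u=h(x)$, $v=h(y)$, this becomes $h(t\,g(u+v))=\phi_t(u)+\phi_t(v)$ with $\phi_t(u):=h(t\,g(u))$, since $h(t g(u+v))=h(tx)+h(ty)$ follows from $g(u+v)=\STDF_g$-value scaling.
\item More cleanly: define $\phi_t=h\circ(t\cdot)\circ g$, a continuous increasing bijection of $[0,\infty)$. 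The relation says $\phi_t(u+v)=\phi_t(u)+\phi_t(v)$ on $[0,\infty)^2$, which is a Cauchy/Pexider equation on a restricted domain. By continuity, $\phi_t(u)=\lambda(t)u$.
\item Hence $h(tg(u))=\lambda(t)u$, i.e.\ $h(tz)=\lambda(t)h(z)$ for all $z,t>0$. Continuous monotone solutions of this multiplicative equation are $\lambda(t)=t^{\beta}$ and $h(z)=h(1)z^{\beta}$ with $\beta>0$. Therefore $g(x)=cx^\gamma$ with $\gamma=1/\beta$.
\end{enumerate}

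It remains to force $\gamma\le1$. For this I test the comonotone-to-independence bounds of Lemma~\ref{lem:prop:stdf}. Applying them to $\STDF_g$ with $\STDF=\sum_j x_j$ gives
\begin{align*}
  \STDF_g(\bm 1)=c\Bigl(d\,c^{-1/\gamma}\Bigr)^{\gamma}=d^\gamma \le d ,
\end{align*}
which already forces $\gamma\le1$. The lower bound $\ge1$ gives nothing additional.

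The main obstacle is step 3: making the domain-restricted Pexider reduction rigorous. The equation only holds on $[0,\infty)$, and one must justify that additivity plus monotonicity yields linearity without extending to $\IR$. The argument I would try first is the standard route: additivity gives $\phi_t(qu)=q\phi_t(u)$ for rational $q\ge0$, and monotonicity then extends this to all reals.
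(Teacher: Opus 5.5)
Your proof is correct, and your necessity argument takes a genuinely different route from the paper's. The paper works only on the diagonal. It compares $\STDF_g(t\bm{1})$ with $t\STDF_g(\bm{1})$ for \emph{all} stdfs and uses Lemma~\ref{lem:prop:stdf}~\ref{lem:prop:stdf:2} (via logistic stdfs) to let $y=\STDF(\bm{1})$ sweep $[1,d]$. This gives the multiplicative Pexider equation $g(xy)=g(x)g(g^{-1}(1)y)$ on the restricted domain $(0,\infty)\times(1,d)$, which the paper solves by citing Acz\'el's result for restricted-domain Pexider equations. You instead fix the single independence stdf and move off the diagonal to the face $(x,y,0,\dots,0)$. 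Conjugating homogeneity by $g$ turns it into Cauchy's additive equation for $\phi_t=g^{-1}\circ(t\,\cdot)\circ g$ on the whole half-line. It then becomes the multiplicative Cauchy equation $k(tz)=k(t)k(z)$ on all of $(0,\infty)$ for $k=g^{-1}/g^{-1}(1)$. Both equations live on unrestricted domains and are solved by the elementary rational-scaling-plus-monotonicity argument you sketch. So the step you flag as the main obstacle is routine, and no restricted-domain theory is needed. Your route also gives a stronger necessity statement: requiring $\STDF_g$ to be a stdf for $\STDF(\bm{x})=\sum_j x_j$ alone already forces $g(x)=cx^\gamma$ with $\gamma\in(0,1]$. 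Homogeneity gives the power form, and $\STDF_g(\bm 1)=d^\gamma\le d$ gives $\gamma\le 1$. The paper's route, by contrast, uses homogeneity along a single ray but needs the whole family of stdfs.

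One small slip is in your step 1. You write $\STDF_g(\bm 0)=g(0)$ and verify $\STDF_g(\bm e_j)=1$ by presupposing $g(0)=0$, which is circular. Correctly, $\STDF_g(\bm 0)=g\bigl(g^{-1}(0)\STDF(\bm 1)\bigr)$, and homogeneity forces this to equal $0$. Strict monotonicity together with $g\ge 0$ then gives $g^{-1}(0)=0$. Equivalently, as in the paper, $\STDF_g(\bm e_1)=1$ for the independence stdf gives $g\bigl(g^{-1}(1)+(d-1)g^{-1}(0)\bigr)=1$, hence $g^{-1}(0)=0$. You need $g(0)=0$ to drop the zero coordinates in step 2, so state it this way. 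It is easily fixed and does not affect the rest of the argument.
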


Although~\eqref{eq_ChatelainFougeresNeslehova2020}
resembles~\eqref{eq_MorillasDef}, it is important to note that stdfs are defined
with tighter restrictions and so the results of \cite{Morillas2005} in the
context of copulas do not directly carry over to our extreme value setting in
terms of stdfs. In the latter setting, this motivates the question which
transformations of Morillas-type guarantee that EVCs are
transformed into EVCs. By
Theorem~\ref{thm:power:distortions:of:stdfs}, the choice of stdf-preserving
transformations $g$ is narrowed down to power functions by the fact that
$\STDF_g$ is required to be a stdf for \emph{every} stdf $\ell$. For specific stdfs
$\STDF$, it is possible to find admissible transformations $g$ such that
$\STDF_g$ is again a valid stdf even if $g$ is not a power function or even not
a Morillas-type transformation, see the following example.
\begin{example}[Non-power and non-Morillas-type stdf-preserving transformations]
  \begin{enumerate}
  \item\label{ex:nonpower:trafo} For the comonotone stdf $\ell_M(\bm{x})=\max\{x_1,\dots,x_d\}$, any increasing and invertible function $g$ satisfies
    \begin{align*}
      \STDF_g(\bm{x})&=g(\max\{g^{-1}(x_1),\ldots,g^{-1}(x_d)\})=g(g^{-1}(\max\{x_1,\dots,x_d\}))=\ell_M(\bm{x}),\quad\bm{x}\ge\bm{0},
    \end{align*}
    and so maps the comontone stdf to itself. As a concrete example of such a
    function $g$ that is not a power function, take $g(x) = \e^x$.
  \item Consider stdfs with generators with equal marginal distributions $F_W$,
    where $F_W$ is the distribution function of
    $\Gamma(1-1/\theta)W\sim \e^{-x^{\theta}}$, $x\in[0,\infty)$,
    $\theta\in[1,\infty)$, for $\Gamma$ denoting the gamma function. A
    stochastic map, transforming comonotone $(W,\dots,W)$ with margins $F_W$ to
    $(W_1,\dots,W_d)$ with $W_1,\dots,W_d\isim F_W$, maps the comonotone stdf
    $\ell_M$ to the \emph{logistic stdf}
    $\STDF(\bm{x})=(\sum_{j=1}^d x_j^{\theta})^{1/\theta}$,
    $\bm{x}\in[0,\infty)^d$, with parameter $\theta\in[1,\infty)$.
    This is a valid stdf but,
    by~\ref{ex:nonpower:trafo}, cannot be of Morillas-type. %
  \end{enumerate}
\end{example}

Theorem~\ref{thm:power:distortions:of:stdfs} can be used to construct stdfs from
given elementary ones. For example, pairing $g(x)=x^{\gamma}$ with the stdf
$\STDF(\bm{x}) = \sum_{i=1}^d x_i$ of the independence copula, we obtain the
logistic stdf $\STDF_g(\bm{x})=(\sum_{i=1}^d x_i^{1/\gamma})^\gamma$
with parameter $\gamma\in(0,1]$. A more general example is the following.
\begin{example}[Constructing EVCs via distortions of stdfs]
  For $k=1,\dots,K$, let $\alpha_k>0$ and $C_k$ be an EVC with stdf
  $\STDF_k$. Clearly, as a mixture, $C=\sum_{k=1}^K\alpha_k C_k$ is a
  copula if $\sum_{k=1}^{K}\alpha_k=1$, but $C$ generally fails to satisfy max-stability and is thus not
  an EVC anymore. However, mixing at the level of the stdfs will result in an
  EVC.  To this end, note that the convex combination
  $\STDF=\sum^{K}_{k=1}\alpha_k \STDF_k$ of stdfs is, even without the condition $\sum_{k=1}^{K}\alpha_k=1$, readily seen to fulfill
  the defining properties of a stdf %
  and thus the associated copula $C_{\STDF}$ is again an EVC. It is given by
  $\Copula_{\STDF}(\bm{u}) = \prod_{k=1}^K
  \Copula_{\STDF_k}(u_1^{\alpha_k},\dots,u_d^{\alpha_k})$, which is a
  \emph{Liebscher transformation} of
  $\Copula_{\STDF_1},\dots,\Copula_{\STDF_K}$; see \cite{liebscher2008}.
  Theorem~\ref{thm:power:distortions:of:stdfs} now allows us to break the simple
  product structure of $\Copula_{\STDF}$. Power-transforming $\STDF$, we obtain
  the (by Theorem~\ref{thm:power:distortions:of:stdfs} valid) stdf
  \begin{align*}
    \STDF_g(\bm{x}) = g\biggl(\,\sum^{K}_{k=1}\alpha_k \STDF_k(g^{-1}(x_1),\dots,g^{-1}(x_d))\biggr) = \biggl(\,\sum^{K}_{k=1}\alpha_k \STDF_k(x_1^{1/\gamma},\dots,x_d^{1/\gamma})\biggr)^{\gamma},\quad\bm{x}\in[0,\infty)^d,
  \end{align*}
  where $\gamma\in(0,1]$. The EVC based on $\STDF$ then takes the form
  \begin{align*}
    \Copula_{\STDF}(\bm{u}) = \e^{-\left(\sum^{K}_{k=1}\alpha_k \STDF_k\left((-\log u_1)^{1/\gamma},\dots,(-\log u_d)^{1/\gamma}\right)\right)^{\gamma}},\quad\bm{u}\in[0,1]^d.
  \end{align*}
  Additionally distorting $\STDF_k$ by $\gamma_k\in(0,1]$, $k=1,\dots,K$, by
  Theorem~\ref{thm:power:distortions:of:stdfs}, we obtain the even more flexible stdf
  \begin{align*}
    \STDF(\bm{x}) = \left(\,\sum^{K}_{k=1}\alpha_k \STDF_k\left(x_1^{1/(\gamma_k\gamma)},\dots,x_d^{1/(\gamma_k\gamma)}\right)^{\gamma_k}\right)^{\gamma},\quad\bm{x}\in[0,\infty)^d,
  \end{align*}
  with corresponding EVC
  \begin{align*}
    \Copula_{\STDF}(\bm{u}) = \e^{-
    \left(\sum^{K}_{k=1}\alpha_k \STDF_k\left((-\log u_1)^{1/(\gamma_k\gamma)},\dots,(-\log u_d)^{1/(\gamma_k\gamma)}\right)^{\gamma_k}\right)^{\gamma}},\quad\bm{u}\in[0,1]^d.
  \end{align*}
  Such constructions may serve as flexible models for multivariate extremes and
  have appeared, for specific choices of $\ell_1,\dots,\ell_K$, in \cite{mcfadden1978}, \cite{tawn1990},
  \cite{stephenson2003} and \cite{stephenson2009} to this end.
\end{example}

Constructing stdfs from given elementary ones is reminiscent of the construction of
Archimedean generators from elementary generators via power and other transformations; see
\cite[Theorem~4.5.1]{Nelsen2006} and \cite[Chapters~2 and 4]{Hofert2010}.
Furthermore, the
following result, which is reminiscent of a property of Archimedean generators, shows that the stdf $\STDF_g$ in
Theorem~\ref{thm:power:distortions:of:stdfs} is only uniquely determined up to a
scaling factor $c>0$ (a nuisance parameter in terms of the dependence structure
induced by $\STDF_g$).
\begin{corollary}[Uniqueness up to a scaling factor]\label{cor_uniqueness_up_to_scaling}
  With $g$ as in Theorem~\ref{thm:power:distortions:of:stdfs}, $g(cx)$
  generates the same stdf $\STDF_g$ for all $c>0$.
\end{corollary}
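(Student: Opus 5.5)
The claim is a direct computation with the inverse of the rescaled function. I would set $h(x):=g(cx)$ for fixed $c>0$. Since $g$ is continuous, strictly increasing and unbounded, so is $h$, and it satisfies the standing assumptions of Theorem~\ref{thm:power:distortions:of:stdfs}. The key step is to identify its inverse: solving $h(x)=y$ gives $cx=g^{-1}(y)$, hence
\begin{align*}
  h^{-1}(y)=c^{-1}g^{-1}(y),\quad y\ge 0.
\end{align*}

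Next I would substitute this into the definition of the transformed function and use property (i) of Definition~\ref{def_stdf}, the homogeneity of degree $1$ of $\STDF$. This pulls the factor $c^{-1}$ out of $\STDF$:
\begin{align*}
  \STDF_h(\bm{x}) &= g\Bigl(c\,\STDF\bigl(c^{-1}g^{-1}(x_1),\dots,c^{-1}g^{-1}(x_d)\bigr)\Bigr)\\
  &= g\bigl(c\,c^{-1}\STDF(g^{-1}(x_1),\dots,g^{-1}(x_d))\bigr) = \STDF_g(\bm{x}),\quad \bm{x}\in[0,\infty)^d.
\end{align*}
This holds for every stdf $\STDF$ and every $c>0$, which is exactly the claim.

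There is essentially no obstacle; the argument is purely algebraic and uses only homogeneity. It does not need the power form from Theorem~\ref{thm:power:distortions:of:stdfs}, and it does not need $\STDF_g$ to be a stdf. In the admissible case $g(x)=c'x^\gamma$, the computation is consistent with the theorem: $g(cx)=c'c^\gamma x^\gamma$ only changes the constant, and the constant cancels.
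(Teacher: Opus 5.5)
Your proof is correct and matches the paper's argument: the paper also sets $g_c(x):=g(cx)$, notes that $g_c^{-1}=g^{-1}/c$, and uses homogeneity of degree $1$ of $\STDF$ to cancel the factor $c$. Your remark that the identity $\STDF_{g_c}=\STDF_g$ holds whether or not $\STDF_g$ is a stdf is also accurate.
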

In order to highlight the fact that the specific power function form in
Theorem~\ref{thm:power:distortions:of:stdfs} arises from the constraint that the
resulting $\STDF_g$ has to be a valid stdf, we now consider the case of Archimax
copulas where this constraint can be weakened.
\begin{remark}[Connection with Archimax copulas]
  The restriction that $g$ is a power function in
  Theorem~\ref{thm:power:distortions:of:stdfs} is owed to the fact that $g$ has
  to map stdf's to stdf's.  When relaxing this constraint on the domain,
  other functional forms leading to valid dependence structures are possible.
  For example, as already noted in \cite[p.~395]{GenestRivest2001},
  Morillas-type distortions of EVCs by distortions linked to Archimedean
  generators lead to Archimax copulas.  Specifically, a function
  $\gen:[0,\infty)\to[0,1]$ is \emph{$d$-monotone} if it is continuous on
  $[0,\infty)$, admits derivatives up to the order $d-2$ satisfying
  $(-1)^k\gen^{(k)}(t)\ge 0$ for all $k\in\{0,\dots,d-2\}$, $t\in(0,\infty)$,
  and $(-1)^{d-2}\gen^{(d-2)}(t)$ is decreasing and convex on $(0,\infty)$; see
  \cite{McNeilNeslehova2009}. By the same reference, $\gen$ is a \emph{$d$-dimensional
    Archimedean generator} if it is $d$-monotone, $\lim_{t\to\infty}\gen(t)=0$
  and $\gen(0)=1$. As noted in \cite[Theorem~2.2,
  p.~3065]{McNeilNeslehova2009}, $d$-dimensional Archimedean generators stand in
  direct correspondence with $d$-dimensional Archimedean copulas.  For a given
  Archimedean generator, we can thus define $g(x) := -\log \gen(x)$, $x\geq 0$,
  with inverse $g^{-1}(x) = \invgen\left(\e^{-x}\right)$.  For an EVC $\Copula$
  with associated stdf $\STDF$, this leads to the Morillas-type transformation
  \begin{align}\label{eq_stdf_g}
    \STDF_g(\bfx) = -\log\left(\gen(\STDF\left(\invgen\left(\e^{-x_1}\right),\dots,\invgen\left(\e^{-x_d}\right)\right)\right), \quad \bfx = (x_1,\dots,x_d)\in[0,\infty)^d.
  \end{align}
  While for the Gumbel generator $\gen(t) = \exp(-t^{1/\theta})$,
  $\theta \geq 1$, $g$ will be an admissible power function in the sense of
  Theorem~\ref{thm:power:distortions:of:stdfs}, this will in general not be the
  case anymore, so $\STDF_g$ in \eqref{eq_stdf_g} will in general not be a
  stdf.  However, utilizing $\STDF_g$ nonetheless in an EVC type construction as
  in \eqref{eq_max_stable} allows us to define a function $\Copula_g$ via
  $\Copula_g(\bm{u}):= \e^{-\STDF_g(-\log u_1,\dots,-\log u_d)}=\gen\left(\STDF\left(\invgen\left(u_1\right),\dots,\invgen\left(u_d\right)\right)\right)$, $\bm{u}\in[0,1]^d$.
  Here $\Copula_g$ is readily identified as an Archimax copula, see
  \cite{CharpentierFougeresGenestNeslehova2014}. Hence, while only the Gumbel
  generator linked to power functions will guarantee that $\STDF_g$ is a valid
  stdf for any stdf $\STDF$ and therefore that $\Copula_g$ is an EVC, more
  general transformations can still lead to valid dependence structures outside
  the EVC class.
\end{remark}

We can now turn to the aforementioned point and apply
Theorem~\ref{thm:power:distortions:of:stdfs} to identify the distortions $\pD$
such that the Morillas-type transformation in~\eqref{eq_MorillasDef} transforms
EVCs into EVCs; see
\cite{HerrmannHofertNeslehova2024} for an alternative proof.
\begin{corollary}[EVC to EVC transformations]\label{cor_EVC_to_EVC}
  Let $\pD$ be a continuous and strictly increasing distortion function. Then
  $\Copula_{\pD}$ defined in \eqref{eq_MorillasDef} is a $d$-dimensional EVC
  for any $d$-dimensional EVC $\Copula$ if and only if
  $\pD(u) = \e^{-\lambda(-\log u)^\gamma}$, $u\in[0,1]$, for $\lambda>0$ and $\gamma\in(0,1]$.
\end{corollary}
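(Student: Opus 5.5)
The plan is to reduce the statement to Theorem~\ref{thm:power:distortions:of:stdfs} by conjugating $\pD$ with the map $u\mapsto-\log u$. Given a continuous, strictly increasing distortion $\pD$, set $g(x):=-\log\pD(\e^{-x})$ for $x\in[0,\infty)$, with $g(\infty)$ understood as the limit. Then $g(0)=-\log\pD(1)=0$. Since $\pD$ is continuous, strictly increasing and has $\pD(0)=0$, the function $g$ is continuous, strictly increasing, and satisfies $g(x)\to\infty$ as $x\to\infty$. Its inverse is $g^{-1}(y)=-\log\qD(\e^{-y})$, where $\qD$ is the genuine inverse because $\pD$ is a bijection of $[0,1]$. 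So $g$ satisfies the hypotheses of Theorem~\ref{thm:power:distortions:of:stdfs}.

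Next I would compute $\Copula_{\pD}$ for an EVC $\Copula=\e^{-\STDF(-\log u_1,\dots,-\log u_d)}$, using \eqref{eq:repr:EVC}. Writing $x_j=-\log u_j$, we have $\qD(u_j)=\e^{-g^{-1}(x_j)}$, and therefore
\[
\Copula_{\pD}(\bm u)=\pD\bigl(\e^{-\STDF(g^{-1}(x_1),\dots,g^{-1}(x_d))}\bigr)=\e^{-\STDF_g(x_1,\dots,x_d)}.
\]
The boundary cases, where some $u_j=0$ and $x_j=\infty$, hold by continuity and because both sides vanish.

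The key step is the equivalence: $\Copula_{\pD}$ is an EVC if and only if $\STDF_g$ is a stdf. If $\STDF_g$ is a stdf, then \eqref{eq:repr:EVC} shows that $\Copula_{\pD}$ is an EVC. Conversely, suppose $\Copula_{\pD}$ is an EVC. Then $\Copula_{\pD}=\e^{-\tilde\STDF(-\log\cdot)}$ for some stdf $\tilde\STDF$. Since $-\log$ maps $(0,1]$ bijectively onto $[0,\infty)$, it follows that $\tilde\STDF=\STDF_g$ on $[0,\infty)^d$. Because every stdf arises from some EVC, the statement "$\Copula_{\pD}$ is an EVC for every EVC $\Copula$" is equivalent to "$\STDF_g$ is a stdf for every stdf $\STDF$". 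By Theorem~\ref{thm:power:distortions:of:stdfs}, the latter holds if and only if $g(x)=cx^\gamma$ with $c>0$ and $\gamma\in(0,1]$.

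Finally I would translate back. The relation $-\log\pD(\e^{-x})=cx^\gamma$ means $\pD(u)=\e^{-c(-\log u)^\gamma}$, which is the claimed form with $\lambda=c$. For the converse direction, each such $\pD$ is a continuous, strictly increasing distortion with $\pD(0)=0$ and $\pD(1)=1$. Its associated $g$ is $\lambda x^\gamma$, so the theorem applies and yields sufficiency.

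I do not expect a real obstacle, since everything rests on Theorem~\ref{thm:power:distortions:of:stdfs}. The only care needed is in bookkeeping: treating the endpoints $u=0$ and $x=\infty$, and confirming that the hypotheses on $g$ (continuity, strict monotonicity, divergence) follow exactly from those on $\pD$.
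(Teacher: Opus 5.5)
Your proof is correct. The necessity half matches the paper's argument: you rewrite $\Copula_{\pD}=\e^{-\STDF_g(-\log u_1,\dots,-\log u_d)}$ with $g(x)=-\log\pD(\e^{-x})$, deduce that $\STDF\mapsto\STDF_g$ preserves stdfs, and apply Theorem~\ref{thm:power:distortions:of:stdfs}. You are also a bit more explicit than the paper in checking the hypotheses on $g$ and in noting that every stdf arises from an EVC.

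You differ from the paper in the sufficiency half. There you run the same identity backwards: you use the ``if'' part of Theorem~\ref{thm:power:distortions:of:stdfs} for $g(x)=\lambda x^{\gamma}$ and then \eqref{eq:repr:EVC}. The paper argues directly instead. It imports from \cite{HerrmannHofertNeslehova2024} that $\pD_{\lambda,\gamma}(u)=\e^{-\lambda(-\log u)^{\gamma}}$ is absolutely monotone, so $\Copula_{\pD}$ is a copula by the Morillas/Ressel theorem. It then verifies max-stability by hand from $\pD_{\lambda,\gamma}(u^t)=\pD_{\lambda,\gamma}(u)^{t^{\gamma}}$ and $\pD_{\lambda,\gamma}^{-1}(u^t)=\bigl(\pD_{\lambda,\gamma}^{-1}(u)\bigr)^{t^{1/\gamma}}$.

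Your version makes the corollary a symmetric consequence of the theorem. It also avoids needing the absolute monotonicity of $\pD_{\lambda,\gamma}$, which the paper takes from the very reference whose result the corollary re-proves. The paper's version gives extra information: $\pD_{\lambda,\gamma}\in\AMDCO{\infty}$, so $\Copula_{\pD}$ is a copula for every copula $\Copula$, not just for EVCs. It also provides an elementary, explicit max-stability check.
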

Exactly as the scaling $c>0$ in Corollary~\ref{cor_uniqueness_up_to_scaling},
the parameter $\lambda>0$ of the distortions considered in
Corollary~\ref{cor_EVC_to_EVC} is a nuisance parameter in the sense that
different values of $\lambda$ generate the same EVC.

\section{Maximum domain of attraction for Morillas-type copula
  distortions}\label{sec_MDA_for_C_D}
\subsection{Maximum domain of attraction for $\Copula_{\pD}$}
Section~\ref{sec_basic_notions} is centered around understanding transformations
of the possible limit laws (copulas or not) of componentwise multivariate
extremes. In this section we leverage these insights to obtain results from the
more realistic starting point of the joint distribution generating the original
observations. The corresponding initial dependence structure and the dependence
structure of the limiting distribution are linked by the copula MDA as laid out
in Definition~\ref{def_extreme_value_copula}.  We therefore investigate the MDA
for the distorted copula $\Copula_{\pD}$ and show in our third main result
Theorem~\ref{theorem_max_domain_of_attraction_Morillas} below how the limiting
extreme value copulas of $\Copula$ and $\Copula_{\pD}$ are connected.

As preparation of the proof of
Theorem~\ref{theorem_max_domain_of_attraction_Morillas} we have the following
Lemma~\ref{rate_function_convergence}, which highlights that the rate used in
the MDA condition in Definition~\ref{def_extreme_value_copula} is a sufficient
criterion concerning the convergence with respect to other rates. Note that
Item~\ref{rate_function_convergence_IV} is directly taken from
\cite[Théorème~(3.1), p.~14]{deheuvels1978} and added for the sake of completeness.
\begin{lemma}[Alternative rates for the MDA of copulas]\label{rate_function_convergence}
  Let $\Copula$ be a $d$-copula and let $\Copula^*\in\SetEVTCopula_d$ be an EVC.
  Then the following statements are equivalent:
\begin{enumerate}
\item\label{rate_function_convergence_I} $\Copula\in\MDA(\Copula^*)$.
\item\label{rate_function_convergence_II} For all rate functions
  $\rate \colon \IN \to (0,\infty)$, $n\mapsto\rate_n$ such that
  $\rate_n \to \infty$ as $n \to \infty$, we have
  $\Copula^{\rate_n}\bigl( u_1^{1/\rate_n}, \ldots, u_d^{1/\rate_n}\bigr) \to \Copula^*(u_1, \ldots, u_d)$,
  $(u_1,\dots,u_d)\in[0,1]^d$.
\item\label{rate_function_convergence_III} There exists a rate function
  $\rate\colon \IN \to (0,\infty)$, $n\mapsto\rate_n$ such that $\rate_n$ is
  asymptotically equivalent to $n$, that is
  $\lim_{n\to\infty}n/\rate_n = \theta$ for some $\theta>0$, and
  $\Copula^{\rate_n}\bigl(u_1^{1/\rate_n}, \ldots, u_d^{1/\rate_n}\bigr) \to \Copula^*(u_1, \ldots, u_d)$,
  $(u_1,\dots,u_d)\in[0,1]^d$.
\item\label{rate_function_convergence_IV} There exists an increasing sequence of
  indices $(n_k)_{k\geq 1}$ such that $\{n_{k+1}/n_k \colon k\geq 1\}$ is
  bounded and
  $\Copula^{n_k}\bigl(u_1^{1/n_k},\dots,u_d^{1/n_k}\bigr)\to\Copula^*(u_1,\dots,u_d)$
  for all $(u_1,\dots,u_d)\in[0,1]^d$.
\end{enumerate}
\end{lemma}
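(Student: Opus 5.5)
The plan is to prove the cycle \ref{rate_function_convergence_I}$\Rightarrow$\ref{rate_function_convergence_II}$\Rightarrow$\ref{rate_function_convergence_III}$\Rightarrow$\ref{rate_function_convergence_IV}$\Rightarrow$\ref{rate_function_convergence_I}. The step \ref{rate_function_convergence_II}$\Rightarrow$\ref{rate_function_convergence_III} is trivial: take $\rate_n=n$. The step \ref{rate_function_convergence_IV}$\Rightarrow$\ref{rate_function_convergence_I} is exactly \cite[Théorème~(3.1)]{deheuvels1978}, which we quote.

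For \ref{rate_function_convergence_III}$\Rightarrow$\ref{rate_function_convergence_IV}, note that the condition in \ref{rate_function_convergence_III} is stated along real numbers $\rate_n$ rather than integers. So I would first rewrite it at integer rates. Fix $\bfu$ and put $v_j=u_j^{\theta}$. Then
\begin{align*}
\Copula^{\rate_n}\bigl(u_1^{1/\rate_n},\dots,u_d^{1/\rate_n}\bigr)=\Bigl(\Copula^{n}\bigl(v_1^{(n/\rate_n)/(\theta n)},\dots\bigr)\Bigr)^{\rate_n/n}.
\end{align*}
Since $\rate_n/n\to1/\theta$ and $n/(\theta\rate_n)\to1$, it suffices to show that the exponent perturbation $1/(\theta n)\cdot n/\rate_n$ versus $1/n$ is negligible. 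For this I would use that every copula is $1$-Lipschitz in each argument, together with $|x^{a_n/n}-x^{1/n}|\le C|a_n-1|/n$ for $x$ bounded away from $0$ (handling $u_j=0$ separately, where everything vanishes). This gives
\begin{align*}
\bigl|\Copula^n(\bm{x}^{a_n/n})-\Copula^n(\bm{x}^{1/n})\bigr|\le n\,d\,C|a_n-1|/n\to0,
\end{align*}
using $|a^n-b^n|\le n|a-b|$ for $a,b\in[0,1]$. Hence $\Copula^n(\bm{v}^{1/n})\to\Copula^*(\bm{u})^{\theta}=\Copula^*(\bm{u}^{\theta})=\Copula^*(\bm{v})$ by max-stability of $\Copula^*$. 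Since $\bm{v}$ ranges over all of $[0,1]^d$, this is \ref{rate_function_convergence_IV} with $n_k=k$. (Equivalently, this already proves \ref{rate_function_convergence_I}.) Alternatively one can phrase the exponent comparison through Proposition~\ref{prop_distorted_measure_difference}\ref{prop_max_difference_I} with power distortions $u^{a_n}$, $a_n\to1$, as hinted in the text.

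The main step is \ref{rate_function_convergence_I}$\Rightarrow$\ref{rate_function_convergence_II} for arbitrary real $\rate_n\to\infty$. Set $m_n=\lfloor\rate_n\rfloor$ and write $\rate_n=m_n a_n$ with $a_n\to1$. Then
\begin{align*}
\Copula^{\rate_n}(\bfu^{1/\rate_n})=\bigl(\Copula^{m_n}\bigl((\bfu^{1/a_n})^{1/m_n}\bigr)\bigr)^{a_n}.
\end{align*}
The same Lipschitz/power-distortion estimate (uniform in $\bfu$ once $u_j\ge\delta$, and trivially small otherwise since both sides are $\le\min_j u_j$ up to vanishing errors) lets us replace $\bfu^{1/a_n}$ by $\bfu$ with vanishing error. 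The MDA along the integer subsequence $m_n\to\infty$ then gives convergence to $\Copula^*(\bfu)$, and raising to $a_n\to1$ preserves the limit. The expected obstacle is the uniformity near the boundary $u_j\to0$, where $x^{1/m}$ has unbounded derivative. I would handle it by splitting at a threshold $\delta$: if some $u_j<\delta$, all terms are bounded by roughly $\delta^{1/a_n}$. Otherwise the derivative of $a\mapsto u^{a/m}$ is bounded by $|\log\delta|/m$, and the $n$-fold power amplification is exactly compensated by the factor $1/m$.
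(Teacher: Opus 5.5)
Your proposal is correct and follows essentially the paper's route. You reduce real rates to integer ones: you use $\lfloor \rate_n\rfloor$ with a multiplicative correction $a_n\to 1$ handled by continuity of $y\mapsto y^{a_n}$, where the paper uses $\ceil{\rate_n}$ and the Fr\'echet--Hoeffding lower bound. As in the paper, you absorb argument perturbations via Lipschitz continuity of copulas, absorb the factor $\theta$ in \ref{rate_function_convergence_III} via max-stability of $\Copula^*$, and quote Deheuvels' Th\'eor\`eme~(3.1) for \ref{rate_function_convergence_IV}.

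In \ref{rate_function_convergence_III}$\Rightarrow$\ref{rate_function_convergence_I}, your direct mean-value estimate simply replaces the paper's appeal to Proposition~\ref{prop_distorted_measure_difference} and H\"older continuity. In \ref{rate_function_convergence_I}$\Rightarrow$\ref{rate_function_convergence_II}, the boundary-uniformity worry is unnecessary: only pointwise convergence is claimed, and your bound $|1/a_n-1|\sum_j|\log u_j|$ already gives it.
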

It is important to note that the `for all' in \ref{rate_function_convergence_II}
of Lemma~\ref{rate_function_convergence} cannot be replaced with `there exits at
least one'.  For example, \cite[Example~4.1, p.~430]{KlementMesiarPap2005}
provide an example of a copula $\Copula_0$ and two rate functions
$\rate_n = 2^n$ and $\tilde{\rate}_n = 3\cdot 2^n$ such that
\begin{align*}
\lim_{n\to\infty}
  \Copula_0^{\rate_n}(u^{1/\rate_n},v^{1/\rate_n}) = \Copula_0(u,v)\quad\text{and}\quad
  \lim_{n\to\infty}\Copula_0^{\tilde{\rate}_n}(u^{1/\tilde{\rate}_n},v^{1/\tilde{\rate}_n}) = \Copula_0^3(u^{1/3},v^{1/3}).
\end{align*}
Thus $\Copula_0$ is not in the MDA of \emph{any} EVC since the sequence
$(\Copula_0^{n}(u^{1/n},v^{1/n}))_n$ has two distinct
limit points and hence does not converge; \cite[Exemple~3.1,
p.~14]{deheuvels1978} contains another such example. Rephrased in the context of
Lemma~\ref{rate_function_convergence}, this implies that it is not possible to
conclude from the convergence
$\lim_{n\to\infty}\Copula_0^{\rate_n}(u^{1/\rate_n},v^{1/\rate_n}) =
\Copula^*(u,v)$ for \emph{one specific} rate function $\rate_n$ (other than
those which are asymptotically equivalent to $n$ as established in
Lemma~\ref{rate_function_convergence}~\ref{rate_function_convergence_III}) that
$\Copula_0\in\MDA(\Copula^*)$. An intuitive reason for rate functions
asymptotically equivalent to $n$ directly allowing this conclusion is that in the
limit, when $\rate_n \approx n/\theta$, the limiting extreme value copula is
invariant under the additional power transformation $1/\theta$ and hence all
limits are in agreement.

Equipped with Lemma~\ref{rate_function_convergence} we can now proceed towards the third
main result of this work.
The following Theorem~\ref{theorem_max_domain_of_attraction_Morillas} provides
sufficient conditions for the distorted copula $\Copula_{\pD}$ to fall into the
MDA of an EVC, and clearly identifies this limiting EVC.

In the following we denote by $\EW_{\gamma}$ the \emph{extremal Weibull} distribution with shape parameter $\gamma>0$, that is
\begin{align*}
\EW_{\gamma}(x) = \begin{cases}
  \exp\left(-(-x)^{\gamma}\right), & x\leq 0,\\
  1, & x > 0.
\end{cases}
\end{align*}
As usual, a univariate distribution function $\pF$ is in the \emph{maximum domain of attraction (MDA)}
of a non-degenerate distribution function $H$ (short: $\pF\in\MDA(H)$) if there exists $\an_n>0$, $\bn_n\in\IR$, $n\in\IN$,
such that $\lim_{n\to\infty} \pF^n(\an_n x+\bn_n)=H(x)$, $x\in\IR$; see \cite[Definition~3.3.1, p.~128]{embrechtsklueppelbergmikosch1997}. %
Lastly, the upper endpoint of the support of $\pF$ is denoted by $\support_+(\pF) := \sup\{x\in\IR\colon\pF(x)<1\}$.
\begin{theorem}[MDA for Morillas-type copula distortions]\label{theorem_max_domain_of_attraction_Morillas}
  Let $\Copula$ be a $d$-copula in the MDA of an EVC
  $\Copula^*$, and by $\pD$ a distortion function such that $\pD\in\MDA(\EW_{\gamma})$, $\gamma>0$, and $\support_+(\pD) = 1$.
  \begin{enumerate}
  \item\label{item_distortion_theorem_1} If $\pD$ is such that $\Copula_{\pD}$
    is a $d$-copula, then $\Copula_{\pD}$ is in the MDA of
    $\Copula^*_{\pD^*}$, where $\pD^*(u)=\e^{-(-\log u)^{\gamma}}$, $u\in[0,1]$, is a distortion function.
  \item\label{item_distortion_theorem_2} If $\pD\in\AMDCO{d}$ then
    \ref{item_distortion_theorem_1} holds with the restriction $\gamma\in(0,1]$.
    \end{enumerate}
    In both cases $\gamma=1$ implies $\Copula^*_{\pD^*} = \Copula^*$.
\end{theorem}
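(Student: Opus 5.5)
The plan is to reduce everything to regular variation of a single univariate function, combined with Lemma~\ref{rate_function_convergence}~\ref{rate_function_convergence_II}. Since $\support_+(\pD)=1$ and $\pD\in\MDA(\EW_\gamma)$, the classical characterisation of the Weibull domain gives $1-\pD(1-s)=s^{\gamma}L(s)$ for $s\downarrow0$, with $L$ slowly varying at $0$.

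\emph{Reformulation in log scale.} Define $\varphi(s):=-\log \pD(\e^{-s})$ for $s\ge0$. Then $\varphi(s)\sim 1-\pD(1-s)$ as $s\downarrow 0$, so $\varphi$ is regularly varying at $0$ with index $\gamma$. Its generalized inverse $\varphi^{-1}(y)=-\log\qD(\e^{-y})$ is regularly varying at $0$ with index $1/\gamma$.

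Fix $\bfu\in(0,1)^d$ and write $x_j=-\log u_j$. Then
\begin{align*}
-\log\bigl(\Copula_{\pD}(u_1^{1/n},\dots,u_d^{1/n})^n\bigr)=n\,\varphi\Bigl(-\log \Copula\bigl(\e^{-\varphi^{-1}(x_1/n)},\dots,\e^{-\varphi^{-1}(x_d/n)}\bigr)\Bigr).
\end{align*}
Put $\rate_n:=1/\varphi^{-1}(1/n)\to\infty$, and set $v_j:=\exp(-x_j^{1/\gamma})$. Uniform convergence for regularly varying functions gives $\varphi^{-1}(x_j/n)=x_j^{1/\gamma}(1+o(1))/\rate_n$. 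Hence the $j$th argument of $\Copula$ is $v_j^{(1+o(1))/\rate_n}$.

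\emph{Applying the MDA of $\Copula$.} By Lemma~\ref{rate_function_convergence}~\ref{rate_function_convergence_II}, $\rate_n\bigl(-\log\Copula(v_1^{1/\rate_n},\dots,v_d^{1/\rate_n})\bigr)\to \STDF(x_1^{1/\gamma},\dots,x_d^{1/\gamma})$, where $\STDF$ is the stdf of $\Copula^*$.

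The $o(1)$ perturbations in the exponents have to be absorbed, and this is the step needing care. I would handle it by a sandwich argument. For fixed $\varepsilon>0$ and $n$ large, the arguments lie between $v_j^{(1\pm\varepsilon)/\rate_n}$. Monotonicity of $\Copula$ then bounds the expression between the analogous quantities at $v_j^{1\pm\varepsilon}$. Homogeneity of $\STDF$ gives the limits $(1\pm\varepsilon)\STDF(\cdot)$, and we let $\varepsilon\downarrow0$.

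Finally, $n\varphi(t/\rate_n)\to t^{\gamma}$ locally uniformly in $t>0$, because $n\varphi(1/\rate_n)=1$ (or $\to1$ if $\varphi$ is not continuous, handled via $\varphi(\varphi^{-1}(y))\sim y$) and $\varphi$ is regularly varying with index $\gamma$. Combining these,
\begin{align*}
\Copula_{\pD}(\bfu^{1/n})^n\to\exp\bigl(-\STDF\bigl((-\log u_1)^{1/\gamma},\dots,(-\log u_d)^{1/\gamma}\bigr)^{\gamma}\bigr).
\end{align*}
Since $\pD^{*\,-1}(u)=\e^{-(-\log u)^{1/\gamma}}$, this limit equals $\Copula^*_{\pD^*}(\bfu)$. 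It is an EVC by Corollary~\ref{cor_EVC_to_EVC} when $\gamma\le1$, and in general because $\Copula_{\pD}$ is a copula, so the limit is a copula that is visibly max-stable. Boundary points of $[0,1]^d$ are handled by uniform margins.

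\emph{Part~\ref{item_distortion_theorem_2}.} Theorem~\ref{thm_distortion_master_theorem} shows $\Copula_{\pD}$ is a copula, so it remains only to show $\gamma\le1$. Since $\pD$ is $2$-absolutely monotone, it is convex. Hence the chord slope $s\mapsto(1-\pD(1-s))/s=s^{\gamma-1}L(s)$ is non-increasing in $s$, that is, non-decreasing as $s\downarrow0$. It is therefore bounded below by a positive constant near $0$, which is impossible for index $\gamma-1>0$ because then $s^{\gamma-1}L(s)\to0$. Thus $\gamma\le1$.

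For $\gamma=1$ we have $\pD^*=\id$, so $\Copula^*_{\pD^*}=\Copula^*$.
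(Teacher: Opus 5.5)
Your proof is correct. In part 1 you follow the paper's skeleton: you route through an intermediate rate $r_n\to\infty$ and invoke Lemma~\ref{rate_function_convergence}~\ref{rate_function_convergence_II}. The implementation differs, however. The paper cites Mohan--Ravi for the power-normalised limit $D^n(u^{1/r_n})\to D^*(u)$. It then writes $C_D^n(\bm{u}^{1/n})=D_n\bigl(C_{r_n}(D_n^{-1}(u_1),\dots,D_n^{-1}(u_d))\bigr)$ and closes with convergence of the quantile functions, uniform convergence of copulas, and the composition Lemma~\ref{composition_convergence}. You instead work in log coordinates with $\varphi(s)=-\log D(\mathrm{e}^{-s})$, which is exactly the function $g$ from the proof of Corollary~\ref{cor_EVC_to_EVC}. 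You derive the outer limit $n\varphi(t/r_n)\to t^{\gamma}$ directly from regular variation of $\varphi$ and its inverse. The perturbed inner exponents are absorbed by a monotone sandwich together with homogeneity of the stdf. This makes the argument self-contained, with no power-normalisation theory needed. It also shows that the limiting distortion is just the ``tangent'' power function of $g$ at $0$. The limit comes out explicitly as $\exp\bigl(-\ell((-\log u_1)^{1/\gamma},\dots,(-\log u_d)^{1/\gamma})^{\gamma}\bigr)$, i.e.\ the power transform $\ell_\gamma$ of \eqref{eq_ChatelainFougeresNeslehova2020}. The paper's version, by contrast, stays at the level of copula functions and never needs stdfs in part 1. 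In part 2 your route is genuinely different. The paper argues that $D^*$ must map every EVC to an EVC, because $C_D$ is a copula for every $C$. It then appeals to Corollary~\ref{cor_EVC_to_EVC}, hence to the Pexider-equation characterisation in Theorem~\ref{thm:power:distortions:of:stdfs}. You use only convexity: the chord bound $1-D(1-s)\ge s$ is incompatible with regular variation of index $\gamma>1$. Your argument is far more elementary and needs only $2$-absolute monotonicity. It proves $\gamma\le 1$ for every convex distortion in the Weibull MDA without extra hypotheses, whereas the paper reaffirms this directly only under von Mises-type assumptions. The paper's route, in turn, showcases the link between the MDA result and its stdf characterisation, which is the thematic point of that section.
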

In the proof of Theorem~\ref{theorem_max_domain_of_attraction_Morillas} the
convergence $\pD^{n}(u^{1/{\rate_n}}) \to \pD^*(u)$ for some rate function
$\rate_n$ and limiting distortion $\pD^*$ is crucial. For random variables
$V_1,\dots,V_n\isim\pD$ we can alternatively write
\begin{align*}
\pD^{n}(u^{1/{\rate_n}}) = \Prob(\max(V_1,\dots,V_n) \leq u^{1/\rate_n}),
\end{align*}
so that the assumed convergence falls under the framework of \cite{Pancheva1984} for
power stabilizations of extremes; see also \cite{MohanRavi1993} and \cite[Section~2.6, p.~37~ff.]{BarakatKhaledNigm2019} for
a detailed discussion of this case from the power stabilization point-of-view.
As discussed in the proof, in our specific case power stabilizations are however equivalent to more commonly considered affine stabilizations, see \cite[Theorem~3.1~(e), p.~636]{MohanRavi1993}, allowing us to state our assumptions in terms of the more familiar Weibull MDA.

The diagram in Figure~\ref{fig_MDA_diagram} depicts the structure implied by
Theorem~\ref{theorem_max_domain_of_attraction_Morillas}.
\begin{figure}
  \label{fig_MDA_diagram}
  \centering
  \begin{tikzpicture}[auto]
    \node (P) at (0,0) {$\Copula$};
    \node(Q) at (8,0) {$\Copula^*$};
    \node (B) at (0,-2) {$\Copula_{\pD}$};
    \node (C) at (8,-2) {$\Copula^*_{\pD^*}$};
    \draw[transform canvas={xshift=0.5ex},->] (C) - -(Q) node[right,midway] {$\qDs$};
    \draw[transform canvas={xshift=-0.5ex},->](Q) -- (C) node[left,midway] {$\pD^*$};
    \draw[->](P) to node {$\MDA$}(Q);
    \draw[->] (P) to node {$\pD$} (B);
    \draw[->] (B) - -(C) node[above,midway] {$\MDA$} node[below,midway] {Theorem~\ref{theorem_max_domain_of_attraction_Morillas}};
  \end{tikzpicture}
  \caption{Diagram depicting the structure implied by Theorem~\ref{theorem_max_domain_of_attraction_Morillas}.}
\end{figure}
While the proof of Theorem~\ref{theorem_max_domain_of_attraction_Morillas}~\ref{item_distortion_theorem_2} crucially utilizes the results derived from the investigation of properties of
stdfs in Theorem~\ref{thm:power:distortions:of:stdfs} and
Corollary~\ref{cor_EVC_to_EVC}, stdfs based techniques are not appropriate to
solve the general MDA question since they are by
definition limited only to the context of EVCs.

As an example where \ref{item_distortion_theorem_1} in Theorem~\ref{theorem_max_domain_of_attraction_Morillas} holds for $\pD\not\in\AMDCO{d}$, consider the comonotonic copula $M$ and any bijective distortion $\pD\not\in\AMDCO{d}$ such that $\pD^n(u^{1/\rate_n})\to \pD^{*}(u) = \e^{-(-\log u)^{\gamma}}$, where $\gamma >0$ for some rate $\rate_n$.
In this case $M_D = M_{\pD^{*}} = M$ and hence \ref{item_distortion_theorem_1} trivially  holds, showing that the $d$-absolute monotonicity condition in \ref{item_distortion_theorem_2} is merely sufficient.
The example also provides an instance where we can have $\gamma\neq 1$ and yet $\Copula^*_{\pD^*} = \Copula^*$.
As a concrete example consider the Kumaraswamy distribution with
distribution function $K_{a,\gamma}(x) = 1 - (1 - x^a)^{\gamma}$, $a,\gamma>0$, where we have
$K_{a,\gamma}\in\MDA(\EW_{\gamma})$; see Example~\ref{Kumaraswamy_distortion}.  As such,
any $\gamma\in(0,\infty)$ can be reached for the limit
$\pD^*(u)=\e^{-(-\log u)^{\gamma}}$ in Theorem~\ref{theorem_max_domain_of_attraction_Morillas} and
there is no a-priori restriction of $\gamma$ to $(0,1]$ since the
Kumaraswamy distribution is not $d$-absolutely monotone in general.
It is the additional hypothesis of $d$-absolute monotonicity in Theorem~\ref{theorem_max_domain_of_attraction_Morillas}~\ref{item_distortion_theorem_2} that is in fact responsible
for restricting the possible values of $\gamma\in(0,\infty)$ to $(0,1]$.

Theorem~\ref{theorem_max_domain_of_attraction_Morillas} also has a direct important implication in case the initial copula $\Copula$ is in
the MDA of the Gumbel--Hougaard copula, that is
$\Copula\in\MDA(\GHCopula_{\theta})$, which specifically includes the
independence copula $\Pi$.  Here
$\GHCopula_{\theta}(\bfu) = \exp\bigl(-(\sum_{j=1}^{d} (-\log
u_j)^{\theta})^{1/\theta}\bigr)$ denotes the $d$-dimensional Gumbel--Hougaard
copula with parameter $\theta\geq 1$.  For the Gumbel--Hougaard copula, we have
for $\pD^*(u)=\e^{-(-\log u)^{\gamma}}$ with $\gamma\in(0,\theta]$ that
$(\GHCopula_{\theta})_{\pD^*} = \GHCopula_{\theta/\gamma}$, which shows that
this class of copulas is closed under distortions of this type.  We summarize
this logic in the following corollary.
\begin{corollary}[Distortions in the Gumbel--Hougaard MDA
  remain in the Gumbel--Hougaard MDA]\label{cor_Pi_to_Gumbel}
  For a $d$-copula $\Copula$ such that $\Copula\in\MDA(\GHCopula_{\theta})$ and $\pD\in\MDA(\EW_{\gamma})$, $\gamma\in(0,\theta]$, with $\support_+(\pD)=1$ such that $\Copula_{\pD}$ is a copula (for example if $\pD\in\AMDCO{d}$), we have $\Copula_\pD\in\MDA(\GHCopula_{\theta/\gamma})$. In particular, for
  $\Copula\in\MDA(\Pi)$ we have $\Copula_\pD\in\MDA(\GHCopula_{1/\gamma})$.
\end{corollary}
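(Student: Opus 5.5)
The plan is to reduce the corollary to Theorem~\ref{theorem_max_domain_of_attraction_Morillas}~\ref{item_distortion_theorem_1} and then identify the limit explicitly.

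First, the hypotheses match that theorem directly: $\Copula\in\MDA(\GHCopula_{\theta})$, $\pD\in\MDA(\EW_{\gamma})$, $\support_+(\pD)=1$, and $\Copula_{\pD}$ is a copula. In the case $\pD\in\AMDCO{d}$, Theorem~\ref{thm_distortion_master_theorem}~\ref{thm_distortion_master_theorem_copula} guarantees that $\Copula_{\pD}$ is a copula, and part~\ref{item_distortion_theorem_2} of the theorem forces $\gamma\le 1\le\theta$. The theorem then yields $\Copula_{\pD}\in\MDA\bigl((\GHCopula_{\theta})_{\pD^*}\bigr)$ with $\pD^*(u)=\e^{-(-\log u)^{\gamma}}$.

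Second, I compute $(\GHCopula_{\theta})_{\pD^*}$ explicitly. The inverse of $\pD^*$ is $\qDs(v)=\e^{-(-\log v)^{1/\gamma}}$, so $-\log\qDs(u_j)=(-\log u_j)^{1/\gamma}$. Substituting gives
\begin{align*}
\GHCopula_{\theta}\bigl(\qDs(u_1),\dots,\qDs(u_d)\bigr)=\exp\Bigl(-\Bigl(\sum_{j=1}^d(-\log u_j)^{\theta/\gamma}\Bigr)^{1/\theta}\Bigr).
\end{align*}
Applying $\pD^*$ raises the quantity inside the exponent to the power $\gamma$. This produces $\exp\bigl(-(\sum_j(-\log u_j)^{\theta/\gamma})^{\gamma/\theta}\bigr)=\GHCopula_{\theta/\gamma}(\bfu)$. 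The condition $\gamma\le\theta$ ensures $\theta/\gamma\ge1$, so this is a genuine Gumbel--Hougaard copula. The boundary cases $u_j\in\{0,1\}$ are checked separately and hold trivially.

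Third, for the special case I note that $\Pi=\GHCopula_{1}$. Taking $\theta=1$ requires $\gamma\in(0,1]$ and gives $\Copula_\pD\in\MDA(\GHCopula_{1/\gamma})$.

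I see no real obstacle here. The only point needing care is the parameter range: one must verify that $\gamma\le\theta$ is exactly what makes $\theta/\gamma\ge 1$. In the $\AMDCO{d}$ case this is automatic from $\gamma\le1\le\theta$.
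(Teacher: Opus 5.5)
Your proposal is correct and takes the same route as the paper: the paper applies Theorem~\ref{theorem_max_domain_of_attraction_Morillas}~\ref{item_distortion_theorem_1} and then uses the closure identity $(\GHCopula_{\theta})_{\pD^*}=\GHCopula_{\theta/\gamma}$ for $\gamma\in(0,\theta]$, with $\Pi=\GHCopula_{1}$ giving the special case. Your explicit check of that identity via $\qDs(v)=\e^{-(-\log v)^{1/\gamma}}$, and your remark that $\gamma\le 1\le\theta$ holds automatically when $\pD\in\AMDCO{d}$, supply the calculation the paper leaves implicit.
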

As a consequence, since $\gamma\in(0,1]$ for $\pD\in\AMDCO{d}$ according to
Theorem~\ref{theorem_max_domain_of_attraction_Morillas}~\ref{item_distortion_theorem_2},
a suitable $d$-absolutely monotone distortion never decreases upper-tail
dependence for any $\Copula \in \MDA(\GHCopula_{\theta})$; see
\cite[Section~5.4, pp.~214]{Nelsen2006} for the relevant definitions and a
derivation of the fact that the upper tail dependence coefficient of the Gumbel
copula $\GHCopula_{\theta}$ is given as $2-2^{1/\theta}$.  In case of
$\Copula \in \MDA(\Pi)$, asymptotic independence is even transformed into
Gumbel--Hougaard type extremal dependence.  This effect is visible in
Figure~\ref{fig_Clayton_zeta}, where the initial Clayton copula is in the MDA of
$\Pi$ while the transformed copula clearly exhibits upper tail dependence.
According to Corollary~\ref{cor_Pi_to_Gumbel} the limiting Gumbel--Hougaard
copula has in this case a parameter value of $2$ since the utilized zeta
distortion satisfies $\pD_s\in\MDA(\EW_{0.5})$ for the chosen parameter $s=1.5$,
see Example~\ref{example_zeta_distortion}, and hence the upper tail dependence
coefficient changes from zero in the undistorted model to
$2-\sqrt{2} \approx 0.5857864$ in the distorted model.  As a second example, the
normal copula is in the MDA of $\Pi$, and therefore any
$d$-absolutely monotone distortion leads to the MDA of
the Gumbel--Hougaard copula.  More importantly, since any Archimedean copula
$\Copula_{\gen}$ is formally a Morillas-type transformation of the independence
copula $\Pi$, see \cite[p.~170]{morillas2005b}, \cite[Corollary~7,
p.~5]{durrleman2000simple} and \cite[p.~395]{GenestRivest2001}, we know that
$\Copula_{\gen}=(\Pi)_{\pD_\gen}$ is in the MDA of the
Gumbel--Hougaard \emph{as long as} the associated distortion $\pD_{\gen}$
satisfies the assumptions of
Theorem~\ref{theorem_max_domain_of_attraction_Morillas}.  While it is therefore
at this moment not clear if our discussion covers all Archimedean copulas (we
provide details pertaining to the Archimedean case in
Section~\ref{sec_Archimedean_connection}), this finding re-affirms the fact that
Archimedean copulas are typically in the MDA of the
Gumbel--Hougaard copula; see \cite[Statement~B, p.~208]{GenestRivest1989}.  We
provide additional arguments to fully recover Statement B of
\cite{GenestRivest1989} in Corollary~\ref{cor_Archimedean_max_domain} below.
\subsection{Choice of the rate function $\rate_n$: Asymptotic equivalence}
The condition $\pD\in\MDA(\EW_{\gamma})$ with $\support(\pD)=1$ in Theorem~\ref{theorem_max_domain_of_attraction_Morillas} guarantees the convergence $\pD^n(u^{1/\rate_n})\to\pD^*(u)$ to a non-degenerate distortion $\pD^*$.
As established in \cite[Theorem~2.2, p.~634, Theorem~3.1~(e), p.~636]{MohanRavi1993} one can for this purpose utilize the rate function
\begin{align}\label{eq_rate}
\rate_n := \frac{1}{-\log\qD(1-1/n)},
\end{align}
making the results in Theorem~\ref{theorem_max_domain_of_attraction_Morillas} widely applicable.
While these type of stabilizations are crucial to Theorem~\ref{theorem_max_domain_of_attraction_Morillas}, it is however not
immediately clear if there can be different rate functions that stabilize the
same distortion function $\pD$ in the assumed fashion. While this can be the case, all stabilizing
rate functions leading to continuous limits are necessarily asymptotically equivalent, meaning that for any
two admissible rate functions $\rate_n$ and $\tilde{\rate}_n$, there is a
$\theta\in(0,\infty)$ such that
$\rate_n / \tilde{\rate}_n \to \theta\in(0,\infty)$ and the connection between
the respective limits $\pD^*$ and $\widetilde{\pD}^*$ is given
in~\eqref{eq_connection_limiting_distortions} below.

To see this consider rates $\rate_n$ and $\tilde{\rate}_n$ such that
$\pD^n(u^{1/\rate_n})\to\pD^*(u)$, and also
$\pD^n(u^{1/\tilde{\rate}_n})\to\widetilde{\pD}^*(u)$ where we assume that $\pD^*$ and $\widetilde{\pD}^*$ are continuous distortion functions with $\pD^*(0)=0=\widetilde{\pD}^*(0)$ and $\pD^*(1)=1=\widetilde{\pD}^*(1)$.
Given that $\pD^*$ and $\widetilde{\pD}^*$ are continuous, the assumed convergence is in fact uniform in both cases, see \cite[p.~1]{Resnick1987}.
If $\rate_n / \tilde{\rate}_n$ is
convergent in $[0,\infty]$, we first show that
$\rate_n / \tilde{\rate}_n \to \{0,\infty\}$ is not possible.  To this end,
assume first that $\rate_n / \tilde{\rate}_n \to 0$, which then implies for
$u\in[0,1]$ that $u_n := u^{\rate_n/\tilde{\rate}_n} \to \ind{(0,1]}(u)$.
With $\pD^*(0)=0$ and $\pD^*(1)=1$, the uniform convergence
$\pD^n(u^{1/\rate_n})\to\pD^*(u)$ implies that
\begin{align*}
  \widetilde{\pD}^*(u) =
  \lim_{n\to\infty} \pD^n(u^{1/\tilde{\rate}_n})
  = \lim_{n\to\infty} \pD^n\left(\left(u^{\rate_n/\tilde{\rate}_n}\right)^{1/\rate_n}\right) = \pD^*(\ind{(0,1]}(u)) = \ind{(0,1]}(u).
\end{align*}
However, this contradicts that $\widetilde{\pD}^*$ is continuous on $[0,1]$.  In
case $\rate_n / \tilde{\rate}_n \to \infty$ a similar calculation shows
$\widetilde{\pD}^*(u) = \ind{\{1\}}(u)$, $u\in[0,1]$, which again contradicts
the assumed continuity of $\widetilde{\pD}^*$. Now consider the case where
$\lim_{n\to\infty} \rate_n / \tilde{\rate}_n = \theta \in(0,\infty)$. Then
$u_n := u^{\rate_n/\tilde{\rate}_n} \to u^{\theta}$ and therefore
\begin{align}\label{eq_connection_limiting_distortions}
\widetilde{\pD}^*(u)
= \lim_{n\to\infty} \pD^n(u^{1/\tilde{\rate}_n})
= \lim_{n\to\infty} \pD^n\left(\left(u^{\rate_n/\tilde{\rate}_n}\right)^{1/\rate_n}\right)
 = \pD^*(u^\theta),
\end{align}
where, as before, the uniform convergence $\pD^n(u^{1/\rate_n})\to\pD^*(u)$ is
utilized. Note how the resulting relationship
$\widetilde{\pD}^*(u) = \pD^*(u^\theta)$ is perfectly in line with the theory
established in Theorem~\ref{theorem_max_domain_of_attraction_Morillas}~\ref{item_distortion_theorem_1} in the following sense:
Following Theorem~\ref{theorem_max_domain_of_attraction_Morillas}~\ref{item_distortion_theorem_1} we have on the one hand $\Copula_{\pD}\in\MDA(\Copula^*_{\widetilde{\pD}^*})$ when using $\tilde{\rate}_n$, while on the other hand simultaneously that $\Copula_{\pD}\in\MDA(\Copula^*_{\pD^*})$ when using $\rate_n$.
This apparent contradiction is resolved when noting that $\widetilde{\pD}^* = \pD^*(u^\theta) = \pD^* \circ (\cdot)^{\theta}$ and hence, using the distortion composition property mentioned in Section~\ref{sec_properties_distortions}, we have
$
\Copula^*_{\widetilde{\pD}^*}(\bfu) = (\Copula^*_{(\cdot)^{\theta}})_{\pD^*}(\bfu) = \Copula^*_{\pD^*}(\bfu)
$,
where $\Copula^*_{(\cdot)^{\theta}} = \Copula^*$ holds by virtue of $\Copula^*$ being an EVC.

In case $\rate_n / \tilde{\rate}_n$ is \emph{not} convergent in $[0,\infty]$,
the sequence $u_n := 2^{-\rate_n/\tilde{\rate}_n}$ is non-convergent and
bounded, %
which means there are at least two distinct limit points $c_1$ and $c_2$. We
distinguish two cases. First, at least one limit point is in the open
interval $(0,1)$, where we assume wlog $c_1\in(0,1)$, and second, that no limit
point is in $(0,1)$, where we then assume wlog that $c_1=1$ and $c_2=0$.

In the first case, there are sub-sequences $(u_{n_i})$ and $(u_{m_i})$ such that
$u_{n_i}\to c_1\in(0,1)$ and $u_{m_i}\to c_2\in[0,1]\setminus\{c_1\}$.
Furthermore, since $\pD^*$ is continuous, $\pD^*(0) = 0$ and $\pD^*(1)=1$, there
is an $u^*\in(0,1)$ and an $\varepsilon>0$ such that $\pD^*$ is strictly
increasing on $(u^*-\varepsilon,u^*+\varepsilon)$.  For
$\alpha = \log(u^*)/\log(c_1)$ we then have for $v := 2^{-\alpha}\in(0,1)$ and
$v_n = v^{\rate_n/\tilde{\rate}_n} = u_n^{\alpha}$ that
$v_{n_i} \to c_1^{\alpha} = u^*$.  Similarly $v_{m_i}\to c_2^{\alpha}\neq u^*$.
The uniform convergence $\pD^n(u^{1/\rate_n})\to\pD^*(u)$ now implies that
$\lim_{n_i\to\infty} \pD^{n_i}(v^{1/\tilde{\rate}_{n_i}}) = \lim_{n_i\to\infty}
\pD^{n_i}(v_{n_i}^{1/\rate_{n_i}}) = \pD^*(u^*)$ and that
$\lim_{m_i\to\infty} \pD^{m_i}(v^{1/\tilde{\rate}_{m_i}}) = \lim_{m_i\to\infty}
\pD^{m_i}(v_{m_i}^{1/\rate_{m_i}}) = \pD^*(c_2^{\alpha})$.  Given that $\pD^*$
is non-decreasing and strictly increasing at $u^*$ we have
$\pD^*(c_2^{\alpha}) \neq \pD^*(u^*)$. However, this implies that for infinitely
many $n$, $\pD^n(v^{1/\tilde{\rate}_{n}})$ is arbitrarily close to
$\pD^*(c_2^{\alpha})$ and also to $\pD^*(u^*)$ which contradicts the assumed
convergence
$\lim_{n\to\infty} \pD^n(v^{1/\tilde{\rate}_{n}}) \to \widetilde{\pD}^*(v)$.

In the second case, a similar argument shows that
$\pD^n(2^{-1/\tilde{\rate}_{n}})$ is arbitrarily close to $0$ and also
arbitrarily close to $1$, which again leads to a contradiction to the
convergence
$\lim_{n\to\infty} \pD^n(2^{-1/\tilde{\rate}_{n}}) \to \widetilde{\pD}^*(1/2)$.

In summary, this shows that rate functions, stabilizing in the discussed fashion
to a continuous limit, are asymptotically equivalent, which specifically covers
the case relevant to Theorem~\ref{theorem_max_domain_of_attraction_Morillas}.
If an alternative rate function $\tilde{\rate}_n$ was established leading to an
alternative limit $\widetilde{\pD}^*$ we necessarily have
$\rate_n / \tilde{\rate}_n \to \theta\in(0,\infty)$.  Furthermore, by
\eqref{eq_connection_limiting_distortions}, the limit $\widetilde{\pD}^*$ is
given by $\widetilde{\pD}^*(u) = \pD^*(u^{\theta}) = \e^{-(-\log u^{\theta})^{\gamma}} = \e^{-\theta^{\gamma}(-\log u^{\theta})^{\gamma}} = \e^{-\lambda(-\log u^{\theta})^{\gamma}}$,
where $\lambda = \theta^{\gamma}$.  The appearance of the additional parameter
$\lambda>0$ is not problematic.  On the one hand, the additional parameter is in
line with results established in Section~\ref{sec_trafo}, specifically
Corollary~\ref{cor_EVC_to_EVC}, which is utilized in the proof of
Theorem~\ref{theorem_max_domain_of_attraction_Morillas}~\ref{item_distortion_theorem_2}.
On the other hand, for the limiting EVCs, $\lambda$ is a nuisance
parameter. Indeed, a direct calculation shows that
$\Copula^*_{\pD^*} = \Copula^*_{\widetilde{\pD}^*}$ due to the fact that EVCs
are invariant under power transformations and
$\widetilde{\pD}^*(u) = \left(\pD^*(u)\right)^{\lambda}$.
\subsection{Maximum domain of attraction for $d$-absolutely monotone and absolutely monotone distortions}
In this section we focus on the MDA and related results for the central classes of $d$-absolutely monotone and absolutely monotone distortions.
Although distortion functions in the Weibull MDA constitute a
large class of distortion functions, making Theorem~\ref{theorem_max_domain_of_attraction_Morillas} widely applicable, there also exist distortions such as
$\pD(u) = 1-\e^{-cu/(1-u)}$, $c>0$, which belong to the MDA of the Gumbel distribution $\Lambda(x)=\exp(-\e^{-x})$, see for
example \cite[Example~3.3.22,
p.~139]{embrechtsklueppelbergmikosch1997}. However, the latter example fails to
be $2$-absolutely monotone since the stated $\pD$ is non-convex on the interval
$u\in(\max(1-c/2,0),1)$. This is not a peculiarity of this example, but instead
a general feature of distributions in the Gumbel MDA. To see this, recall
the following abbreviated version of \cite[Corollary~2.5.3, p.~100]{deHaan1970},
which states that for a cumulative distribution function $\pF$ with finite right
endpoint $x_+ < \infty$, we have that if $\pF\in\MDA(\Lambda)$, then
\begin{align*}
  \lim_{x\nearrow x_+} \frac{\log(1-\pF(x))}{\log(x_+ - x)} = \infty.
\end{align*}
This result directly leads to the following corollary, ruling out that
$d$-absolutely monotone distortions fall into the Gumbel MDA.
\begin{corollary}[Distortions approaching the upper support endpoint below the identity cannot be in the Gumbel MDA]\label{cor_d_abs_monotone_not_Gumbel}
If $\pD$ is a distortion function such that $\pD(u) \leq u$ for $u\in(1-\varepsilon,1)$ and some $\varepsilon>0$, then
  $\pD\not\in\MDA(\Lambda)$. In particular, no $\pD\in\AMDCO{d}$ can be in the
  Gumbel MDA for any $d\geq 2$.
\end{corollary}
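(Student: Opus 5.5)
The plan is a proof by contradiction based on the quoted result of \cite[Corollary~2.5.3, p.~100]{deHaan1970}.

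Since $\pD$ is a distortion function with $\pD(u)\le u<1$ on $(1-\varepsilon,1)$, its upper support endpoint is $x_+=\support_+(\pD)=1<\infty$. Suppose, for a contradiction, that $\pD\in\MDA(\Lambda)$. By de Haan's result,
\begin{align*}
  \lim_{u\nearrow 1}\frac{\log(1-\pD(u))}{\log(1-u)}=\infty .
\end{align*}
For $u\in(1-\varepsilon,1)$ we have $1-\pD(u)\ge 1-u>0$, so $\log(1-\pD(u))\ge\log(1-u)$. As $u\nearrow1$ we may assume $\log(1-u)<0$; dividing by it reverses the inequality, which gives
\begin{align*}
  \frac{\log(1-\pD(u))}{\log(1-u)}\le 1 .
\end{align*}
If $\pD(u)=1$ for some $u<1$, the assumption $\pD(u)\le u$ is already violated, so the numerator is finite. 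The ratio is therefore bounded above by $1$ near $1$, which contradicts divergence to $\infty$. Hence $\pD\notin\MDA(\Lambda)$.

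For the second statement, take $\pD\in\AMDCO{d}$ with $d\ge2$. Then $\pD$ is in particular $2$-absolutely monotone, hence convex on $[0,1]$ (\cite[p.~58]{Ressel2012}). It is also continuous at $1$ with $\pD(0)=0$ and $\pD(1)=1$. Convexity gives, for every $u\in[0,1]$,
\begin{align*}
  \pD(u)=\pD\bigl((1-u)\cdot0+u\cdot1\bigr)\le(1-u)\pD(0)+u\pD(1)=u .
\end{align*}
If one prefers to avoid evaluating convexity at the endpoint $0$, use the continuous extension to $0$ mentioned in Section~\ref{sec_properties_distortions}, whose value is $\pD(0)=0$. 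Thus the hypothesis of the first part holds on all of $(0,1)$, and the first part yields the claim.

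The main point requiring care is checking that de Haan's corollary applies in the form used, namely with $x_+=1$. This follows because $\pD(u)<1$ for all $u<1$ near $1$. Everything else is routine sign bookkeeping with logarithms.
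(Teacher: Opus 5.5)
Your proof is correct and follows essentially the same route as the paper: bound $\log(1-\pD(u))/\log(1-u)\le 1$ on $(1-\varepsilon,1)$, which contradicts the divergence to $\infty$ required by de Haan's Corollary~2.5.3. For $\pD\in\AMDCO{d}$, $d\ge 2$, you use convexity together with $\pD(0)=0$ and $\pD(1)=1$ to get $\pD(u)\le u$; your additional check that the upper endpoint is $x_+=1$ is a welcome detail that the paper leaves implicit.
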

While Corollary~\ref{cor_d_abs_monotone_not_Gumbel} rules out the Gumbel MDA for $d$-absolutely monotone distortions, the Fréchet MDA is equally ruled out due to
the finite right endpoint of the support of distortions, leaving the Weibull MDA as the only possible MDA.
To counter the intuition that therefore all admissible $d$-absolutely
monotone distortions belong to the Weibull MDA we
explicitly provide a strictly increasing $2$-absolutely monotone distortion
outside the regular variation framework in the following example.
We then move on to discuss sufficient conditions for a $d$-absolutely monotone distortion to actually fall into the Weibull MDA.
\begin{example}[A $2$-absolutely monotone distortion without regularly varying tails]\label{example_distortion_non_RV_tail}
  An example of a strictly increasing $\pD\in\AMDCO{2}$ that is neither in the
  Gumbel nor in the Weibull MDA is given by
  \begin{equation}
    \pD(u) = \begin{cases}
      \frac{u}{2(1-\e^{-2})}, & u \in [0,1-\e^{-2}),\\
      1 + \frac{1}{\log(1-u)}, & u \in [1-\e^{-2},1];
    \end{cases}\label{eq_distortion_non_RV_tail}
  \end{equation}
  see Figure~\ref{fig_distortion_non_RV_tail} (right) for an illustration.
  \begin{figure}
    \begin{center}
      \includegraphics[width=0.95\textwidth]{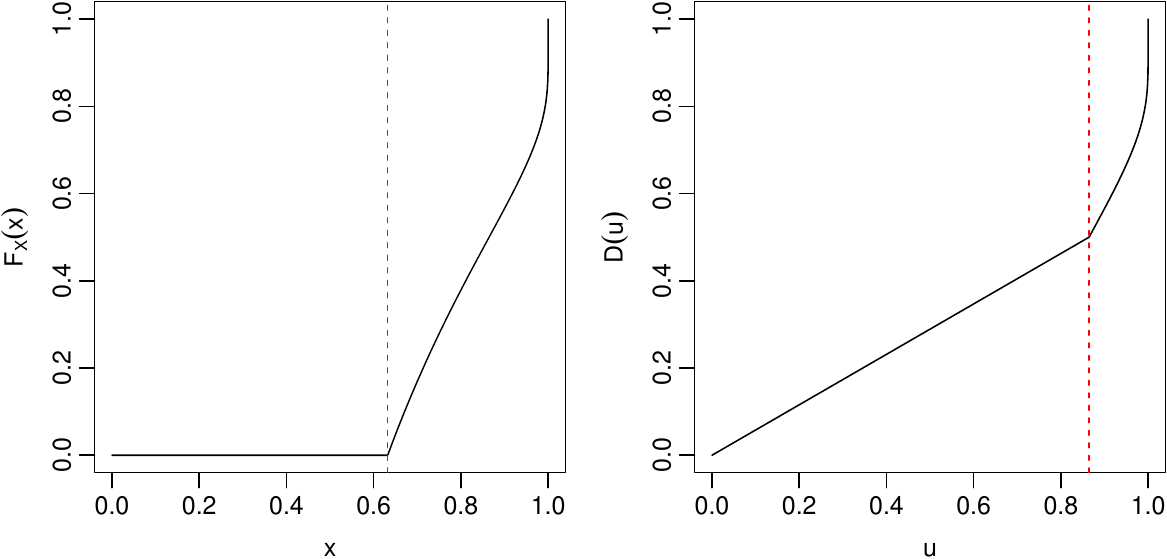}
    \end{center}
    \caption{Distribution function $\pF_X$ (left) and distortion function $\pD$
      (right) of Example~\ref{example_distortion_non_RV_tail}. The red dashed
      vertical lines indicate $1-\e^{-1}$ (left) and $1-\e^{-2}$ (right).}
    \label{fig_distortion_non_RV_tail}
  \end{figure}
  To see that there is no $\gamma>0$ such that $\pD\in\MDA(\Psi_{\gamma})$,
  denote by $X$ a random variable with distribution function
  $\pF_X(x) = 1 + 1/\log(1-x)$ for $x\geq 1-\e^{-1}$; see
  Figure~\ref{fig_distortion_non_RV_tail} (left).  Defining $Y = 1/(1-X)$, we
  then have $\pF_Y(x) = 1 - 1/\log(x)$ for $x\geq \e$, and, as noted in
  \cite[Section~2.4.4, p.~67]{BeirlantGoegebeurTeugelsSegers2004}, that
  $\pF_X\in\MDA(\Psi_{\gamma})$ if and only if $\pF_Y\in\MDA(\Phi_{\gamma})$, where $\Phi_{\gamma}$ denotes the Fréchet distribution with parameter $\gamma>0$, that is $\Phi_{\gamma}(x) = \exp(-x^{-\gamma})$ for $x\geq 0$ and zero otherwise.
  However, for the log-Pareto distribution $\pF_Y$ we have
  that $\pF_Y$ is neither in the MDA of the Fréchet nor of
  the Gumbel distribution (see \cite[Example~1.3.3, p.~14, Example~2.6.1,
  p.~96]{Galambos1987} and \cite[p.~79-80]{FalkHuslerReiss2011} for an
  additional discussion), and therefore $\pF_X$ is not in the MDA of the Weibull distribution.
  Given that $\pD$ defined in \eqref{eq_distortion_non_RV_tail} is equal to
  $\pF_X$ on $[1-\e^{-2},1]$, the same is true for $\pD$.  At the same time, $\pD$
  is continuous, strictly increasing with $\pD(0)=0$ and $\pD(1)=1$; see also
  Figure~\ref{fig_distortion_non_RV_tail} (right). The convexity of $\pD$
  finally follows from the fact that
  \begin{align*}
    \frac{\d^2}{\d x^2} \pF_X(x) = \frac{\log(1 - x) + 2}{(1 - x)^2 \log^3(1 - x)}\ge 0,\quad x \geq 1-\e^{-2},
  \end{align*}
  which implies that $\pF_X$ is convex on $[1-\e^{-2},1]$.  Furthermore, $\pD$ is constructed
  in such a way that the linear part on $[0,1-\e^{-2})$ preserves the convexity
  on all of $[0,1]$.  As such, we have $\pD\in\AMDCO{2}$. From
  Corollary~\ref{cor_d_abs_monotone_not_Gumbel} it is then clear that
  $\pD\not\in\MDA(\Lambda)$.

  At this point, it is not clear if there is a distortion $\pD\in\AMDCO{d}$,
  $d\in\{3,\dots\}\cup\{\infty\}$, that fails to be in the Weibull MDA. It is
  possible that the more restrictive $d$-absolute monotonicity for $d\geq 3$
  will force distortions into a regular variation framework, but we leave this
  point for further research.
\end{example}

A notable feature in the counter-example in
Corollary~\ref{cor_d_abs_monotone_not_Gumbel} is that the density is unbounded
at the right endpoint.  For distortion functions with non-zero, bounded
densities at the right endpoint one can indeed directly establish that they
belong to the Weibull MDA with an extreme value index equal
to one.  In case of $2$-absolutely monotone distortions the convexity then also
implies that the derivative is non-decreasing and hence non-zero at the right
endpoint, and in this case a bounded derivative at the right endpoint is hence
sufficient to apply the following proposition.

\begin{proposition}[Distortions with non-zero, bounded densities at the right
  endpoint belong to the Weibull MDA]\label{prop_D_in_Weibull_MDA}
  Let $\pD$ be an absolutely continuous distortion function such that the
  density at the right endpoint $\dD(1)=\lim_{u\nearrow 1}\dD(u)$ exists and
  $0 < \dD(1) < \infty$.  Then $\pD \in \MDA(\Psi_1)$. In particular, for
  absolutely continuous $\pD\in\AMDCO{2}$ we have that if $\dD(1) < \infty$,
  then $\pD\in\MDA(\Psi_1)$.
\end{proposition}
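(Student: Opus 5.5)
The plan is to verify the classical Weibull domain-of-attraction criterion: a distribution function $\pF$ with finite right endpoint $x_+$ satisfies $\pF\in\MDA(\EW_\gamma)$ if and only if $1-\pF(x_+-1/x)$ is regularly varying at infinity with index $-\gamma$. Equivalently, $\lim_{t\searrow 0}(1-\pF(x_+-tx))/(1-\pF(x_+-t))=x^{\gamma}$ for all $x>0$; see \cite[Theorem~3.3.12]{embrechtsklueppelbergmikosch1997}. Here $x_+=\support_+(\pD)=1$. We have $\pD(u)<1$ for $u<1$: otherwise $\dD=0$ on a left neighbourhood of $1$, contradicting $\dD(1)>0$. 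We therefore need to show that $\bar\pD(t):=1-\pD(1-t)$ satisfies $\bar\pD(tx)/\bar\pD(t)\to x$ as $t\searrow 0$, which gives $\gamma=1$.

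By absolute continuity, $\bar\pD(t)=\int_{1-t}^1\dD(s)\d s$. Since $\dD(s)\to\dD(1)=:c\in(0,\infty)$ as $s\nearrow 1$, for every $\varepsilon>0$ there is $\delta>0$ such that $\abs{\dD(s)-c}\le\varepsilon$ on $(1-\delta,1)$. Hence for $t<\delta$ we have $(c-\varepsilon)t\le\bar\pD(t)\le(c+\varepsilon)t$, so $\bar\pD(t)/t\to c$. Consequently $\bar\pD(tx)/\bar\pD(t)=\bigl(\bar\pD(tx)/(tx)\bigr)\,x\,\bigl(t/\bar\pD(t)\bigr)\to x$, using $c>0$ to divide. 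The only real care needed is the choice of version of the density. The hypothesis is a limit statement, so it is interpreted as essential convergence, and the integral bounds hold regardless of the null set. Alternatively, one can give explicit constants: $a_n=1/(cn)$, $b_n=1$, and $\pD^n(1+x/(cn))=(1-\bar\pD(-x/(cn)))^n\to\e^{x}=\EW_1(x)$ for $x\le 0$, using $\bar\pD(t)=ct(1+o(1))$. This direct computation avoids citing the characterization theorem, and I would present it this way.

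For the ``in particular'' part, let $\pD\in\AMDCO{2}$ be absolutely continuous. Then $\pD$ is convex, so its right derivative $\dD$ is non-decreasing. The limit $\dD(1)=\lim_{u\nearrow1}\dD(u)$ therefore exists in $[0,\infty]$, and by assumption it is finite. Positivity is the main point to check. If $\dD(1)=0$, monotonicity and non-negativity give $\dD\equiv 0$ on $(0,1)$, hence $\pD(1)=\pD(0)+\int_0^1\dD=0$, contradicting $\pD(1)=1$. In fact $\dD(1)\ge\int_0^1\dD=1$. Thus $0<\dD(1)<\infty$ and the first part applies. Absolute continuity is automatic here anyway, since a convex function on $[0,1]$ that is continuous at $1$ and has a finite left derivative there is Lipschitz. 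I expect no serious obstacle; the main subtlety is the existence and positivity of $\dD(1)$, which monotonicity of the derivative of a convex function settles.
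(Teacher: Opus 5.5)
Your proof is correct, and it takes a slightly different route from the paper's.

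The paper verifies the regular-variation criterion for the Weibull MDA by applying l'H\^opital's rule directly to $\bigl(1-\pD(1-1/(tx))\bigr)/\bigl(1-\pD(1-1/x)\bigr)$. This turns the ratio into $t^{-1}\dD(1-1/(tx))/\dD(1-1/x)\to t^{-1}$. For the $\AMDCO{2}$ case it argues, as you do, that the density is non-decreasing and positive somewhere, hence $\dD(1)>0$.

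You instead write $1-\pD(1-t)=\int_{1-t}^1\dD(s)\d s$ and squeeze to obtain $1-\pD(1-t)\sim\dD(1)\,t$. Both the regular variation with index $-1$ and the explicit norming constants $\an_n=1/(\dD(1)\,n)$, $\bn_n=1$ follow from this.

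Your route has several advantages:
\begin{enumerate}
\item It uses the density only as an integrable function with a limit at $1$. It is therefore valid for a merely absolutely continuous $\pD$, whereas l'H\^opital formally requires $\pD$ to be differentiable in a left neighbourhood of $1$.
\item It checks explicitly that $\support_+(\pD)=1$, which the paper leaves implicit.
\item It yields the sharper bound $\dD(1)\ge 1$ in the convex case, together with the remark that absolute continuity is then automatic.
\end{enumerate}
The paper's argument is shorter and matches the l'H\^opital computations used elsewhere in the paper, for instance in the zeta example. The price is that it tacitly assumes pointwise differentiability of $\pD$ near the right endpoint.
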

While the fact that $d$-absolute monotonicity restricts $\gamma$ to $(0,1]$ in Theorem~\ref{theorem_max_domain_of_attraction_Morillas} is proven using results for stdfs derived in Section~\ref{sec_trafo}, it can be reaffirmed using direct calculations
leading to a result in similar vein under additional assumptions.
\begin{example}[{Derivative based conditions for $\pD\in\MDA(\Psi_{\gamma})$, $\gamma\in(0,1]$}]
  Assume $\pD$ is a distortion function with derivatives $\pD'$ and $\pD''$ such
  that $\pD'(u) > 0$ on $(1-\varepsilon,1)$ for some
  $\varepsilon >0$. Furthermore, assume $\lim_{u\nearrow 1}(1-u)\pD'(u) = 0$ and
  \begin{align}\label{eq_condition_vonMises}
    \lim_{u\nearrow 1} \frac{(1-u)\pD''(u)}{\pD'(u)} = \kappa \in [0,1).
  \end{align}
  By L'Hôpital's rule, %
  \begin{align*}
    \lim_{u \nearrow 1} \frac{(1 - u) \pD'(u)}{1-\pD(u)} = \lim_{u \nearrow 1} \frac{-\pD'(u) + (1-u)\pD''(u)}{-\pD'(u)} = 1 - \lim_{u \nearrow 1} \frac{(1-u)\pD''(u)}{\pD'(u)} = 1-\kappa \in(0,1],
  \end{align*}
  which verifies the von Mises condition (see \cite[Corollary 3.3.13,
  p.~136]{embrechtsklueppelbergmikosch1997}) and thus establishes that
  $\pD\in\MDA(\Psi_{1-\kappa})$. In particular, for $\pD\in\AMDCO{d}$ with
  existing derivatives $\pD'\geq 0$ and $\pD''\geq 0$, the limit in
  \eqref{eq_condition_vonMises}, if it exists, cannot be negative.  As such, if
  the limit in \eqref{eq_condition_vonMises} holds, we directly have
  $\pD\in\MDA(\Psi_{\gamma})$, $\gamma\in(0,1]$, in line with Theorem~\ref{theorem_max_domain_of_attraction_Morillas}~\ref{item_distortion_theorem_2}.  As a concrete
  example, take $\pD=u\e^{\lambda(u-1)}$ as the probability generating function
  of $N = X+1$, where $X\sim\lpois(\lambda)$, $\lambda>0$.
  Proposition~\ref{prop_pgf_representation} guarantees $\pD\in\AMDCO{\infty}$,
  \begin{align*}
    \pD'(u)= \e^{\lambda(u-1)} (1+\lambda u) > 0 \quad\text{and}\quad
    \pD''(u)=\lambda \e^{\lambda(u-1)} (2+\lambda u) > 0,
  \end{align*}
  fulfilling all necessary conditions. With
  \begin{align*}
    \kappa = \lim_{u \nearrow 1} \frac{(1 - u) \pD''(u)}{\pD'(u)} &= \lim_{u \nearrow 1} \frac{(1-u)\lambda \e^{\lambda(u-1)} (2+\lambda u)}{\e^{\lambda(u-1)} (1+\lambda u)} = \lim_{u \nearrow 1} \frac{(1-u)\lambda  (2+\lambda u)}{(1+\lambda u)} = 0,
  \end{align*}
  we obtain that $\pD\in\MDA(\Psi_{1})$.
\end{example}
When considering Theorem~\ref{theorem_max_domain_of_attraction_Morillas}, specifically \ref{item_distortion_theorem_2},
it is interesting to ask in which cases we have $\gamma = 1$, leading to
$\pD^*(u) = u$ and implying that $\Copula$ and $\Copula_{\pD}$ are in the same
MDA while at the same time representing different dependence
structures (since there is no reason that $\pD$ must be the identity).  Another
interesting question concerns examples in which the full range $\gamma\in(0,1]$
can be reached. For absolute monotone distortions, we have the following first
result as a consequence of Proposition~\ref{prop_D_in_Weibull_MDA}.
\begin{corollary}[MDA for absolutely monotone distortions with finite expectation of the discrete generator]\label{corollary_abs_monotone_domain_of_attraction}
  Let $\pD\in\AMDCO{\infty}$ with an associated representation
  $\pD(u) = \sum_{n=1}^{\infty}p_n u^n$, where $p_n = \Prob(N=n)$ for a
  non-degenerate discrete random variable $N$ supported on $\IN$. If
  $\E[N]<\infty$, then $\pD\in\MDA(\Psi_1)$.
\end{corollary}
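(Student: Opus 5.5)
The plan is to reduce the statement to Proposition~\ref{prop_D_in_Weibull_MDA}. By Proposition~\ref{prop_pgf_representation}, $\pD(u)=\sum_{n=1}^\infty p_n u^n$ on $[0,1]$. Term-by-term differentiation of this power series is valid on $[0,1)$. It gives a density $\dD(u)=\sum_{n=1}^\infty n p_n u^{n-1}$, which is non-negative and non-decreasing on $[0,1)$. Hence $\pD$ is absolutely continuous: it is $C^1$ on $[0,1)$ and continuous at $1$, so $\pD(u)=\int_0^u\dD(t)\d t$ holds on $[0,1]$ by monotone convergence.

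Next I identify the left limit of the density at $1$. Because $\dD$ is non-decreasing, $\lim_{u\nearrow1}\dD(u)$ exists in $[0,\infty]$. Monotone convergence of the series applied to $n p_n u^{n-1}\nearrow n p_n$ gives
\begin{align*}
\dD(1)=\lim_{u\nearrow 1}\dD(u)=\sum_{n=1}^\infty n p_n=\E[N].
\end{align*}
Under the hypothesis $\E[N]<\infty$ this limit is finite; this is also consistent with Corollary~\ref{cor_bounded_density}. It is strictly positive, since $\E[N]\ge 1$ for $N$ supported on $\IN$. Therefore $0<\dD(1)<\infty$, and Proposition~\ref{prop_D_in_Weibull_MDA} yields $\pD\in\MDA(\Psi_1)$. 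Alternatively, one may invoke the second part of that proposition directly: $\pD\in\AMDCO{\infty}\subseteq\AMDCO{2}$, $\pD$ is absolutely continuous, and $\dD(1)<\infty$.

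The only delicate step is the interchange of limit and sum at $u=1$, which monotone convergence handles. Non-degeneracy of $N$ is not needed for the argument. If $N\equiv 1$, then $\pD=\id$ and the conclusion still holds, since the uniform distribution lies in $\MDA(\Psi_1)$.
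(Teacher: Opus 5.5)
Your proposal is correct and follows the paper's proof: differentiate the power series, identify $\dD(1)=\lim_{u\nearrow1}\dD(u)=\E[N]\in(0,\infty)$, and apply Proposition~\ref{prop_D_in_Weibull_MDA}. The differences are cosmetic. You justify the boundary limit by monotone convergence where the paper uses Abel's theorem, and you spell out the absolute continuity of $\pD$. Your remark that non-degeneracy of $N$ is not needed is accurate.
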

Corollary~\ref{corollary_abs_monotone_domain_of_attraction} can also be
considered through the lens of MDA for mixtures,
originally studied by \cite{berman1962limiting} and, more recently, developed in
\cite{sreehari2010extremes}. In this context, the probability generating
function $\pD(u) = \sum_{n=1}^\infty \Prob(N=n)u^n$ can be viewed as a
mixture of distribution functions $\pF_n(u)=u^n$.

Corollary~\ref{corollary_abs_monotone_domain_of_attraction} does not cover the
infinite expectation case of the associated discrete generator.
The following example shows that, in this case, values for $\gamma\in(0,1]$ are
possible. The case $\gamma < 1$ is of particular interest since, as noted in
Theorem~\ref{theorem_max_domain_of_attraction_Morillas}~\ref{item_distortion_theorem_2},
$\gamma=1$ implies that the distorted and undistorted copulas are attracted to
the same EVC. The importance of the following example is therefore to provide a
distortion where the limiting EVCs are different, thus leading to an altered
behavior of joint extreme events due to the distortion. From a practical
perspective, it is important to realize that a simulation algorithm for the
utilized zeta distribution is available, see \cite[p.~551]{Devroye1986}, and
therefore simulation from the associated distortion function
is possible via Proposition~\ref{prop:stoch:rep}.
To present our results we say that a positive measurable function $f$ is said to be \emph{regularly varying with index $\gamma$} if $\lim_{x\to\infty}f(tx)/f(x) = t^{\gamma}$, $t>0$.
In this case we write $f\in\RV_{\gamma}$; see \cite[p.~18]{BinghamGoldieTeugels1987}.
For a distribution function $\pF$ with $x_+:=\supp_+(\pF)<\infty$ regular variation of $1-\pF(x_+-1/\cdot)$ with index $-\gamma$, $\gamma>0$, is equivalent to $\pF\in\MDA(\EW_{\gamma})$; see \cite[Proposition~1.13, p.~59]{Resnick1987}.
\begin{example}[Zeta distortion]\label{example_zeta_distortion}
  This example proposes an absolutely monotone distortion function $\pD_s$, $s>1$, such that
  $\E[N]\in(0,\infty]$ for the associated discrete generator random variable
  $N$ and $\pD_s \in \MDA(\Psi_{\min\{s-1,1\}})$.
  In view of Theorem~\ref{theorem_max_domain_of_attraction_Morillas} and Corollary~\ref{cor_Pi_to_Gumbel} this allows to tune if and how the distortion impacts the MDA; see also Figure~\ref{fig_Clayton_zeta} for an example.

  For $s>1$ let $N \sim \ZetaDist(s)$, where the zeta distribution (also
  called Zipf or Lotka distribution) has the probability mass function
  $\Prob(N = n) = \frac{1}{\zeta(s)n^s}$, $n\in\IN$,
  and $\zeta(s)$ is the Riemann zeta function; see \cite[p.~550]{Devroye1986}.
  Based on $N$ we define the \emph{zeta distortion} function as the probability generating function of $N$ given by
  $\pD_s(u) = \sum_{n=1}^{\infty} \Prob(N = n) u^n = \frac{1}{\zeta(s)} \sum_{n=1}^{\infty} \frac{u^n}{n^s} = \frac{1}{\zeta(s)} \polylog_s(u)$, $u\in[0,1]$,
  where $\polylog_s(u)$ is the polylogarithm function; see \cite[\S 25.12
  (ii)]{NIST:DLMF}.
With $\lim_{s\to\infty}\polylog_s(u) = u$ and $\lim_{s\to\infty}\zeta(s) = 1$ the trivial distortion can be recovered as the limiting case $\lim_{s\to\infty}\pD_s(u)=u$.
  The zeta distribution satisfies
  $\E[N] = \frac{\zeta(s-1)}{\zeta(s)}$, which is infinite for $1 < s \leq 2$.
  For $s>2$, the regular variation of $1-\pD_s(1-1/\cdot)$ is covered by
  Corollary~\ref{corollary_abs_monotone_domain_of_attraction}, but for
  $1 <s \leq 2$, the condition $\E[N] < \infty$ in
  Corollary~\ref{corollary_abs_monotone_domain_of_attraction} is not satisfied.
  Instead, we directly establish the regular variation property of $1-\pD_s(1-1/\cdot)$
  in what follows.

  Consider $1 < s < 2$.  For the polylogarithm we have
  $\polylog_s'(u) := \frac{\d}{\d u}\polylog_s(u) =
  \frac{1}{u}\polylog_{s-1}(u)$.  Furthermore, combining \cite[(12) and (14),
  p.~30]{Bateman1953}, we have for $0 < s < 1$ that
  $\lim_{u\nearrow 1} (1-u)^{1-s}\polylog_s(u) = \Gamma(1-s)$.  A refined second
  order expansion can be found in \cite[Proposition~2]{paulsen2002behavior}.
  By l'Hôpital's rule, we have $1-\pD_s(1-1/\cdot) \in \RV_{1-s}$ for $1 < s < 2$
  since
  \begin{align*}
    \lim _{x \to \infty} \frac{1-\pD_s\left(1-\frac{1}{t x}\right)}{1-\pD_s\left(1-\frac{1}{x}\right)}&=\lim _{x \to \infty} \frac{1-\frac{1}{\zeta(s)} \polylog_s\left(1-\frac{1}{t x}\right)}{1-\frac{1}{\zeta(s)} \polylog_s\left(1-\frac{1}{x}\right)}
                                                                                                    =\lim _{x \to \infty} \frac{-\frac{1}{\zeta(s)} \polylog_s^{\prime}\left(1-\frac{1}{t x}\right) \frac{1}{t x^2}}{-\frac{1}{\zeta(s)} \mathrm{Li}^{\prime}\left(1-\frac{1}{x}\right) \frac{1}{x^2}}\\
    &=\lim _{x \to \infty} \frac{\polylog_s^{\prime}\left(1-\frac{1}{t x}\right)}{t\polylog_s^{\prime}\left(1-\frac{1}{x}\right)}
    =\lim _{x \to \infty} \frac{1}{t} \cdot \frac{1-\frac{1}{x}}{1-\frac{1}{t x}} \cdot \frac{(tx)^{2-s}\polylog_{s-1}\left(1-\frac{1}{t x}\right)}{x^{2-s}\polylog_{s-1}\left(1-\frac{1}{x}\right)} \frac{x^{2-s}}{(tx)^{2-s}}\\
                                                                                                  & =t^{-1} \cdot t^{2-s} = t^{1-s}.
  \end{align*}
  For the limiting case $s=2$, we denote by
  $\Phi_{s,\nu}(z) = \sum_{n=0}^{\infty}(n+\nu)^{-s}z^n$ Lerch’s Transcendent;
  see \cite[\S 25.14]{NIST:DLMF} and \cite[p.~27]{Bateman1953}.  Combining
  $\lim_{z\nearrow 1}\Phi_{1,\nu}(z)/(-\log(1-z)) = 1$ (see \cite[(13), p.~30]{Bateman1953})
  and $\polylog_s(z) = z\Phi_{s,1}(z)$ (see \cite[(14), p.~30]{Bateman1953}), then implies that
  \begin{align*}
    \lim _{x \to \infty} \frac{\polylog_1\left(1-\frac{1}{t x}\right)}{ \polylog_1\left(1-\frac{1}{x}\right)}
    &=\lim _{x \to \infty} \frac{\left(1-\frac{1}{t x}\right) \Phi_{1,1}\left(1-\frac{1}{t x}\right)}{\left(1-\frac{1}{x}\right) \Phi_{1,1}\left(1-\frac{1}{x}\right)} \cdot \frac{-\log \left(\frac{1}{x}\right)}{-\log \left(\frac{1}{t x}\right)} \cdot \frac{-\log \left(\frac{1}{t x}\right)}{-\log \left(\frac{1}{x}\right)}\\
    &=\lim _{x \to \infty} \frac{\left(1-\frac{1}{t x}\right) \Phi_{1,1}\left(1-\frac{1}{t x}\right)}{\left(1-\frac{1}{x}\right) -\log \left(\frac{1}{t x}\right)} \cdot \frac{-\log \left(\frac{1}{x}\right)}{\Phi_{1,1}\left(1-\frac{1}{x}\right)} \cdot \frac{\log(t) + \log(x)}{\log(x)} = 1.
  \end{align*}
  Again by l'Hôpital's rule, we have $1-\pD_s(1-1/\cdot)\in\RV_{-1}$ since
  \begin{align*}
    \lim_{x \to \infty} \frac{1-\pD_s\left(1-\frac{1}{t x}\right)}{1-\pD_s\left(1-\frac{1}{x}\right)}
    &=\lim_{x \to \infty} \frac{1-\frac{1}{\zeta(2)} \polylog_2\left(1-\frac{1}{t x}\right)}{1-\frac{1}{\zeta(2)} \polylog_2\left(1-\frac{1}{x}\right)}
      =\lim _{x \to \infty} \frac{1}{t} \frac{\polylog_1\left(1-\frac{1}{t x}\right)}{\polylog_1 \left(1-\frac{1}{x}\right)}=t^{-1}.
  \end{align*}
  Noting that we have $\E[N]<\infty$ for $s>2$ which implies, by
  Corollary~\ref{corollary_abs_monotone_domain_of_attraction}, that
  $\pD_s\in\MDA(\Psi_1)$, all cases can now be combined into
  $\pD_s \in \MDA(\Psi_{\min\{s-1,1\}})$.  This shows that as
  $\mathbb{E}[N]$ diverges, the limiting distortion function can change the tail
  behavior of a Morillas-type distorted copula as discussed in Theorem~\ref{theorem_max_domain_of_attraction_Morillas}.
\end{example}
\subsection{Intersection with the theory for Archimedean and Archimax copulas}\label{sec_Archimedean_connection}
Formally, by \cite[p.~170]{morillas2005b} (and in the bivariate case also
\cite[Corollary~7, p.~5]{durrleman2000simple} and
\cite[p.~395]{GenestRivest2001}), any Archimedean copula $\Copula_{\gen}$ is a
Morillas-type distortion of the independence copula via the distortion function
$\pD_{\gen}(u) = \gen(-\log u)$, that is $\Copula_{\gen} = \Pi_{\pD_{\gen}}$ (in
terms of our notation $C_{\pD}$ for a $\pD$-distorted copula $C$).  However, as
already mentioned after Corollary~\ref{cor_Pi_to_Gumbel}, the distortions
$\pD_{\gen}$ induced by Archimedean copulas may not satisfy the assumptions of
Theorem~\ref{theorem_max_domain_of_attraction_Morillas}
or Corollary~\ref{cor_Pi_to_Gumbel}.  The intuitive reason that $d$-absolute
monotonicity may fail for Archimedean distortions is that such distortions only
have to appropriately transform the independence copula, that is their domain is
the singleton $\{\Pi\}$. On the contrary, $d$-absolutely monotone distortions
faithfully work on the domain $\SetCopula_d$ which requires tighter
restrictions. We next discuss concrete examples of non-$d$-absolutely monotone
distortions induced by Archimedean copulas.
\begin{remark}[Morillas-type distortions linked to Archimedean copulas]
  By \cite[Theorem~2.2]{McNeilNeslehova2009},
  $\Copula_{\gen}(u_1,\dots,u_d) := \gen(\sum_{i=1}^{d}\invgen(u_i))$ is a
  $d$-dimensional copula if and only if $\gen(0) = 1$, $\lim_{t\to\infty}\gen(t)=0$
  and $\gen$ is $d$-monotone. Defining $\pD_{\gen}(u) = \gen(-\log u)$,
  see \cite[p.~170]{morillas2005b}, we have the alternative multiplicative
  representation $\Copula_{\gen} = \Pi_{\pD_{\gen}}$, that is any Archimedean
  copula is formally a Morillas-type distortion of the independence copula.
  However, the associated distortion function $\pD_{\gen}$ is \emph{not}
  necessarily $d$-absolutely monotone.

  Indeed, rephrasing \cite[Theorem~2.2]{McNeilNeslehova2009}, we obtain that for
  $\pD \colon [0,1] \to [0,1]$, $\Pi_{\pD}$ is a $d$-dimensional copula if and
  only if $\pD(1)=1$, $\lim_{u\searrow 0} \pD(u)=0$ and $\gen_{\pD}(t):=\pD(\e^{-t})$, $t\geq 0$,
  is $d$-monotone. Denoting by $\SetCMonDist$ the set of such distortions we in
  fact have $\AMDCO{d} \subsetneq \SetCMonDist$, which can be seen as follows.
  For $\pD\in\AMDCO{d}$, $\Pi_{\pD}$ is a copula and
  $\Pi_{\pD}(u_1,\dots,u_d) = \gen_{\pD}( \sum_{i}^{d}\invgen_{\pD}(u_i))$, where
  $\gen_{\pD}(t) = \pD(\e^{-t})$. However, this implies that $\gen_{\pD}$ is $d$-monotone
  and hence $\pD\in\SetCMonDist$.  To show that the inclusion is strict,
  consider the Gumbel--Hougaard copula with completely monotone generator
  $\gen_{\GH}(t) = \exp(-t^{1/\theta})$.  Then the associated distortion is
  $\pD_{\gen_{\GH}}(u) = \gen_{\GH}(-\log(u)) = \exp(-(-\log u)^{1/\theta})$, which is non-convex
  (except for the independent case $\theta=1$) and hence $\pD_{\gen_{\GH}}\not\in\AMDCO{2}$ when $\theta > 1$.

  As such, the assumption $\pD\in\AMDCO{d}$ in
  Theorem~\ref{theorem_max_domain_of_attraction_Morillas}~\ref{item_distortion_theorem_2}
  is restrictive when considering Archimedean copulas as distortions of $\Pi$.
  However, the result in \cite{GenestRivest1989} that Archimedean copulas such
  that $\lim_{t\nearrow 1}(\invgen(t)/\frac{\d}{\d t}\invgen(t))'= 1/d$ are in the
  Gumbel MDA still applies. This result is exactly in line
  with
  Theorem~\ref{theorem_max_domain_of_attraction_Morillas}~\ref{item_distortion_theorem_2}
  and Corollary~\ref{cor_Pi_to_Gumbel}, just stated in terms of our distortion
  based framework operating on the larger class $\SetCMonDist$ instead of
  $\AMDCO{d}$. The difference in assumptions stems from the fact that the
  distortions considered in
  Theorem~\ref{theorem_max_domain_of_attraction_Morillas} act on the set of all
  $d$-dimensional copulas, while distortions linked to Archimedean copulas only
  act on the independence copula $\Pi$.  This shows that considering distortions
  acting on a smaller class of copulas can reduce the necessary assumptions
  placed on the distortions.  At the same time, when this alternative class of
  distortions intersects with $\AMDCO{d}$, the obtained results necessarily have
  to match at least on the intersection.
\end{remark}
While $d$-absolute monotonicity is too strong of a requirement
for Archimedean distortions, they will generally be regularly varying. The
following lemma establishes a connection between the regular variation for
functions in case of exponential or logarithmic transformations. It is
especially interesting to apply in case $\gen$ is an Archimedean generator, in
which case the associated $\pD_{\gen}$ is a cumulative distribution function on
$[0,1]$, and the lemma hence implies $\pD_{\gen}\in\MDA(\Psi_{\rho})$ if $\gen$
is suitably regularly varying.
While the following Lemma~\ref{lemma_Archimedean_D_regular_variation} will in our context be applied to Archimedean generators and their associated distortion functions it is formulated in more general terms of non-negative measurable functions.
\begin{lemma}[Regular variation of generators and associated distortions]\label{lemma_Archimedean_D_regular_variation}
  Let $\gen \colon [0,\infty) \to [0,1]$ be measurable with $\gen(0)=1$, and $\pD \colon[0,1]\to[0,1]$ be measurable with $\pD(1)=1$.
  \begin{enumerate}
  \item If $1-\gen(1/\cdot)\in\RV_{-\rho}$,
    $\rho>0$, then $1-\pD_{\psi}(1-1/\cdot)\in\RV_{-\rho}$ for $\pD_{\psi}:=\psi\circ(-\log)$.
  \item If $1-\pD(1-1/\cdot)\in\RV_{-\gamma}$,
    $\gamma>0$, then $1-\gen_{\pD}(1/\cdot)\in\RV_{-\gamma}$ for $\gen_{\pD}:=\pD\circ(\exp^{-1})$.
  \end{enumerate}
\end{lemma}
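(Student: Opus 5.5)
The plan is to reduce both parts to the elementary fact that $-\log v\sim 1-v$ as $v\to 1$, together with the uniform convergence theorem for regularly varying functions. Write $\pD_\gen(u)=\gen(-\log u)$. Then $1-\pD_\gen(1-1/x)=1-\gen(-\log(1-1/x))$. Set $h(x):=1/(-\log(1-1/x))$, so that this equals $1-\gen(1/h(x))=f(h(x))$ with $f:=1-\gen(1/\cdot)\in\RV_{-\rho}$. Since $-\log(1-1/x)=1/x+O(1/x^2)$, we have $h(x)=x-\tfrac12+o(1)$. In particular $h(x)/x\to 1$ and $h(tx)/h(x)\to t$ for every $t>0$.

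The key step is to show that $f(h(tx))/f(h(x))\to t^{-\rho}$. I will write
\begin{align*}
\frac{f(h(tx))}{f(h(x))}=\frac{f\bigl(\tfrac{h(tx)}{h(x)}\,h(x)\bigr)}{f(h(x))},
\end{align*}
with $\lambda_x:=h(tx)/h(x)\to t$ and $h(x)\to\infty$. By the uniform convergence theorem for regularly varying functions \cite[Theorem~1.5.2]{BinghamGoldieTeugels1987}, $f(\lambda y)/f(y)\to\lambda^{-\rho}$ uniformly in $\lambda$ on compact subsets of $(0,\infty)$, since $\rho>0$ and $f$ is measurable. Hence the ratio converges to $t^{-\rho}$, which gives part 1. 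Measurability of $f$ follows from that of $\gen$, and positivity of $f$ for large arguments is implicit in the RV assumption.

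For part 2 I read $\gen_\pD(s)=\pD(\e^{-s})$, as in the preceding remark; the notation $\exp^{-1}$ is taken to mean $\e^{-\cdot}$. Then $1-\gen_\pD(1/x)=1-\pD(\e^{-1/x})=g(k(x))$ with $g:=1-\pD(1-1/\cdot)\in\RV_{-\gamma}$ and $k(x):=1/(1-\e^{-1/x})$. Now $k(x)=x+\tfrac12+o(1)$, so again $k(tx)/k(x)\to t$ and $k(x)\to\infty$. The same uniform convergence argument then yields $g(k(tx))/g(k(x))\to t^{-\gamma}$.

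The main obstacle is justifying the substitution of a perturbed argument into a regularly varying function that is only measurable. This is exactly what the uniform convergence theorem handles. Without it, one would instead need monotonicity, which holds when $\gen$ and $\pD$ are monotone, e.g.\ for Archimedean generators and distortions. In that case one can alternatively sandwich: for any $\varepsilon>0$ and large $x$, $(1-\varepsilon)x\le h(x)\le x$, and monotonicity of $f$ combined with $f\in\RV_{-\rho}$ bounds the ratio between $((1\pm\varepsilon)t)^{-\rho}$-type limits. Letting $\varepsilon\to0$ closes the argument.
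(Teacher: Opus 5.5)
Your proposal is correct and follows the paper's route. The paper also writes $1-\pD_{\gen}(1-1/x)=f(h(x))$ with $h(x)=1/(-\log(1-1/x))\in\RV_1$ (analogously $k(x)=1/(1-\e^{-1/x})$ in part 2) and concludes via the composition rule for regularly varying functions \cite[Proposition~1.5.7~(ii), p.~26]{BinghamGoldieTeugels1987}; the only difference is that you prove that composition step directly from the uniform convergence theorem, which is how that proposition is itself proved.
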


While we so far focused on transferring properties of $\gen$ to $\pD$, the following
remark disucsses defining Archimedean copulas based on distortion functions.
\begin{remark}[Archimedean and Archimax copulas induced by $d$-absolutely monotone distortions and sampling]\label{remark_Archimedean_sampling}
  \begin{enumerate}
  \item\label{rem:AC:1} For a distortion $\pD\in\AMDCO{d}$, define a generator
    $\gen_{\pD}$ via $\gen_{\pD}(u):=\pD(\e^{-u})$.  Then $\gen_{\pD}$ is the
    $d$-monotone generator of an Archimedean copula. To see
    this note that the inverse is given as $\gen_{\pD}^{-1}(u)=-\log \qD(u)$, and we thus
    have
    \begin{align*}
      \Copula_{\gen_{\pD}}(\bfu)
      &= \pD\left(\frac{1}{\exp\left( \sum_{i=1}^{d} -\log \qD(u_i) \right)}\right)
        = \pD\left(\frac{1}{\prod_{i=1}^{d}1/\qD(u_i)}\right)
        = \pD\left(\,\prod_{i=1}^{d}\qD(u_i)\right)\\
      &= \Pi_{\pD}(\bfu),\quad \bfu\in[0,1]^d.
    \end{align*}
    Given that $\pD\in\AMDCO{d}$, we have that $\Copula_{\gen_{\pD}}=\Pi_{\pD}$
    is a copula and hence $\gen_{\pD}$ is $d$-monotone; see
    \cite[Theorem~2.2]{McNeilNeslehova2009}.  Furthermore, via
    Lemma~\ref{lemma_Archimedean_D_regular_variation}, $\gen_{\pD}$ is regularly
    varying at zero if $\pD$ is in the Weibull MDA. In this
    case we automatically have
    $\Copula_{\gen_{\pD}}\in\MDA(\GHCopula_{\theta})$ for some
  $\theta\geq 1$ via Corollary~\ref{cor_Archimedean_max_domain} below.
\item\label{rem:AC:2} Simulation of Archimedean copulas (with completely monotone generators) is
  typically done with the algorithm of \cite{marshallolkin1988}, which requires,
  as bottleneck, sampling from the underlying frailty distribution
  (corresponding to the generator by the Laplace--Stieltjes transform).
  However, in the setup of~\ref{rem:AC:1} under absolutely monotone $\pD\in\AMDCO{\infty}$,
  Proposition~\ref{prop:stoch:rep} offers an alternative approach to sampling
  of $\Copula_{\gen_{\pD}}$, which does not require to sample from
  the frailty distribution, just from the discrete generator associated
  to $\pD$.
  Circumventing the frailty distribution is a
  novel feature of this algorithm and may provide a solution to cases in which
  sampling from the frailty is challenging.
\item Combining~\ref{rem:AC:1} and \ref{rem:AC:2} allows one to draw similar
  conclusions for the more general class of Archimax copulas; see
  \cite{CharpentierFougeresGenestNeslehova2014}. First, for $\pD\in\AMDCO{d}$ and an EVC
  $\Copula\in\SetEVTCopula_d$ with associated stdf $\STDF$, we have that
  $\Copula_{\gen,\STDF}:=\Copula_{\pD}$ is an Archimax copula with an associated
  Archimedean generator $\gen_{\pD}(u):=\pD(\e^{-u})$ and stdf $\STDF$; see also \cite[p.~395]{GenestRivest2001}.
  As discussed, Archimedean copulas arise as special case for $\Copula=\Pi$.
  Moreover, for $\pD\in\AMDCO{\infty}$, Algorithm~\ref{algo_sampling_Morillas}
  can be used efficiently. This holds since for an EVC we have
  $\Copula^n(\bfu) = \Copula(\bfu^n)$ and hence~\eqref{eq:sample:bottleneck} can
  be avoided by simply sampling $(U_1,\dots,U_d)\sim\Copula$ and returning
  $\bm{W}_N = (U_1^{1/N},\dots,U_d^{1/N})$. Notably, for EVCs, the copula
  utilized for $\bm{W}_N$ is invariant under the value of $N$, while in the
  non-EVC Example~\ref{example_sampling_distorted_Clayton} the copula still
  belonged to the same functional family and only changed its parameter value
  (depending on $N$). As for purely Archimedean copulas~\ref{rem:AC:2},
  sampling based on Algorithm~\ref{algo_sampling_Morillas} avoids sampling from
  the frailty distribution which is typically necessary when sampling
  Archimax copulas. Instead, it requires sampling from the discrete generator
  linked to $\pD$.
  \end{enumerate}
\end{remark}

Based on Lemma~\ref{lemma_Archimedean_D_regular_variation} we are now also in a
position to give a short alternative proof for the fact that, under a
non-restrictive regular variation assumption, Archimedean copulas are in the
MDA of the Gumbel--Hougaard copula; see \cite[Statement
B, p.~208]{GenestRivest1989} for the original result and \cite[Proposition~2,
p.~200]{LarssonNeslehova2011} which our formulation matches.
\begin{corollary}[Archimedean copulas with regularly varying generators are in the Gumbel--Hougaard MDA]\label{cor_Archimedean_max_domain}
  Let $\Copula_{\gen}$ be an Archimedean copula with a regularly varying
  generator $\gen$ such that $1-\gen(1/\cdot)\in\RV_{-\rho}$, $0 < \rho \leq 1$.
  Then $\Copula_{\gen}\in\MDA(\GHCopula_{1/\rho})$. %
\end{corollary}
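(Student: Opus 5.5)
The plan is to write $\Copula_{\gen}$ as a Morillas-type distortion of the independence copula and then invoke Theorem~\ref{theorem_max_domain_of_attraction_Morillas}\,\ref{item_distortion_theorem_1}, or directly Corollary~\ref{cor_Pi_to_Gumbel}. As in the remark on Archimedean distortions, set $\pD_{\gen}(u):=\gen(-\log u)$, $u\in[0,1]$. Then $\Copula_{\gen}=\Pi_{\pD_{\gen}}$, and this is a genuine $d$-copula because $\gen$ is a $d$-monotone Archimedean generator (\cite[Theorem~2.2]{McNeilNeslehova2009}). Hence the copula hypothesis in Theorem~\ref{theorem_max_domain_of_attraction_Morillas}\,\ref{item_distortion_theorem_1} holds without requiring $d$-absolute monotonicity of $\pD_{\gen}$. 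This matters, since, e.g., the Gumbel--Hougaard distortion is not in $\AMDCO{2}$.

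Next I verify that $\pD_{\gen}$ satisfies the distortion hypotheses. It is non-decreasing, since $\gen$ is non-increasing and $-\log$ is decreasing. It is continuous, with $\pD_{\gen}(1)=\gen(0)=1$ and $\pD_{\gen}(0)=\lim_{t\to\infty}\gen(t)=0$. For $\support_+(\pD_{\gen})=1$, note that the assumption $1-\gen(1/\cdot)\in\RV_{-\rho}$ forces $1-\gen(t)>0$ for all small $t>0$ (a regularly varying function is eventually positive). Thus $\pD_{\gen}(u)<1$ for $u<1$ near $1$. Applying Lemma~\ref{lemma_Archimedean_D_regular_variation}\,(1) gives $1-\pD_{\gen}(1-1/\cdot)\in\RV_{-\rho}$. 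By the cited equivalence \cite[Proposition~1.13]{Resnick1987}, for a distribution function with finite right endpoint $1$ this is the same as $\pD_{\gen}\in\MDA(\EW_{\rho})$.

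Finally, $\Pi\in\MDA(\Pi)=\MDA(\GHCopula_1)$ trivially, because $\Pi$ is an EVC. Corollary~\ref{cor_Pi_to_Gumbel} with $\theta=1$ and $\gamma=\rho\in(0,1]$ then yields $\Copula_{\gen}=\Pi_{\pD_{\gen}}\in\MDA(\GHCopula_{1/\rho})$. Concretely, the limit is $\Pi_{\pD^*}$ with $\pD^*(u)=\e^{-(-\log u)^{\rho}}$, and
\[
\Pi_{\pD^*}(\bfu)=\exp\Bigl(-\Bigl(\textstyle\sum_{j}(-\log u_j)^{1/\rho}\Bigr)^{\rho}\Bigr)=\GHCopula_{1/\rho}(\bfu).
\]
The restriction $\rho\le 1$ is exactly what is needed for $1/\rho\ge1$, so that $\GHCopula_{1/\rho}$ is a valid Gumbel--Hougaard copula and the corollary applies with $\gamma\le\theta$.

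The main obstacle is the middle step: checking that the regular variation of $\gen$ at $0$ transfers to the upper tail of $\pD_{\gen}$ at $1$. This is Lemma~\ref{lemma_Archimedean_D_regular_variation}, whose content is the asymptotic equivalence $-\log(1-1/x)\sim 1/x$ combined with the uniform convergence theorem for regularly varying functions. A secondary check is that $\pD_{\gen}$ is a bona fide distortion (right-continuous, with the correct endpoints). This follows from the continuity of $d$-monotone generators on $[0,\infty)$.
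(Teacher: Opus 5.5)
Your proposal is correct and follows essentially the paper's route: Lemma~\ref{lemma_Archimedean_D_regular_variation} gives $\pD_{\gen}\in\MDA(\EW_{\rho})$, then $\Copula_{\gen}=\Pi_{\pD_{\gen}}$ combined with Theorem~\ref{theorem_max_domain_of_attraction_Morillas}~\ref{item_distortion_theorem_1} yields the limit $\Pi_{\pD^*}$, and a direct calculation identifies this limit as $\GHCopula_{1/\rho}$. The differences are only cosmetic: you package the last two steps via Corollary~\ref{cor_Pi_to_Gumbel} with $\theta=1$, and you explicitly check side conditions that the paper leaves implicit, namely the copula property of $\Pi_{\pD_{\gen}}$ and $\support_+(\pD_{\gen})=1$.
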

Given that any EVC lies exclusively in
  its own MDA, the preceding corollary also shows that
  an Archimedean copula with a regularly varying generator is an EVC if and only
  if it is the Gumbel--Hougaard copula. The more general proof of this result, without
  the regular variation assumption, can be found in \cite[Statement~A, p.~208]{GenestRivest1989},
  see also \cite[Theorem~4.5.2, p.~143]{Nelsen2006}, but requires the solution of a functional equation.
\section{Conclusion}\label{sec_conclusion}
In this article we substantially expanded the properties of Morillas-type
transformations. For the most common case of an absolutely monotone distortion,
we provided a stochastic representation and thus sampling algorithm, which
closes a gap in the literature and facilitates practical applications. The
stochastic representation leverages a connection between absolutely monotone
distortions and probability generating functions of discrete random variables
supported on $\IN$, and also allows one to sample from distorted joint
distributions.  Based on the maximal absolute distance between two distortions,
we quantify the maximal absolute distance between distortions of the same
underlying copula. An interesting question for future research is whether
stricter conditions, such as a small total variation distance, can lead to total
variation bounds between distorted copula measures.

Similar to the concept for copulas, we also introduced a Morillas-type
distortion for stable tail dependence functions (stdfs), which themselves
provide an elegant way of fully characterizing the dependence structure of
multivariate generalized extreme value distributions. By transforming the
associated functional equation into a domain-restricted Pexider functional
equation, we found that, in the framework of Morillas-type distorted stdfs, only
monomial distortions with an exponent less than $1$ transform \emph{any} stdf
into another stdf. Exploiting the correspondence between stdfs and extreme value
copulas (EVCs), this allowed us to characterize the subclass of distortions
which faithfully transforms EVCs into EVCs and thus to provide an alternative
proof of a key result in \cite{HerrmannHofertNeslehova2024}. Furthermore, we discussed how combining
Morillas-type transformations of stdfs with convex combinations of stdfs
provides a pathway to more flexible extremal dependence models.

We also investigated how Morillas-type distortions interact with the dependence
structure linked to the asymptotic behaviour of componentwise maxima. Under the
non-restrictive conditions of
Theorem~\ref{theorem_max_domain_of_attraction_Morillas}, the limiting EVC for
the distorted copula $\Copula_{\pD}$ is a Morillas-type distortion of the
limiting EVC of the undistorted copula $\Copula$. We provided an explicit form
of the members of this class of limiting distortions and showed that the
conditions of Theorem~\ref{theorem_max_domain_of_attraction_Morillas} are in
particular satisfied for $d$-absolutely monotone distortion functions in the
Weibull MDA. While we provided sufficient conditions which imply that the
distorted copula is in the same MDA as the initial copula (and hence features
the same limiting behaviour of componentwise extremes), we also provided a
concrete example where this is not the case. This analysis thus gave new insight
into the limiting behaviour of componentwise extremes under distorted dependence
structures.  Lastly, we connected our results to the well established theory of
Archimedean and Archimax copulas, where, if applicable, our sampling algorithm has the novel
feature of circumventing the sampling from the associated frailty distribution
that is otherwise typically necessary.

\subsection*{Acknowledgments}
The first and fourth authors gratefully acknowledge support from the Natural Sciences and Engineering Research Council (NSERC) of Canada via NSERC grant RGPIN-2020-05784.
The first author also gratefully acknowledges administrative support of the Karlsruhe Institute of Technology (KIT) where part of this research was carried out.

\printbibliography[heading=bibintoc]

\appendix

\section{Auxiliary results}
\begin{lemma}[Convergence of function composition] \label{composition_convergence}
	Denote by $(Y,\d_Y)$ and $(Z,\d_Z)$ two metric spaces.
	For a set $X$ let $g_n \colon X \to Y$ and $f_n \colon Y \to Z$ be two sequences of functions.
	If $g_n$ converges pointwise to a function $g \colon X \to Y$, and $f_n$ converges uniformly to a continuous function $f \colon Y \to Z$ then $f_n\circ g_n$ converges pointwise to $f\circ g$.
\end{lemma}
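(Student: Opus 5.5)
Fix $x\in X$ and write $y_n:=g_n(x)$, $y:=g(x)$. The aim is $\d_Z(f_n(y_n),f(y))\to 0$. Insert the intermediate term $f(y_n)$ and apply the triangle inequality in $(Z,\d_Z)$:
\begin{align*}
  \d_Z\bigl(f_n(y_n),f(y)\bigr)\le \d_Z\bigl(f_n(y_n),f(y_n)\bigr)+\d_Z\bigl(f(y_n),f(y)\bigr).
\end{align*}
This splits the argument into two independent pieces, each handled by one of the two hypotheses.

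The first term is at most $\sup_{z\in Y}\d_Z(f_n(z),f(z))$. This bound does not depend on the moving argument $y_n$. By the uniform convergence of $f_n$ to $f$ it tends to $0$.

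For the second term, pointwise convergence of $g_n$ gives $\d_Y(y_n,y)\to 0$. Since $f$ is continuous at $y$, it follows that $f(y_n)\to f(y)$ in $Z$, so the second term also tends to $0$.

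Given $\varepsilon>0$, choose $N_1$ such that the uniform distance is below $\varepsilon/2$ for $n\ge N_1$. Choose $N_2$ such that $\d_Y(y_n,y)<\delta$ for $n\ge N_2$, where $\delta$ comes from continuity of $f$ at $y$ with tolerance $\varepsilon/2$. For $n\ge\max(N_1,N_2)$ the sum is below $\varepsilon$, which proves $f_n\circ g_n\to f\circ g$ pointwise.

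The only point needing care is the first term. Pointwise convergence of $f_n$ would not suffice there, because it is evaluated at the varying points $y_n$; this is exactly where uniformity is used. No further obstacle is expected.
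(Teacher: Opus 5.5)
Your proposal is correct and matches the paper's proof step for step. Both insert the intermediate point $f(y_n)$ via the triangle inequality, bound the first term by uniform convergence and the second by continuity of $f$ at $y$ together with $y_n\to y$, and conclude with the $\varepsilon/2$ argument for $n\geq\max(N_1,N_2)$.
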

\begin{proof}
	For $x\in X$ denote by $y_n = g_n(x)$ and $y = g(x)$.
	Then via the triangle inequality we have
	\begin{align*}
		\d_Z ( f_n(y_n),\, f(y)) \leq \d_Z(f_n(y_n),\, f(y_n)) + \d_Z(f(y_n),\, f(y)).
	\end{align*}
	Uniform convergence of $f_n$ to $f$, that is for every $\varepsilon > 0$ there is a $N_1\in\IN$ such that $\sup_{y\in Y}\d_Z(f_n(y),$ $f(y)) \leq \varepsilon/2$ when $n\geq N_1$, implies that the first term on the right hand side is less than $\varepsilon/2$ for $n\geq N_1$.
	The convergence $y_n \to y$ implies that for any $\delta > 0$ there is $N_2\in\IN$ such that $\d_Y(y_n,y) < \delta$ for $n\geq N_2$.
	Combining this with the continuity of $f$, i.e, for any $\varepsilon > 0$ there is a $\delta > 0$ such that $\d_Y(y,y') < \delta$ implies $\d_Z(f(y),f(y')) < \varepsilon$, we can find for every $\varepsilon/2$ a $N_2\in\IN$ such that $\d_Z(f(y_n),f(y)) < \varepsilon/2$ for $n\geq N_2$.
	For $n \geq N=\max(N_1,N_2)$ we then have $\d_Z ( f_n(y_n),\, f(y)) < \varepsilon/2+\varepsilon/2 = \varepsilon$.
\end{proof}
\begin{example}[Kumaraswamy distortion function in $\MDA(\EW_\gamma)$ but not $d$-absolutely monotone]\label{Kumaraswamy_distortion}
	Consider the distortion function
	\begin{align*}
	K_{a,\gamma}(x) = 1 - (1 - x^a)^\gamma, \quad x \in [0,1],
\end{align*}
which is the cumulative distribution function of the Kumaraswamy distribution
with parameters $a > 0$ and $\gamma > 0$. Using the von Mises Condition in
\cite[Corollary~3.3.13, p.~136]{embrechtsklueppelbergmikosch1997}, l'Hôpital's rule
implies an upper support endpoint of $x_{K_{a,\gamma}}=1$ that
		\begin{align*}
			\lim_{x \nearrow 1} \frac{(1 - x) K_{a,\gamma}'(x)}{1-K_{a,\gamma}(x)}
                  &= \lim_{x \nearrow 1} \frac{(1 - x)\, a\gamma\, x^{a-1}(1 - x^a)^{\gamma - 1}}{(1 - x^a)^\gamma} = \lim_{x \nearrow 1} a\gamma\, x^{a - 1} \cdot \frac{1 - x}{1 - x^a}\\
                  &= \lim_{x \nearrow 1} a\gamma\, x^{a - 1} \cdot \frac{1}{a x^{a - 1}} = \gamma > 0.
		\end{align*}
		This matches the von Mises condition for
                the MDA of the Weibull distribution
                with index $\gamma$. Thus,
                $K_{a,\gamma} \in \mathrm{MDA}(\Psi_\gamma)$.

		However, $K_{a,\gamma}$ is not $d$-absolutely monotone.
		While the first derivative of $K_{a,\gamma}$ is
		\begin{align*}
		K_{a,\gamma}'(u) = a\gamma\, u^{a-1} (1 - u^a)^{\gamma-1} > 0 \quad \text{for } u \in (0,1),
		\end{align*}
		the second derivative is
		\begin{align*}
			K_{a,\gamma}''(u) = a\gamma \left[ (a - 1)u^{a - 2}(1 - u^a)^{\gamma - 1} - a(\gamma - 1)u^{2a - 2}(1 - u^a)^{\gamma - 2} \right],
		\end{align*}
		which is not necessarily non-negative on $(0,1)$. For example, when $a = \gamma = 2$, this becomes
		\begin{align*}
			K_{a,\gamma}''(u) = 4 \left[ u^0 (1 - u^2)^1 - 2u^2 \right] = 4[(1 - u^2) - 2u^2] = 4(1 - 3u^2),
		\end{align*}
		which is negative for $u > 1/\sqrt{3}$. Therefore, $K_{a,\gamma}$ is not $2$-absolutely monotone or $d$-absolutely monotone of any order on $[0,1]$.
\end{example}

\section{Proofs}
\begin{proof}[Proof of Proposition~\ref{prop_distorted_measure_difference}]
  \mbox{}
  \begin{enumerate}
  \item %
    Define $\tilde{\bfu}=(\tilde{u}_1,\dots,\tilde{u}_d)$ for $\bfu=(u_1,\dots,u_d)\in[0,1]^d$ via $\tilde{u}_i:=\pD'\left(\qD(u_i)\right)$ for $i\in\{1,\dots,d\}$. Since $\pD$ is bijective, we have
    \begin{align*}
      \sup_{u \in [0,1]} \vert \pD'\circ\qD(u) - u \vert
      =\sup_{u \in [0,1]} \vert \pD'(u) - \pD(u) \vert
      =\Vert \pD'-\pD \Vert_\infty
      \leq \varepsilon.
    \end{align*}
    Lipschitz continuity of copulas applied to $\Copula_{\pD'}$, see \cite[Theorem 2.10.7, p.~46]{Nelsen2006},
    implies, together with the fact that $\pD'$ is bijective, that
    \begin{align*}
      \abs{ \Copula_{\pD'}(\bfu) - \Copula_{\pD}(\bfu) }
      &= \abs{ \Copula_{\pD'}(\bfu) -  \Copula_{\pD'}(\tilde{\bfu}) + \Copula_{\pD'}(\tilde{\bfu})- \Copula_{\pD}(\bfu) }\\
      &\leq \abs{ \Copula_{\pD'}(\bfu) - \Copula_{\pD'}(\tilde{\bfu}) } + \abs{ \Copula_{\pD'}(\tilde{\bfu}) - \Copula_{\pD}(\bfu) }\\
      &\leq \Vert \bfu - \tilde{\bfu}\Vert_1 + \abs{ \pD'\bigl(\Copula(\qD(\bfu))\bigr) - \pD\bigl(\Copula(\qD(\bfu))\bigr)}\\
      &\leq \sum_{i=1}^d \abs{u_i - \pD'\circ\qD(u_i)} + \varepsilon \leq d \varepsilon + \varepsilon=(d+1)\varepsilon,
    \end{align*}
    hence $\sup_{\bfu \in [0,1]^d} \abs{\Copula_{\pD'}(\bfu) - \Copula_{\pD}(\bfu)} \leq (d+1)\varepsilon$.
  \item The $\Copula$-volume of a set
    $(\bfa, \bfb]= \prod_{i=1}^d(a_i,b_i] \subseteq [0,1]^d$ is given by (the inclusion-exclusion formula)
    \begin{align*}
      \vol_{(\bfa, \bfb]}\Copula = \sum_{\bm{i}\in \{0,1\}^d} (-1)^{\sum_{j=1}^d i_j} \Copula(\bm{u}^{(\bm{i})}),
    \end{align*}
    where $\bfu^{(\bm{i})} = (a_1^{i_1}b_1^{1-i_1},\dots,a_d^{i_d}b_d^{1-i_d})$, $\bm{i}\in\{0,1\}^d$.
    The first part of the proof implies for $\Copula_{\pD'}$ and $\Copula_{\pD}$ that
    \begin{align*}
      \abs{\vol_{(\bfa, \bfb]}\Copula_{\pD'} - \vol_{(\bfa, \bfb]}\Copula_{\pD}}
      &= \biggl|\sum_{\bm{i} \in \{0,1\}^d} (-1)^{\sum_{j=1}^d i_j} \bigl(\Copula_{\pD'}(\bfu^{(\bm{i})}) - \Copula_{\pD}(\bfu^{(\bm{i})}) \bigr)\biggr| \\
      &\leq \sum_{\bm{i} \in \{0,1\}^d} \abs{ \Copula_{\pD'}(\bfu^{(\bm{i})}) - \Copula_{\pD}(\bfu^{(\bm{i})})} \leq \sum_{\bm{i} \in \{0,1\}^d}  (d+1) \varepsilon = 2^d  (d+1) \varepsilon,
    \end{align*}
    which is independent of $\bfa$ and $\bfb$. We thus obtain the uniform bound
    \begin{align*}
      \sup_{(\bfa,\bfb]\subseteq[0,1]^d} \abs{\vol_{(\bfa,\bfb]}\Copula_{\pD'} - \vol_{(\bfa,\bfb]}\Copula_{\pD}} &\leq 2^d (d+1) \varepsilon. \qedhere
    \end{align*}
  \end{enumerate}
\end{proof}

\begin{proof}[Proof of Corollary~\ref{cor_convergence_C_Dn_to_C_D}]
  Fix $\varepsilon > 0$. Since $\pD_n$ is non-decreasing and $\pD$ is
  continuous, the convergence of $\pD_n$ to $\pD$ is in fact uniform; see
  \cite[p.~1]{Resnick1987}. There thus exists $N \in \IN$ such that, for all $n \geq N$, we have
  $\Vert \pD_n-\pD \Vert_{\infty} \leq \varepsilon/(d+1)$.  Given that
  $\Copula_{\pD}$ is a copula,
  Proposition~\ref{prop_distorted_measure_difference}~\ref{prop_max_difference_I}
  implies that $\Vert \Copula_{\pD_n}-\Copula_{\pD} \Vert_{\infty} \leq \varepsilon$ for
  every $n\geq N$, even if $\Copula_{\pD_n}$ fails to be a copula. This shows the
  uniform convergence of $(\Copula_{\pD_n})$ to $\Copula_{\pD}$.
\end{proof}

\begin{proof}[Proof of Proposition~\ref{prop_pgf_representation}]
  For necessity, if $\pD$ is an absolutely monotone distortion function, then
  \cite[Theorem~2, p.~223]{feller1971} implies that $\pD$ can be written for
  $u\in[0,1)$ in the form $\pD(u) = \sum_{n=0}^{\infty} p_n u^n$ for constants
  $p_n \geq 0$, $n\in\IN_0$. By assumption we also have
  $\lim_{u\nearrow 1}\pD(u) = \pD(1) = 1$, %
  which is now equivalent to $\lim_{u\nearrow 1}\sum_{n=0}^{\infty} p_n u^n = 1$.  It follows that
  $\sum_{n=0}^{\infty} p_n = 1$, that is $p_n = \Prob(N=n)$ for some discrete
  random variable $N$ taking values in $\IN_0$. Since, by assumption,
  we also have $0=\pD(0)$ %
  and $\pD(0)=p_0$ from the series representation, we have $p_0=0$
  and thus the support of $N$ is indeed $\IN$.

  For sufficiency, if $\pD(u) = \sum_{n=1}^{\infty} \Prob(N=n)u^n$, $u\in[0,1]$,
  for some discrete random variable $N$ with support $\IN$, then
  \cite[Theorem~2, p.~223]{feller1971} implies that $\pD$ is absolutely monotone
  on $[0,1)$. The series representation also trivially implies that
  $\pD(0)=0$. Lastly, according to Abel's continuity theorem, see \cite[8.2~Theorem,
  p.~174]{Rudin1976}, $\lim_{u\nearrow 1} \pD(u) = \lim_{u\nearrow 1} \sum_{n=1}^{\infty}
  \Prob(N=n)u^n = \sum_{n=1}^{\infty} \Prob(N=n) = 1$, which establishes the
  continuity of $\pD$ at $1$ and thus also implies that $\pD(1)=1$.  This makes
  $\pD$ a distortion function.
\end{proof}

\begin{proof}[Proof of Corollary~\ref{cor_bounded_density}]
  Taking the derivative of the power series $\pD$ we have
  $\dD(u) = \sum_{n=1}^{\infty} n\Prob(N=n)u^{n-1}$, with boundary
  values $\dD(0)=\Prob(N=1)$ and $\dD(1)=\E[N]$ (possibly $\infty$).
  For necessity, if $\dD$ is bounded by $C>0$, then
  $\dD(1) =\E[N]<C$.  For sufficiency, if
  $\E[N]=\sum_{n=1}^{\infty} n\Prob(N=n)<C<\infty$, Abel's continuity
  theorem of \cite[8.2 Theorem, p.~174]{Rudin1976} implies that
  $\lim_{u\nearrow 1}\dD(u) = \E[N] < \infty$, so $\dD$ is continuous at $1$
  (and trivially on $[0,1)$), so $\dD$ is continuous on $[0,1]$ and thus
  bounded there.
\end{proof}

\begin{proof}[Proof of Proposition~\ref{prop:stoch:rep}]
  The representation $\pD(u)=\sum_{n=1}^\infty p_n u^n$ for
  $\pD\in\AMDCO{\infty}$ in Proposition~\ref{prop_pgf_representation} implies
  that $\pD$ is strictly increasing and continuous, and so is $\qD$.  Since
  $\qD$ and $\pD$ are strictly increasing,
  \cite[Proposition~1\,(3)]{embrechtshofert2013c} and conditioning on $N$
  implies that, for all $\bm{u}\in[0,1]^d$, we have
  \begin{align*}
    \Prob(\bm{U}_{\pD}\le\bm{u})
    &=\Prob\bigl(\pD(W_{N,1})\le u_1,\dots,\pD(W_{N,d})\leq u_d\bigr)\\
    &=\Prob\bigl(W_{N,1}\le \qD(u_1),\dots,W_{N,d}\leq\qD(u_d)\bigr)\\
    &=\sum_{n=1}^\infty p_n\Prob\bigl(W_{N,1}\le \qD(u_1),\dots,W_{N,d}\leq\qD(u_d)\,\big|\,N=n\bigr)\\
    &=\sum_{n=1}^\infty p_n\Copula\bigl(\qD(u_1),\dots,\qD(u_d)\bigr)^n\\
    &=\pD\bigl(\Copula\bigl(\qD(u_1),\dots,\qD(u_d)\bigr)\bigr) = \Copula_{\pD}(\bfu).
  \end{align*}
  The stochastic representation in~\eqref{eq:sample:bottleneck} holds since, for
  $\bfU_1,\dots,\bfU_n\isim \Copula$, one has
  \begin{align*}
    \Prob\bigl(\max_{1\le i\le n}\{U_{i,1}\}\le u_1,\dots,\max_{1\le i\le n}\{U_{i,d}\}\le u_d\bigr)
    &= \prod_{i=1}^n\Prob\bigl(U_{i,1}\le u_1,\dots,U_{i,d}\le u_d\bigr) = \Copula^n(\bfu). \qedhere
  \end{align*}
\end{proof}

\begin{proof}[Proof of Lemma~\ref{lem:prop:stdf}]
  \mbox{}
  \begin{enumerate}
  \item %
    For all $j=1,\dots,d$, $\max_{1\le j\le d}\{|x_j|W_j\}\ge |x_j|W_j$ almost
    surely and thus $\STDF(\bm{x})=\lVert\bm{x}\rVert_{\text{D}}\ge
    |x_j|$. Therefore, $\STDF(\bm{x})\ge\max\{x_1,\dots,x_d\}$,
    $\bm{x}\in[0,\infty)^d$. Furthermore,
    $\max_{1\le j\le d}\{|x_j|W_j\}\le \sum_{j=1}^d|x_j|W_j$ almost surely and
    thus $\STDF(\bm{x})=\lVert\bm{x}\rVert_{\text{D}}\le
    \sum_{j=1}^d|x_j|$. Therefore, $\STDF(\bm{x})\le\sum_{j=1}^dx_j$,
    $\bm{x}\in[0,\infty)^d$. The remaining part of the statement is easy to
    verify via~\eqref{eq:repr:EVC}.
  \item The first two statements are a direct consequence
    of~\ref{lem:prop:stdf:1}. To see that $\STDF(\bm{1})$ can attain any value in
    $[1,d]$ for a suitable stdf $\STDF$, consider the comonotone $\STDF$ to
    obtain $\STDF(\bm{1})=1$ and the logistic stdf
    $\STDF_{\eta}(\bfx)=\left( \sum_{i=1}^{d}\abs{x_i}^{1/\eta} \right)^{\eta}$
    with parameter $\eta\in(0,1]$ for all other values $(1,d]$ since
    $\STDF_{\eta}(\bm{1}) = \e^{\eta \log(d)}=d^{\eta}$ in this case, which, as a
    function of $\eta$, can attain all values in $(1,d]$. \qedhere
  \end{enumerate}
\end{proof}

\begin{proof}[Proof of Theorem~\ref{thm:power:distortions:of:stdfs}]
  Consider sufficiency. %
  If $g(x) = cx^{\gamma}$ for $\gamma\in(0,1]$ and
  $c\in(0,\infty)$, then $g^{-1}(x)=(x/c)^{1/\gamma}$ and homogeneity of degree
  $1$ of $\STDF$ implies that
  \begin{align*}
    \STDF_g(\bm{x}) &= c\STDF\bigl((x_1/c)^{1/\gamma},\ldots,(x_d/c)^{1/\gamma}\bigr)^{\gamma} = \bigl(c^{1/\gamma}\STDF\bigl((x_1/c)^{1/\gamma},\ldots,(x_d/c)^{1/\gamma}\bigr)\bigr)^{\gamma}\\
    &=\bigl(\STDF\bigl(x_1^{1/\gamma},\ldots,x_d^{1/\gamma}\bigr) \bigr)^{\gamma},
  \end{align*}
  which is a stdf; see \cite[Proposition~2.6.1, p.~132]{Falk2019} and \cite[Lemma~2.3\,(ii)]{ChatelainFougeresNeslehova2020}.

  Consider necessity. %
  From the stdf $\STDF(\bm{x}) = \sum_{i=1}^d x_i$ of the independence copula,
  we obtain that
  $1\stackrel{!}{=} \STDF_{g}(\bm{e}_1) = g(g^{-1}(1) + (d-1)g^{-1}(0))$.
  Applying $g^{-1}$ to both sides of the equation yields $g^{-1}(0)=0$. From
  here we see that, by strict increasingness of $g^{-1}$, that $g^{-1}(1)>0$,
  and also that $g(0)=0$, which already fixes any possible solution $g$ at
  zero. Using that $\STDF_{g}$ needs to be homogeneous of degree $1$, we can use
  that $\STDF$ is homogeneous of degree $1$ and that $g^{-1}(1)>0$ to arrive, for
  all $t>0$, at the condition
  \begin{align*}
    g\bigl(g^{-1}(t)\STDF(\bm{1})\bigr)&=g\bigl(\STDF(g^{-1}(t),\ldots,g^{-1}(t))\bigr)=\STDF_{g}(t,\ldots,t)\stackrel{!}{=} t \STDF_{g}(\bm{1})\\
                                           &= tg\bigl(\STDF\bigl(g^{-1}(1),\ldots,g^{-1}(1)\bigr)\bigr) = g(g^{-1}(t)) g\bigl(\STDF\bigl(g^{-1}(1),\ldots,g^{-1}(1)\bigr)\bigr)\\
    &= g(g^{-1}(t)) g\bigl(g^{-1}(1)\STDF(\bm{1})\bigr).
  \end{align*}
  Let $x:=g^{-1}(t)$ and $y:=\STDF\left(\bm{1}\right)$.  Then
  $x\in[0,\infty)$ and, by Lemma~\ref{lem:prop:stdf}~\ref{lem:prop:stdf:2}, $y\in [1,d]$, with all values
  attainable by a suitable $\STDF$. In terms of $x,y$ we thus obtain the
  functional equation
  \begin{align}
    g(xy)=g(x)g(g^{-1}(1)y),\quad x\in[0,\infty),\ y\in[1,d].\label{eq:fct:eq}
  \end{align}
  This is the multiplicative form of a (domain-)restricted Pexider functional
  equation of the form $h(xy)=f(x)\tilde{g}(y)$ for
  $(x,y)\in R:=(0,\infty)\times(1,d)$.  By \cite[Corollary~5, p.~81]{aczel1987},
  the continuous solutions are of the form
  \begin{align*}
	  h(t)&=bc t^{\gamma},\quad t\in(0,\infty),\\
	  f(t)&=ct^{\gamma},\quad t\in(0,\infty),\\
	  \tilde{g}(t)&=bt^{\gamma},\quad t\in(1,d),
  \end{align*}
  for $b\neq 0$, $c\neq 0$ and $\gamma\in\IR$. Comparing the solutions to~\eqref{eq:fct:eq},
  we see that $h$ and $f$ must be $g$, so $b\stackrel{!}{=}1$ and therefore that
  $g(t)\stackrel{!}{=}ct^{\gamma}$ for $t > 0$. This implies that
  $g^{-1}(t)=(t/c)^{1/\gamma}$. The equation for $\tilde{g}$ is then
  automatically fulfilled since
  $\tilde{g}(t)=g(g^{-1}(1)t)=c((1/c)^{1/\gamma}t)^\gamma=t^\gamma$.

  It remains to consider the admissible values for the parameters $c\neq 0$ and
  $\gamma\in\IR$.
  From Definition~\ref{def_stdf},
  any stdf $\STDF$ is necessarily
  non-negative, so $c>0$. For the range of $\gamma$, first consider
  $\gamma \leq 0$. As $c>0$, $g$ is then either constant or strictly decreasing,
  which contradicts the assumption of $g$ being strictly increasing.  Now
  consider $\gamma>1$. With the stdf
  $\STDF(\bm{x}) = \sum_{j=1}^{d} x_j$ of the independence copula, we
  then have $\STDF_g(\bm{1}) = d^{\gamma} > d$ independent of $c>0$, which
  violates Lemma~\ref{lem:prop:stdf}~\ref{lem:prop:stdf:2}.  As such, we necessarily have
  $\gamma\in(0,1]$. As last step, note that the previously identified
  values $g(0)=g^{-1}(0)=0$ allow to extend the domain of the identified
  function $g$ (and its inverse $g^{-1}$) from $(0,\infty)$ to $[0,\infty)$, in
  line with the necessary continuity and strict monotonicity of $g$.
\end{proof}

\begin{proof}[Proof of Corollary~\ref{cor_uniqueness_up_to_scaling}]
  For $c>0$, $g_c(x):=g(cx)$ also satisfies the assumptions of
  Theorem~\ref{thm:power:distortions:of:stdfs}, with corresponding inverse
  $g_c^{-1}(x)=g^{-1}(x)/c$. Homogeneity of degree $1$ of the underlying stdf
  $\STDF$ then implies that
  $\STDF_{g_c}(\bm{x})=g_c\bigl(\STDF(g_c^{-1}(x_1),\ldots,g_c^{-1}(x_d))\bigr)=g\bigl(\STDF(g^{-1}(x_1),\ldots,g^{-1}(x_d))\bigr)=\STDF_g(\bm{x})$, $\bm{x}\in[0,\infty)^d$, so that $g_c$ generates the same $\STDF_g$ as $g$.
\end{proof}

\begin{proof}[Proof of Corollary~\ref{cor_EVC_to_EVC}]
  Consider sufficiency. If
  $\pD(u) = \pD_{\lambda,\gamma}(u) = \e^{-\lambda(-\log u)^\gamma}$,
  $u\in[0,1]$, for $\lambda>0$ and $\gamma\in(0,1]$, then
  \cite[Theorem~3.6]{morillas2005b}, see also \cite[Theorem~5]{Ressel2012},
  guarantees that $\Copula_{\pD}$ is a $d$-dimensional copula since, by the
  proof of \cite[Corollary~A.3]{HerrmannHofertNeslehova2024},
  $\pD_{\lambda,\gamma}$ is absolutely montone of any order. Furthermore, it is
  easily seen that
  $\pD_{\lambda,\gamma}(u^t) = \left(\pD(u)\right)^{t^{\gamma}}$ for any $t>0$.  With
  $\pD_{\lambda,\gamma}^{-1} = \pD_{\lambda^{-1/\gamma},\gamma^{-1}}$ and the fact that
  $\Copula$ is extreme value, we obtain for any $t>0$ that
  \begin{align*}
    \Copula_{\pD}(\bm{u}^t) &= \pD_{\lambda,\gamma} \left( \Copula \left( \pD_{\lambda^{-1/\gamma},\gamma^{-1}}(u_1^t),\dots,\pD_{\lambda^{-1/\gamma},\gamma^{-1}}(u_d^t)\right)\right)\\
    &= \pD_{\lambda,\gamma} \left( \Copula \left( \left(\pD_{\lambda^{-1/\gamma},\gamma^{-1}}(u_1)\right)^{t^{1/\gamma}},\dots,\left(\pD_{\lambda^{-1/\gamma},\gamma^{-1}}(u_d)\right)^{t^{1/\gamma}}\right)\right)\\
                                      &= \pD_{\lambda,\gamma} \left( \Copula^{t^{1/\gamma}} \left( \pD_{\lambda^{-1/\gamma},\gamma^{-1}}(u_1),\dots,\pD_{\lambda^{-1/\gamma},\gamma^{-1}}(u_d) \right)\right)\\
    &= \pD_{\lambda,\gamma}^t \left( \Copula \left( \pD_{\lambda,\gamma}^{-1}(u_1),\dots,\pD_{\lambda,\gamma}^{-1}(u_d) \right)\right)= \Copula_{\pD}^t(\bm{u}),\quad\bm{u}\in[0,1]^d,
  \end{align*}
  which demonstrates that $\Copula_{\pD}$ is max-stable and thus an EVC.

  Consider necessity. If, for any EVC $\Copula(\bm{u}) = \e^{-\STDF(-\log u_1,\dots,-\log u_d)}$ with stdf $\ell$,
  $\Copula_{\pD}$ is an EVC, there is a stdf
  $\STDF_{\pD}$ associated to $\Copula_{\pD}$ such
  that
  \begin{align}
    \Copula_{\pD}(\bm{u}) &= \e^{-\STDF_{\pD}(-\log u_1,\dots,-\log u_d)}.\label{eq_Morillas1}
  \end{align}
  By~\eqref{eq_MorillasDef}, $\Copula_{\pD}$ can alternatively be written in terms of $\pD$ and $\Copula$ via
  \begin{align}\label{eq_Morillas2}
    \Copula_{\pD}(\bm{u}) &= \pD\bigl(\e^{-\STDF(-\log \qD(u_1),\dots,-\log \qD(u_d))}\bigr).
  \end{align}
  Equating \eqref{eq_Morillas1} and \eqref{eq_Morillas2}, we must have
  \begin{align*}
    \STDF_{\pD}(\bm{x}) = -\log \pD\bigl(\e^{-\STDF\left(-\log \qD(\e^{-x_1}),\dots,-\log \qD(\e^{-x_d})\right)}\bigr),\quad\bm{x}\in[0,\infty)^d,
  \end{align*}
  that is $\STDF_{\pD}(\bm{x}) = g\bigl(\STDF(g^{-1}(x_1),\dots,g^{-1}(x_d))\bigr)=\STDF_g(\bm{x})$ for
  $g(x) = -\log \pD(\e^{-x})$, with corresponding inverse
  $g^{-1}(x) = -\log \qD(\e^{-x})$. As $\Copula$, and hence
  $\STDF$, were arbitrary, this means that $\STDF \mapsto \STDF_g$ transforms any
  stdf $\STDF$ into a stdf
  $\STDF_g$. Since $\pD$ is continuous and strictly increasing, the same is true
  for $g$. Hence we can apply Theorem~\ref{thm:power:distortions:of:stdfs} with
  $g(x) = -\log \pD(\e^{-x})$ and obtain that $g(x) = \lambda x^{\gamma}$ for
  some constants $\lambda>0$ and $\gamma\in(0,1]$.  From
  $-\log \pD(\e^{-x}) = \lambda x^{\gamma}$ it then follows that
  $\pD(u) = \e^{-\lambda(-\log u)^{\gamma}}$.
\end{proof}

\begin{proof}[Proof of Lemma~\ref{rate_function_convergence}]
  \mbox{}
  \begin{itemize}[labelwidth=\widthof{1) $\Rightarrow$ 2):}]
  \item[\ref{rate_function_convergence_I} $\Rightarrow$ \ref{rate_function_convergence_II}:]
    We will first show that
    \begin{align}\label{eq_rate_lemma}
      \abs{ C^{\rate_n}\bigl(u_1^{1/\rate_n},\ldots,u_d^{1/\rate_n}\bigr) - C^{\ceil{\rate_n}}\bigl(u_1^{1/\ceil{\rate_n}}, \dots, u_d^{1/\ceil{\rate_n}}\bigr) } \to 0,
    \end{align}
    where $\ceil{\rate_n}$ is the smallest integer at least as large as $\rate_n$.
    Given that $\ceil{\rate_n}$ is an integer-valued sequence diverging to
    infinity, the assumption $\Copula\in\MDA(\Copula^*)$ implies that
    \begin{align*}
      C^{\ceil{\rate_n}}\bigl(u_1^{1/\ceil{\rate_n}}, \dots,u_d^{1/\ceil{\rate_n}}\bigr) \to C^*(u_1, \dots u_d).
    \end{align*}
    By adding and subtracting
    $\Copula^{\ceil{\rate_n}}\bigl( u_1^{1/\rate_n}, \dots, u_d^{1/\rate_n}\bigr)$ inside
    the absolute value in \eqref{eq_rate_lemma}, the triangle inequality implies
    that
    \begin{align*}
      \bigl|  \Copula^{\rate_n}\bigl(u_1^{1/\rate_n}, \dots, u_d^{1/\rate_n}\bigr)-\Copula^{\ceil{\rate_n}}\bigl(u_1^{1/\ceil{\rate_n}}, \dots, u_d^{1/\ceil{\rate_n}}\bigr) \bigr| &\leq A_1 + A_2,
    \end{align*}
    where
    \begin{align*}
      A_1 &= \big| \Copula^{\ceil{\rate_n}}\bigl( u_1^{1/\rate_n}, \dots, u_d^{1/\rate_n} \bigr) - \Copula^{\rate_n}\bigl( u_1^{1/\rate_n}, \dots, u_d^{1/\rate_n} \bigr) \bigr|,\\
      A_2 &= \big| \Copula^{\ceil{\rate_n}}\bigl(u_1^{1/\ceil{\rate_n}}, \dots, u_d^{1/\ceil{\rate_n}}\bigr) - \Copula^{\ceil{\rate_n}}\bigl( u_1^{1/\rate_n}, \dots, u_d^{1/\rate_n} \bigr) \bigr|.
    \end{align*}
    We have left to show that each of $A_1$, $A_2$ converges to $0$ as
    $n \to \infty$.  For $A_1$, the Fr\'echet--Hoeffding bounds theorem (see
    \cite{frechet1935}, \cite{hoeffding1940}, \cite{frechet1951})
    and the fact that $u_j^{1/\rate_n} \to 1$ as $n \to \infty$
    imply that
    \begin{align*}
      A_1 & =  \big|\Copula^{\rate_n+\ceil{\rate_n} -\rate_n }\big( u_1^{1/\rate_n}, \dots, u_d^{1/\rate_n} \bigr)-\Copula^{\rate_n}\big( u_1^{1/\rate_n}, \dots, u_d^{1/\rate_n} \bigr)\bigr| \\
          & =  \Copula^{\rate_n}\big( u_1^{1/\rate_n}, \dots, u_d^{1/\rate_n} \bigr) \big| 1- \Copula^{\ceil{\rate_n} -\rate_n }\big( u_1^{1/\rate_n}, \dots, u_d^{1/\rate_n} \bigr) \bigr| \\
          & \leq \min (u_1,\dots, u_d) \big( 1-W\big(u_1^{1/\rate_n}, \dots, u_d^{1/\rate_n}\bigr)\bigr) \\
          & =\min (u_1,\dots, u_d) \biggl( 1- \max\biggl( 1-d+\sum_{j=1}^d u_j^{1/\rate_n}, 0 \biggr)\biggr) \to 0.
    \end{align*}
    For $A_2$, $\Copula^{\ceil{\rate_n}}\bigl( u_1^{1/\ceil{\rate_n}}, \dots,
    u_d^{1/\ceil{\rate_n}}\bigr)$ is a Morillas-type distorted $\Copula$
    and thus a copula. Therefore, by Lipschitz continuity of copulas, see \cite[Theorem~2.10.7, p.~46]{Nelsen2006}, we have
    \begin{align*}
      A_2 & = \big| \Copula^{\ceil{\rate_n}}\big(u_1^{1/\ceil{\rate_n}}, \dots, u_d^{\ceil{\rate_n}}\bigr) - \Copula^{\ceil{\rate_n}}\big( u_1^{\ceil{\rate_n}/(\rate_n \ceil{\rate_n})}, \dots, u_d^{\ceil{\rate_n}/(\rate_n \ceil{\rate_n})} \bigr) \bigr| \\ & \leq \sum_{i=1}^{d} \big| u_i - u_i^{\ceil{\rate_n}/\rate_n}\bigr| \; \to 0,
    \end{align*}
    where we used that $\lim_{n\to \infty} \ceil{\rate_n}/\rate_n= 1$ since the
    fraction is bounded by $1 \leq \ceil{\rate_n}/\rate_n \leq (\rate_n+1)/\rate_n$.
    Hence we have proven~\eqref{eq_rate_lemma} and can continue
    with the main argument. By the triangle inequality,
    \begin{align*}
      \bigl| C^{\rate_n}(u_1^{1/\rate_n},\ldots,u_d^{1/\rate_n}) - \Copula^*(u_1,\dots,u_d) \bigr| \leq A_3+A_4,
    \end{align*}
    where
    \begin{align*}
      A_3&=\abs{ C^{\rate_n}\bigl(u_1^{1/\rate_n},\ldots,u_d^{1/\rate_n}\bigr) - C^{\ceil{\rate_n}}\bigl(u_1^{1/\ceil{\rate_n}},\dots,u_d^{1/\ceil{\rate_n}}\bigr) },\\
      A_4&=\abs{ C^{\ceil{\rate_n}}\bigl(u_1^{1/\ceil{\rate_n}},\dots,u_d^{1/\ceil{\rate_n}}\bigr) - \Copula^*(u_1,\dots,u_d) }.
    \end{align*}
    We have left to show that each of $A_3$, $A_4$ converges to $0$ as
    $n \to \infty$. The result for $A_3$ follows from~\eqref{eq_rate_lemma}.
    And for $A_4$, for any $(u_1, \dots u_d) \in [0,1]^d$, we know by the assumption
    $\Copula\in\MDA(\Copula^*)$ that
    $\Copula^n(u_1^{1/n},\dots,u_d^{1/n}) \to \Copula^*(u_1,\dots,u_d)$;
    by \cite[Theorem~1.7.6]{DuranteSempi2016}, this convergence is even uniform.
    Hence $A_4\to 0$ for $n\to\infty$ and we are done.
  \item[\ref{rate_function_convergence_II} $\Rightarrow$ \ref{rate_function_convergence_III}:]
    Clearly the set of all diverging rate functions includes those asymptotically equivalent to $n$.
  \item[\ref{rate_function_convergence_III} $\Rightarrow$ \ref{rate_function_convergence_I}:]
    Fix $\varepsilon > 0$ and define
    $\pD(u)=u^{\theta}$, where $\theta = \lim_{n\to\infty} n/\rate_n$.  Furthermore,
    let $\Copula_n(\bfu) := \Copula^n(u_1^{1/n},\dots,u_d^{1/n})$ and
    $\Copula_{\rate_n}(\bfu) :=
    \Copula^{\rate_n}(u_1^{1/\rate_n},\dots,u_d^{1/\rate_n})$.  Given that
    $\Copula^*$ is an EVC and $\pD(u)=u^{\theta}$ is a power distortion we have
    $\Copula^*_{\pD}(\bfu) = \Copula^*(\bfu)$ and hence the decomposition
    \begin{align*}
      \abs{\Copula_n(\bfu) - \Copula^*(\bfu)} &= \abs{
                                                \Copula_{n}(\bfu) - \Copula^*_{\pD}(\bfu)
                                                }
                                                \leq \abs{\Copula_{n}(\bfu) - \left(\Copula_{\rate_n}\right)_{\pD}(\bfu)
                                                } + \abs{\Copula^*_{\pD}(\bfu) - \left(\Copula_{\rate_n}\right)_{\pD}(\bfu)}.
    \end{align*}
    Concerning the first term on the right hand side we have by our assumption
    that $n/\rate_n\to\theta$ and hence that
    $\widetilde{\pD}_n(u) = u^{\theta \rate_n/n} \to u = \id(u)$, where this
    convergence is uniform since $\widetilde{\pD}_n$ is non-decreasing and the identity
    is continuous; see \cite[p.~1]{Resnick1987}.  There therefore exists a
    $N_1\in\IN$ such that, for all $n\geq N_1$, we have
    $\Vert \widetilde{\pD}_n - \id\Vert_{\infty} \leq \varepsilon/(2(d+1))$.
    Utilizing the alternative expression
    \begin{align*}
      (\Copula_{\rate_n})_{\pD}(\bfu) = \Copula^{n \theta \rate_n/n}\bigl(u_1^{1/(n \theta \rate_n/n)},\dots,u_d^{1/(n\theta\rate_n/n)}\bigr) = (\Copula_{n})_{\widetilde{\pD}_n}(\bfu),
    \end{align*}
    and that $\Copula_n$ is a copula for each $n\in\IN$, Proposition~\ref{prop_distorted_measure_difference}~\ref{prop_max_difference_I} now guarantees for $n\geq N_1$ that
    \begin{align*}
      \abs{\Copula_{n}(\bfu) - (\Copula_{\rate_n})_{\pD}(\bfu)} = \abs{\Copula_{n}(\bfu) - (\Copula_{n})_{\widetilde{\pD}_n}(\bfu)} \leq \varepsilon/2,
    \end{align*}
    where it is not necessary that $(\Copula_{n})_{\widetilde{\pD}_n}$ is a $d$-copula.

    Concerning the second term on the right hand side, first note that for
    $\theta \geq 1$ the power function $\pD$ is Lipschitz continuous on $[0,1]$
    with Lipschitz constant $c=\theta$, while for $\theta \in (0,1)$ the
    function is Hölder continuous with constant $c=1$ and exponent $\theta$.
    Therefore,
    \begin{align*}
      \abs{\left(\Copula_{\rate_n}\right)_{\pD}(\bfu)-\Copula^*_{\pD}(\bfu)} &= \abs{\pD\bigl(\Copula_{\rate_n}(\qD(\bfu))\bigr)-\pD\bigl(\Copula^*(\qD(\bfu))\bigr)}\\
                                                                             &\leq \begin{cases}
                                                                               \abs{\Copula_{\rate_n}(\qD(\bfu))-\Copula^*(\qD(\bfu))}^{\theta}, & 0 < \theta < 1; \\
                                                                               \theta\abs{\Copula_{\rate_n}(\qD(\bfu))-\Copula^*(\qD(\bfu))}, & \theta > 1.
      \end{cases}
    \end{align*}
    The assumed pointwise convergence of $\Copula_{\rate_n}$ to $\Copula^*$ now
    allows to pick $N_2\in\IN$ such that, for all $n\geq N_2$, one has
    $\abs{\Copula_{\rate_n}(\qD(\bfu))-\Copula^*(\qD(\bfu))} \leq \varepsilon'$,
    where we set $\varepsilon' = (\varepsilon/2)^{1/\theta}$ in case
    $\theta\in(0,1)$ and $\varepsilon' = \varepsilon/(2\theta)$ in case
    $\theta \geq 1$.  For $n \geq N_2$, we thus have that
    $\abs{\left(\Copula_{\rate_n}\right)_{\pD}(\bfu)-\Copula^*_{\pD}(\bfu)}\leq
    \varepsilon/2$, which establishes that for all $n\geq \max(N_1,N_2)$, we have
    $\abs{\Copula_n(\bfu) - \Copula^*(\bfu)} \leq \varepsilon/2+\varepsilon/2 =
    \varepsilon$ and hence that $\Copula_n$ converges pointwise to $\Copula^*$.
    By Definition~\ref{def_extreme_value_copula}, $\Copula\in\MDA(\Copula^*)$.
  \item[\ref{rate_function_convergence_IV} $\Leftrightarrow$ \ref{rate_function_convergence_I}:]
    See \cite[Théorème~(3.1), p.~14]{deheuvels1978}. \qedhere
  \end{itemize}
\end{proof}

\begin{proof}[Proof of Theorem~\ref{theorem_max_domain_of_attraction_Morillas}]
  \mbox{}
  \begin{enumerate}
  \item With
    $\pD\in\MDA(\EW_{\gamma})$ and $\support_+(\pD) = 1$ we have from
    \cite[Theorem~3.1~(e), p.~636]{MohanRavi1993} in combination with
    \cite[Theorem~2.2, p.~634]{MohanRavi1993} for $u\in[0,1]$ that
    \begin{align*}
      \lim_{n\to\infty}\pD^n\left(u^{1/\rate_n}\right) = \e^{-(-\log u)^{\gamma}},
    \end{align*}
    where the associated rate is given as
    $\rate_n := \left(-\log\qD(1-1/n)\right)^{-1}$.
    Furthermore, $\rate_n\to\infty$ for $n\to\infty$.
    Now, if the Morillas-type distorted
    $\Copula_{\pD}$ is a $d$-copula, consider the relevant sequence
    \begin{align*}
      \Copula_{\pD}^n(u_1^{1/n},\dots,u_d^{1/n}) = \pD^n\left(\Copula\left(\qD\left( u_1^{1/n} \right), \ldots, \qD\left( u_d^{1/n} \right)\right)\right).
    \end{align*}
    Using the powers $\rate_n$ and $1/\rate_n$, we can rewrite this transformation as
    \begin{align*}
      \Copula_{\pD}^n(u_1^{1/n},\dots,u_d^{1/n}) =
      \pD^n\left( \left(\Copula^{\rate_n}\left( \left( \left(\qD\left( u_1^{1/n} \right)\right)^{\rate_n}\right)^{1/\rate_n}, \ldots, \left(\left(\qD\left( u_d^{1/n} \right)\right)^{\rate_n}\right)^{1/\rate_n} \right) \right)^{1/\rate_n} \right).
    \end{align*}
    Let
    $\Copula_{\rate_n}(u_1,\dots,u_d) := \Copula^{\rate_n}(u_1^{1/\rate_n}, \ldots,
    u_d^{1/\rate_n})$ and $\pD_n(u) := (\pD(u^{1/\rate_n}))^n$. Then
    \begin{align*}
      \Copula_{\pD}^n(u_1^{1/n},\dots,u_d^{1/n}) = \pD_n\left(\Copula_{\rate_n}\left(\pD_n^{-1}(u_1), \ldots, \pD_n^{-1}(u_d)\right)\right).
    \end{align*}
    As argued above, the pointwise convergence $\pD_n(u) \to \pD^*(u)$ holds and
    that $\pD_n$ is non-decreasing.
    Furthermore, \cite[Proposition~0.1, p.~5]{Resnick1987}
    guarantees that $\qD_n(u) \to \qDs(u)$.  For
    $g_n \colon [0,1]^d\to[0,1]^d$ defined by
    $g_n(u_1,\dots,u_d):=(\pD_n^{-1}(u_1),\dots,\pD_n^{-1}(u_d))$ we then have the
    pointwise convergence $g_n(u_1,\dots,u_d) \to g(u_1,\dots,u_d)$, where
    $g(u_1,\dots,u_d) = (\qDs(u_1),\dots,\qDs(u_d))$. By
    Lemma~\ref{rate_function_convergence}~\ref{rate_function_convergence_II}, we have that
    $\Copula_{\rate_n}(u_1,\dots,u_d)\to\Copula^*(u_1,\dots,u_d)$, and this convergence is
    even uniform; see \cite[Theorem~1.7.6, p.~28]{DuranteSempi2016}. By
    Lemma~\ref{composition_convergence}, we thus have the pointwise
    convergence
    \begin{align*}
      \Copula_{\rate_n}\bigl(\pD_n^{-1}(u_1), \ldots, \pD_n^{-1}(u_d)\bigr) = (\Copula_{\rate_n}\circ g_n)(u_1,\dots,u_d) \to \Copula^*\bigl(\qDs(u_1), \ldots, \qDs(u_d)\bigr).
    \end{align*}
    Noting further that $\pD_n$ is non-decreasing and that the limit function $\pD^*$
    is continuous, the convergence $\pD_n(u)\to\pD^*(u)$ is in fact
    uniform; see \cite[p.~1]{Resnick1987}. Again applying Lemma~\ref{composition_convergence} thus yields
    \begin{align*}
      \pD_n\left(\Copula_{\rate_n}\left(\pD_n^{-1}(u_1), \ldots, \pD_n^{-1}(u_d)\right)\right) \to \pD^*\left(\Copula^*\left(\qDs(u_1), \ldots, \qDs(u_d)\right)\right),
    \end{align*}
    that is we have the pointwise convergence
    $\Copula_{\pD}^n(u_1^{1/n},\dots,u_d^{1/n}) \to
    \Copula^*_{\pD^*}(u_1,\dots,u_d)$.  Given that
    $\Copula_{\pD}^n(u_1^{1/n},\dots,u_d^{1/n})$ is a Morillas-type distorted
    $\Copula_{\pD}$ based on power distortions, it is a copula for $n\geq d$ since
    the power distortion $u^n$ is in this case $d$-monotone.  Consequently,
    \cite[Theorem~1.7.5, p.~28]{DuranteSempi2016} implies that the limit
    $\Copula^*_{\pD^*}$ is also a copula.  Thus, $\Copula^*_{\pD^*}$ is
    identified as an EVC, see Definition~\ref{def_extreme_value_copula}, and
    $\Copula_{\pD}\in\MDA(\Copula^*_{\pD^*})$. %
  \item $\pD\in\AMDCO{d}$
    implies that the conditions for~\ref{item_distortion_theorem_1} are satisfied, where we specifically have $\support_+(\pD)=1$ since we otherwise contradict the convexity of $\pD$.
    To see that
      $\pD^*(u) = \e^{-(-\log u)^{\gamma}}$ with the additional restriction
      $\gamma \in (0,1]$, a first step is to notice that the results so far are
      invariant under the dependence structure in the following sense: For a fixed
      $\pD\in\AMDCO{d}$ such that
      $\lim_{n\to\infty}\pD^{n}(u^{1/{\rate_n}}) = \pD^*(u)$, we have
      $\Copula_{\pD}\in\MDA(\Copula^*_{\pD^*})$ for any
      $\Copula\in\MDA(\Copula^*)$.  Furthermore, via
      Definition~\ref{def_extreme_value_copula}, we have
      $\Copula^*\in\SetEVTCopula_d$ if and only if there exists at least one
      copula $\Copula$ such that $\Copula\in\MDA(\Copula^*)$.  As such, we
      have established that
      $\Copula\in\MDA(\Copula^*)$ implies that $\Copula_{\pD}\in\MDA(\Copula^*_{\pD^*})$ for any EVC
      $\Copula^*\in\SetEVTCopula_d$.  This means that
      $\pD^*$ transforms \emph{any} $d$-dimensional EVC into a $d$-dimensional
      EVC, and therefore Corollary~\ref{cor_EVC_to_EVC} implies that
      $\pD^*$ must be of the form $\pD^*(u) = \exp(-\lambda(-\log u)^{\gamma})$
      for some $\lambda > 0$ and $\gamma \in (0,1]$. Considering the previously derived form of $\pD^*$ this clearly implies $\lambda=1$ and confirms $\pD^*(u) = \e^{-(-\log u)^{\gamma}}$ with $\gamma \in (0,1]$,

    Finally, the claim that $\Copula^*_{\pD^*} = \Copula^*$ when $\gamma=1$ follows trivially since in this case $\pD^*(u)=u$.
  \end{enumerate}
\end{proof}

\begin{proof}[Proof of Corollary~\ref{cor_d_abs_monotone_not_Gumbel}]
  By assumption we have for $u\in(1-\varepsilon,1)$ that $1-\pD(u) \geq 1-u$, so $\log(1-\pD(u)) \geq \log(1-u)$ and thus
  \begin{align*}
    \frac{\log(1-\pD(u))}{\log(1-u)} \leq 1,\quad u\in(1-\varepsilon,1).
  \end{align*}
  \cite[Corollary~2.5.3]{deHaan1970} now directly implies the statement.
  The particular case of $\pD\in\AMDCO{d}$ follows since the implied convexity
  forces any such $\pD$ to satisfy $\pD(u) \leq u$ for $u\in[0,1]$.
\end{proof}

\begin{proof}[Proof of Proposition~\ref{prop_D_in_Weibull_MDA}]
  To verify the Weibull MDA condition, see
  \cite[Proposition~1.13, p.~59]{Resnick1987}, let $t>0$ be fixed and use
  l'Hôpital's rule in conjunction with $0 < \dD(1) < \infty$ to get
  \begin{align*}
    \lim_{x \to \infty}
    \frac{1 - \pD\bigl(1-1/(tx)\bigr)}{1 - \pD\bigl(1-1/x\bigr)}
    =
    \lim_{x \to \infty}
    \frac{x^2 \dD\bigl(1-1/(tx)\bigr)}{tx^2 \dD\bigl(1-1/x\bigr)}
    =
    \frac{1}{t} \frac{\dD(1)}{\dD(1)} = t^{-1},
  \end{align*}
  which implies that $D \in \MDA(\Psi_1)$.
  The statement for $\pD\in\AMDCO{d}$ follows since convexity of $\pD$ implies
  that the derivative $\dD$ of $\pD$ is non-decreasing, that there is
  necessarily a point $u_0 \in (0,1)$ such that $\dD(u_0) > 0$ (otherwise $\pD$
  would be constant equal to zero) and therefore
  $\dD(1) = \lim_{u\nearrow 1}\dD(u)>0$, while $\dD(1) < \infty$ by
  assumption.
\end{proof}

\begin{proof}[Proof of Corollary~\ref{corollary_abs_monotone_domain_of_attraction}]
  As a power series, $\pD $ is differentiable on $(0,1)$ with derivative
  $\dD(u) = \sum_{n = 1}^\infty n \Prob(N = n) u^{n-1} = \sum_{n = 0}^\infty
  (n+1) \Prob(N = n+1) u^{n}$.  Given that
  $0 < \E[N] = \sum_{n = 0}^\infty (n+1) \Prob(N = n+1) < \infty$ we can apply
  Abel's theorem, see \cite[8.2~Theorem, p.~174]{Rudin1976}, to get
  $\lim_{u \nearrow 1} \dD(u) = \dD(1) = \E[N]$, that is $\dD$ is continuous at $1$
  and also non-zero close to $1$. The result now follows from
  Proposition~\ref{prop_D_in_Weibull_MDA}.
\end{proof}

\begin{proof}[Proof of Lemma~\ref{lemma_Archimedean_D_regular_variation}]
  \mbox{}
  \begin{enumerate}
  \item Consider the representation
  \begin{align*}
    1-\pD_{\gen}(1-1/x)
    &= 1 - \gen\left(-\log\left(1-1/x\right)\right)
      =\left(\left(1 - \gen\left(1/\cdot\right)\right)\circ \left(-\log\left(1-1/\cdot\right)\right)^{-1}\right)(x).
  \end{align*}
  It is easily verified that
  $\left(-\log\left(1-1/\cdot\right)\right)^{-1}\in\RV_{1}$ and hence, as
  composition of regularly varying functions (see \cite[Proposition~1.5.7~(ii),
  p.~26]{BinghamGoldieTeugels1987}), this implies
  $1-\pD_{\gen}(1-1/\cdot)\in\RV_{-\rho}$.
  \item Consider $\gen_{\pD}(x) := \pD(\e^{-x})$ and write
  \begin{align*}
    1-\gen_{\pD}(1/x)
    &= 1 - \pD\left(1 - \frac{1}{1/\left(1-\e^{-1/x}\right)}\right)
      = \left(1 - \pD\left(1 - \frac{1}{\cdot}\right)\right)\circ(1-\e^{-1/\cdot})^{-1} (x),
  \end{align*}
  where one can verify that $(1-\e^{-1/\cdot})^{-1}\in\RV_1$.
  By assumption, $1 - \pD(1 - \frac{1}{\cdot})\in\RV{-\gamma}$. Composition of regularly varying functions (see \cite[Proposition~1.5.7~(ii), p.~26]{BinghamGoldieTeugels1987}) thus shows that $1-\gen_{\pD}(1/\cdot)\in\RV_{-\gamma}$. \qedhere
  \end{enumerate}
\end{proof}

\begin{proof}[Proof of Corollary~\ref{cor_Archimedean_max_domain}]
  Based on the assumptions, Lemma~\ref{lemma_Archimedean_D_regular_variation} implies that
  $\pD_{\gen}=\gen\circ(-\log)\in\MDA(\Psi_{\rho})$.
Utilizing the rate function $\rate_n$ associated to $\pD_{\gen}$ as given in \eqref{eq_rate} we then have $\pD_{\gen}^n(u^{1/\rate_n})\to \e^{-(-\log u)^{\rho}}$. As
  discussed in \cite[p.~170]{morillas2005b}, we have
  $\Copula_{\gen} = \Pi_{\pD_{\gen}}$, which implies via
  Theorem~\ref{theorem_max_domain_of_attraction_Morillas}~\ref{item_distortion_theorem_1}
  that $\Copula_{\gen}\in\MDA(\Pi_{\e^{-(-\log \cdot)^{\rho}}})$.  A direct
  calculation then shows that $\Pi_{\e^{-(-\log \cdot)^{\rho}}} = \GHCopula_{1/\rho}$.
\end{proof}

\end{document}

%
%
%
%

%%% Local Variables:
%%% mode: LaTeX
%%% TeX-master: t
%%% End: